\newcolumntype{.}{D{.}{.}{-1}}
\newcolumntype{d}[1]{D{.}{.}{#1}}
\newcolumntype{C}{>{$}c<{$}}
\theoremstyle{plain}
\newtheorem{theorem}{Theorem}
\newtheorem{proposition}{Proposition}
\newtheorem{assumption}{Assumption}
\newtheorem{corollary}{Corollary}
\newtheorem{lemma}{Lemma}
\newcommand{\qed}{\hfill \ensuremath{\Box}}
\newcommand{\mc}[1]{\mathbb{#1}}
\renewcommand{\d}{\mathrm{d}}
\DeclareMathOperator\Cov{Cov}
\DeclareMathOperator\Corr{Corr}
\DeclareMathOperator*{\argmin}{argmin}
\newenvironment{proof}{\vspace{1ex}\noindent{\bf Proof}\hspace{0.5em}}
	{\hfill\qed\vspace{1ex}}
\tikzset{auto,node distance =1 cm and 1 cm,semithick,
	state/.style ={circle, draw, minimum width = 0.7 cm},
	point/.style = {circle, draw, inner sep=0.04cm,fill,node contents={}},
	bidirected/.style={Latex-Latex,dashed},
	el/.style = {inner sep=2pt, align=left, sloped}
}
\newcommand{\blind}{0}
\begin{document}

\newcommand\ud{\mathrm{d}}
\newcommand\dist{\buildrel\rm d\over\sim}
\newcommand\ind{\stackrel{\rm indep.}{\sim}}
\newcommand\iid{\stackrel{\rm i.i.d.}{\sim}}
\newcommand\logit{{\rm logit}}
\renewcommand\r{\right}
\renewcommand\l{\left}
\newcommand\cO{\mathcal{O}}
\newcommand\cY{\mathcal{Y}}
\newcommand\cZ{\mathcal{Z}}
\newcommand\E{\mathbb{E}}
\renewcommand\P{\mathbb{P}}
\newcommand\cL{\mathcal{L}}
\newcommand\V{\mathbb{V}}
\newcommand\cA{\mathcal{A}}
\newcommand\cB{\mathcal{B}}
\newcommand\cD{\mathcal{D}}
\newcommand\cE{\mathcal{E}}
\newcommand\cM{\mathcal{M}}
\newcommand\cU{\mathcal{U}}
\newcommand\cN{\mathcal{N}}
\newcommand\cT{\mathcal{T}}
\newcommand\cX{\mathcal{X}}
\newcommand\bA{\mathbf{A}}
\newcommand\bH{\bm{H}}
\newcommand\bB{\bm{B}}
\newcommand\bP{\bm{P}}
\newcommand\bQ{\bm{Q}}
\newcommand\bU{\bm{U}}
\newcommand\bD{\bm{D}}
\newcommand\bS{\bm{S}}
\newcommand\bx{\bm{x}}
\newcommand\bX{\bm{X}}
\newcommand\bV{\bm{V}}
\newcommand\bW{\bm{W}}
\newcommand\bM{\bm{M}}
\newcommand\bZ{\bm{Z}}
\newcommand\bY{\bm{Y}}
\newcommand\bt{\bm{t}}
\newcommand\bbeta{\bm{\beta}}
\newcommand\bpi{\bm{\pi}}
\newcommand\bdelta{\bm{\delta}}
\newcommand\bgamma{\bm{\gamma}}
\newcommand\balpha{\bm{\alpha}}
\newcommand\bone{\mathbf{1}}
\newcommand\bzero{\mathbf{0}}
\newcommand\tomega{\tilde\omega}
\newcommand{\argmax}{\operatornamewithlimits{argmax}}

\newcommand{\R}{\textsf{R}}

\newcommand\spacingset[1]{\renewcommand{\baselinestretch}%
{#1}\small\normalsize}

\spacingset{1}

\newcommand{\tit}{\bf Statistical Performance Guarantee for Subgroup
  Identification with Generic Machine Learning}


\if1\blind
\title{\tit}
\fi

\if0\blind

{\title{\tit\thanks{The proposed methodology is implemented through an
      open-source \R\ package, {\sf evalITR}, which is freely
      available for download at the Comprehensive R Archive Network
      (CRAN; \url{https://CRAN.R-project.org/package=evalITR}).  We
      thank an anonymous reviewer from the Alexander and Diviya Magaro
      Peer Pre-Review Program at Harvard's Institute for Quantitative
      Social Science for helpful comments.}}

  \author{
  	Michael Lingzhi Li\thanks{Technology and Operations Management, Harvard Business School, Boston, MA
  		02163. Email: \href{mailto:mili@hbs.edu}{mili@hbs.edu}, URL:
  		\href{https://imai.fas.harvard.edu}{https://michaellz.com}}\hspace{.75in}
  Kosuke Imai\thanks{Department of Government and
      Department of Statistics, Harvard University, Cambridge, MA
      02138. Phone: 617--384--6778, Email:
      \href{mailto:Imai@Harvard.Edu}{Imai@Harvard.Edu}, URL:
      \href{https://imai.fas.harvard.edu}{https://imai.fas.harvard.edu}}
}

  \date{\today
  }

  \fi
\maketitle

\pdfbookmark[1]{Title Page}{Title Page}

\thispagestyle{empty}
\setcounter{page}{0}

\begin{abstract}
  Across a wide array of disciplines, many researchers use machine
  learning (ML) algorithms to identify a subgroup of individuals who
  are likely to benefit from a treatment the most (``exceptional
  responders'') or those who are harmed by it.  A common approach to
  this subgroup identification problem consists of two steps.  First,
  researchers estimate the conditional average treatment effect (CATE)
  using an ML algorithm.  Next, they use the estimated CATE to select
  those individuals who are predicted to be most affected by the
  treatment, either positively or negatively.  Unfortunately, CATE
  estimates are often biased and noisy.  In addition, utilizing the
  same data to both identify a subgroup and estimate its group average
  treatment effect results in a multiple testing problem.  To address
  these challenges, we develop uniform confidence bands for estimation
  of the group average treatment effect sorted by generic ML algorithm
  (GATES).  Using these uniform confidence bands, researchers can
  identify, with a statistical guarantee, a subgroup whose GATES
  exceeds a certain effect size, regardless of how this effect size is
  chosen.  The validity of the proposed methodology depends solely on
  randomization of treatment and random sampling of
  units. Importantly, our method does not require modeling assumptions
  and avoids a computationally intensive resampling procedure.  A
  simulation study shows that the proposed uniform confidence bands
  are reasonably informative and have an appropriate empirical
  coverage even when the sample size is as small as 100.  We analyze a
  clinical trial of late-stage prostate cancer and find a relatively
  large proportion of exceptional responders.

  \bigskip
  \noindent {\bf Key Words:} causal inference, exceptional responders,
 heterogeneous treatment effects, treatment prioritization, uniform confidence bands

\end{abstract}

\clearpage
\spacingset{1.5}
\section{Introduction}
\label{sec:intro}

In a diverse set of fields, machine learning (ML) algorithms are
frequently used to identify a subgroup of individuals who are likely
to benefit from a treatment the most (so called ``exceptional
responders'') or those who are negatively affected by it.  In clinical
studies, for example, identifying those who experience surprisingly
long-lasting positive health outcomes when a majority of patients do
not, is useful for understanding why a treatment works and determining
which patients should receive it
\citep[e.g.,][]{exceptional2015,10.1093/jnci/djaa061}.  Similarly, it
is critical to identify a subpopulation of patients who are harmed by
a commonly used treatment. In public policy, finding those individuals
who are likely to be helped by a treatment is a key methodological
step for improving targeting strategies
\citep[e.g.,][]{imai:ratk:13,cher:etal:19}.

A widely adopted strategy for subgroup identification proceeds in two
steps. First, an ML algorithm is used to estimate the conditional
average treatment effect (CATE) or its proxy such as biomarker
response, conditional on pre-treatment covariates. Second, individuals
are ranked by the estimated score, and those in the top (or bottom) of
the distribution are selected as exceptional responders (or those who
are most negatively affected). For example, one might define the
exceptional responders as those in the top quantile of the score and
report the average treatment effect within this group.  Such a
two-step procedure inevitably compresses rich heterogeneity into a
one-dimensional score and may therefore be suboptimal in identifying
exceptional responders. Nevertheless, it is commonly used in clinical
and policy settings due to its interpretability and practical utility
\citep[e.g.,][]{sussman2013using,razonable2022clinical,imai2023experimental}. In
the literature, the average treatment effect among individuals grouped
by the prioritization score is referred to as the Grouped Average
Treatment Effect or GATES \citep{cher:etal:19}.

As a motivating study, we consider a well-known randomized clinical
trial described by \cite{byar1980choice} concerning the effect of
administering a type of synthetic estrogen, called diethylstilbestrol,
to prolong total survival among late-stage prostate cancer patients.
The trial was overall unsuccessful as the treatment group with the
highest dosage, 5.0mg, registered an average of 0.3 month
\emph{decrease} in total survival compared to the placebo
group. However, given that the biological pathway of estrogen in
reducing prostate cancer growth is well-known, clinical researchers
believed that at least a certain subgroup of patients could have
benefited from the treatment. Therefore, many researchers have
attempted to find such a subgroup that exhibits a large GATES, by
first applying ML, optimization, or heuristic methods to create a
treatment prioritization score and then optimize over the threshold of
the score with a certain objective function
\cite[e.g.][]{bonetti2000graphical, rosenkranz2016exploratory,
  ballarini2018subgroup,bertsimas2019identifying}.

Despite its intuitive appeal and frequent use in practice, this
approach faces several methodological challenges.  First, there exists
a considerable degree of statistical uncertainty in estimating the
GATES for the selected subgroup. The treatment prioritization scores
are often based on noisy and imperfect estimates of the CATE or other
target quantities.  Second, the subgroup of exceptional responders (or
those who are negatively affected by the treatment) is usually small,
and to make the matter worse, the size of experimental data used to
identify such individuals is often not large either.  Third,
accounting for this statistical uncertainty is difficult because
utilizing the same data to both evaluate the GATES and select the
group of potential exceptional responders suffers from a multiple
testing problem. Although this problem can be alleviated by setting
aside a large holdout set for evaluation, such an approach is often
not a feasible solution given the limited sample size of randomized
control trials.

In this paper, we develop {\it uniform} confidence bands (intervals)
for estimating the GATES among a subgroup of individuals defined by a
quantile of treatment prioritization score.  This group represents the
subgroup of individuals who are predicted to benefit most from (or be
harmed most by) the treatment.  The proposed uniform confidence bands
are valid regardless of the way in which the quantile value is chosen,
thereby providing a statistical guarantee for a generic objective
function to be optimized when selecting the group of individuals who
are either most positively or most negatively affected by the
treatment.  For example, our methodology enables researchers to choose
the quantile cutoff such that the GATES among the selected exceptional
responders exceeds a certain threshold with a pre-specified
probability.

The proposed methodology is based on Neyman's repeated sampling
framework \citep{neym:23}, assuming only randomization of treatment
and random sampling of individual units without requiring any modeling
assumption or repeated sampling methods.  Thus, our methodology is
applicable to any generic treatment prioritization score, including
the ones that may not have any performance guarantee.  Although we do
not pursue in this paper, additional assumptions, such as monotonicity
of treatment effect in treatment prioritization score, can be imposed
to further obtain tighter confidence bands at the expense of
generality \citep[e.g.,][]{chernozhukov2010quantile}.  Our simulation
study shows that the proposed uniform confidence bands are reasonably
informative and provide an adequate empirical coverage for sample
sizes that are as small as $n=100$ (Section~\ref{sec:synthetic}).

To develop our methodology (Section~\ref{sec:sample_splitting}), we
first show that the problem of finding uniform confidence bands is
equivalent to deriving the distribution of a particular functional of
induced order statistics.  This functional is also called a
concomitant of order statistics in the literature, and has been the
subject of independent interest \citep[e.g.,][]{david1974asymptotic,
  yang1977general, sen1976note}.  In particular,
\cite{bhattacharya1974convergence} and \cite{davydov2000functional}
explore the asymptotic properties of general functionals of induced
order statistics.

While the resulting asymptotic distributions are in general difficult
to characterize, we show that the GATES are non-negatively correlated
with one another regardless of how the treatment prioritization score
is constructed.  This allows us to uniformly bound the GATES by a
Wiener process through an application of Donsker's theorem.  Although
the resulting uniform confidence bands are not unique, we propose a
specific choice that yields a narrow band for a reasonable range of
quantile values. We further extend our theoretical results beyond the
GATES to other causal quantities of interest, widening the
applicability of the proposed methodology.

Finally, we apply the proposed methodology to the aforementioned
randomized clinical trial (Section~\ref{sec:realworld}). We find that,
with 95\% of confidence, modern ML algorithms are able to identify a
subset of patients whose survival would be increased by the treatment
up to 27 months on average.  This contrasts with a previous study
which found, albeit without a statistical guarantee, a group of
exceptional responders whose survival is increased by only up to 18
months on average \citep[e.g.,][]{bertsimas2019identifying}. This
empirical application demonstrates that our methodology can exploit
the power of modern ML algorithms while providing a statistical
performance guarantee.

\paragraph{Related Literature.} The proposed methodology is related to
several strands of the existing literature in computer science,
optimization, and statistics.  First, there exists a large literature
whose goal is the development of methods to identify subgroups with
large treatment effects using ML algorithms
\citep[e.g.][]{kehl2006responder, su2009subgroup, foster2011subgroup,
  hardin2013understanding, luedtke2016optimal,
  bertsimas2019identifying, cho2021quantile, bonvini2023minimax}.
However, these existing studies either provides no statistical
performance guarantee or assumes that the threshold used for subgroup
selection is pre-specified. In contrast, we focus on developing
uniform confidence bands that are valid regardless of how such a
threshold is chosen.

Another closely related literature is the one about maximally selected
test statistics
\citep[e.g.,][]{miller1982maximally,hothorn2002maximally}.  Cast into
the setting of our paper, these authors develop a statistical test of
the null hypothesis that the CATE is independent of the treatment
prioritization score.  While this mathematical formulation is similar,
constructing uniform confidence bands must allow for arbitrary
dependence.  As discussed in Section~\ref{subsec:inference}, this
requires us to use a novel proof strategy that is different from the
ones used in the literature.


Third, \citet{cher:etal:19} considers the estimation of the GATES
based on the data from randomized experiments.  Although they use
resampling methods to provide statistical inference, the resulting
confidence bands are point-wise and are not uniformly valid across the
quantile values.  While \citet{imai:li:22} improves upon this
methodology by using Neyman's repeated sampling framework (thereby,
avoiding computationally intensive resampling methods), the authors
also only provide point-wise confidence bands.  Our work fills this
methodological gap by developing uniform inference for the GATES
estimation.

Finally, there exist a small but fast growing body of literature that
develops statistical methods for experimentally evaluating the
performance of data-driven individualized treatment rules
(ITRs). \cite{imai2021experimental} use Neyman's repeated sampling
framework to experimentally evaluate the efficacy of ITRs.  Building
on the up-lift modeling literature
\citep[e.g.,][]{radcliffe2007using}, the authors propose the
Population Average Prescriptive Effect (PAPE) curve as a performance
measure of an ITR relative to the non-individualized treatment rule
that treats the same proportion of randomly selected individuals.
This is a causal analogue of the Receiver Operating Characteristic
curve.  Like this paper, \cite{imai2021experimental} provide an
assumption-free statistical inference methodology to evaluate the
performance of ITRs.  Unlike our proposed methodology, however, their
confidence bands are point-wise and do not yield uniform inference for
the entire curve.

Similarly, \cite{yadlowsky2021evaluating} considers the performance
evaluation of ITRs, using a related quantity called the Targeted
Operating Characteristic (TOC) curve.  The TOC curve is similar to the
prescriptive effect curve but compares the performance of an ITR with
the treatment rule that treats everyone.  While the authors show how
to quantify statistical uncertainty for the area under the TOC curve,
their inferential methods are point-wise and are not uniformly valid
for the entire TOC curve.  Unlike these two existing works, the
proposed methodology yields uniform statistical inference, enabling
the selection of exceptional responders and other subgroups with
uniform statistical guarantees.  As shown below, the proposed uniform
confidence bands can be applied to the PAPE and TOC curves as well.

\section{The Proposed Methodology}
\label{sec:sample_splitting}

\subsection{Setup and Assumptions}
\label{subsec:randexp}

Suppose that we have a total of $n$ units independently sampled from a
(super)population $\mathcal{Q}$.  We consider a classical randomized
experiment where a binary treatment $T_i \in \{0, 1\}$ is completely
randomized with a total of $n_1$ units assigned to the treatment
condition and the remaining $n_0=n-n_1$ units assigned to the control
condition.  For each unit $i=1,\ldots,n$, we observe the outcome
$Y_i \in \cY$ as well as a vector of pre-treatment covariates,
$\bX_i \in \cX$.  The observed outcome for unit $i$ is given by
$Y_i =T_i Y_i(1) + (1-T_i)Y_i(0)$ where $Y_i(t)$ represents the
potential outcome under the treatment condition $T_i = t$.  This
implies the stable unit treatment value assumption or SUTVA
\citep{rubi:90}.  In particular, the outcome of one unit is assumed to
be not affected by the treatment condition of other units.  We
summarize our setup here:
\begin{assumption}[Random Sampling of Units]
  \spacingset{1} \label{asm:randomsample} We assume independently and
  identically distributed observations,
  $$\{\bX_i, Y_i(1), Y_i(0)\}_{i=1}^n \ \iid \ \mathcal{Q}$$
\end{assumption}
\begin{assumption}[Random Assignment of Treatment]
  \spacingset{1} \label{asm:comrand} We assume complete
  randomization of treatment assignment,
  $$\Pr(T_i = 1 \mid \{\bX_i, Y_{i^\prime}(1), Y_{i^\prime}(0)\}_{i=1}^n) = \frac{n_1}{n} \quad \text{for all }
  i=1,\ldots,n, \ \text{and } n_1 = \sum_{i=1}^n T_i$$
\end{assumption}

Without the loss of generality, we assume that a positive treatment
effect is desirable.  We consider a scoring rule $f(\cdot)$ that
determines treatment prioritization where a greater score represents a
higher priority.  Formally, a scoring rule maps the pre-treatment
covariates to a scalar variable,
\begin{equation}
  f: \cX \longrightarrow \mathcal{S} \subset \mathbb{R}. \label{eq:scoring}
\end{equation}
The most commonly used scoring rule is the estimated conditional
average treatment effect (CATE) given the pre-treatment covariates,
i.e., $f(\bx) = \widehat{\E}(Y_i(1)-Y_i(0) \mid \bX_i = \bx)$, using a
machine learning (ML) algorithm.  Another common scoring rule is the
risk score, which is typically represented by the baseline conditional
outcome, i.e., $f(\bx) = \widehat{\E}(Y_i(0) \mid \bX_i =
\bx)$.  

Until Section~\ref{subsec:uncertainty}, we assume that a scoring rule
is given. In some applications, the available scoring rule is trained
on auxiliary data and is only an estimate of an idealized population
scoring rule. One might, therefore, argue that inference about
selected subgroup should integrate the sampling variability of the
estimated scoring rule. Our goal, however, is to evaluate the
empirical performance of a fixed scoring rule that can actually be
deployed in the real world rather than the ``true'' scoring rule that
can never be attained.  For example, regulatory guidance for
confirmatory trials emphasizes the prespecification of design choices,
including any algorithms and associated cutoffs
\citep{lewis1999statistical}.  Similarly, the evaluation of machine
learning tools in public policy, such as pretrial risk assessment
instruments, calls for the assessment of actual recommendation system
itself \citep{imai2023experimental}.

Thus, we first develop inference conditional on a fixed scoring rule,
focusing on evaluation uncertainty rather than estimation uncertainty
of the scoring rule.  In Section~\ref{subsec:uncertainty}, we discuss
the condition, under which our inference remains valid even when one
is interested in evaluating the empirical performance of the true
population scoring rule.  For now, the only assumption we impose is
that the resulting treatment prioritization score is continuous.
\begin{assumption}[Continuous Treatment Prioritization
	Score]\label{asm:cdf} \spacingset{1}
	The cumulative distribution of the treatment prioritization score
	$F(s)=\Pr(S_i \le s)$ is continuous.
\end{assumption}
In practice, we can break a tie, if it exists, by adding a small
amount of noise to the score though our framework does not account for
the uncertainty that arises from this random tie-breaking procedure.

Next, we make two regularity assumptions.  First, we assume that the
CATE, i.e., $\E(\psi_i \mid \bX_i)$ where $\psi_i = Y_i(1)-Y_i(0)$, is
continuous in the quantile of the treatment prioritization score:
\begin{assumption}[Continuity of the conditional average treatment
	effect] \label{asm:continuity} \spacingset{1} The conditional
	average treatment effect given the quantile of the treatment
	prioritization score $\E[\psi_i \mid F(S_i) = p]$ is continuous for
	$p \in [0,1]$.
\end{assumption}

A similar assumption is commonly invoked in the causal inference
literature \citep[e.g.][]{wager2018estimation,
  yadlowsky2021evaluating}.  In addition, we assume that the
individual treatment effect has a finite second moment.  The
assumption is necessary for our asymptotic convergence result to have
a well defined distribution.
\begin{assumption}[Moment Condition for the Individual Treatment
  Effect] \label{asm:moments} \spacingset{1} We assume that the
  individual treatment effect $\psi_i$ has a finite second moment,
  i.e., $\E(\psi_i^2) < \infty$.
\end{assumption}

\subsection{Subgroup Identification with a Statistical Performance Guarantee}

We consider the setting where a researcher wishes to utilize a treatment
prioritization score to identify a subpopulation of individuals who
benefit from treatment the most. The same methodology can be applied
(by simply flipping the treatment prioritization score via
$S_i = - f(\bX_i)$) to the case where researchers wish to identify a
subgroup of individuals who are harmed most by treatment. 

Formally, we wish to find the quantile $p^\ast$ of the \emph{true} treatment
prioritization score such that the sorted group average treatment
effect (GATES) is maximized subject to a constraint set $\mathcal{P}
\subseteq [0,1]$:
\begin{equation}
	p^\ast = \argmax_{p \in \mathcal{P}} \Psi(p) \quad \text{where} \
        \Psi(p) \ = \ \E[\psi_i \mid F(S_i) \ge 1-p], \label{eq:pop-problem}
\end{equation}
where, under Assumption~\ref{asm:comrand}, the GATES is identified as,
\begin{equation}
  \Psi(p) = \E[Y_i \mid T_i = 1, F(S_i) \geq 1-p] - \E[Y_i \mid T_i = 0, F(S_i) \geq 1-p].
\end{equation}
If $\Psi(p)$ is known exactly, the optimization problem in
Equation~\eqref{eq:pop-problem} leads to the selection of individuals,
based on the quantiles of prioritization score, whose average
treatment effect is the greatest.  In practice, however, we only have
a data set of finite sample size, and both $S_i$ and $\Psi(p)$ may be estimated with
bias and noise.

Researchers often choose to solve the following empirical version
of this optimization problem,
\begin{equation}
  \hat{p}_n \ = \ \argmax_{ p \in \mathcal{P}} \widehat{\Psi}_n(p), \label{eq:naive-opt}
\end{equation}
where
\begin{equation}
	\widehat{\Psi}_n(p) = \frac{1}{np} \sum_{i=1}^{\lfloor np\rfloor}
	\hat\psi_{[n,i]} \quad \text{and } \
	\hat\psi_{[n,i]}=\frac{T_{[n,i]}Y_{[n,i]}}{n_1/n} -
	\frac{(1-T_{[n,i]})Y_{[n,i]}}{n_0/n} \label{eq:sortedest}
\end{equation}
with the subscript $[n,i]$ denoting the $i$th order statistic where
the ordering is based on the treatment prioritization score
$\hat{S}_i=\hat{f}(\bX_i)$.  Thus, for example, if we use $\hat{S}_{[n,i]}$ to denote
the $i$th order statistic of $\hat{S}$, then we have
$\hat{S}_{[n,1]}\ge \hat{S}_{[n,2]}, \ldots, \ge \hat{S}_{[n,n]}$.

While we can easily compute $\hat{p}_n$, the resulting estimate does
not take into account the estimation uncertainty of
$\widehat{\Psi}_n$.  This is problematic especially when the sample
size is limited and/or the true proportion of exceptional responders
$p^\ast$ is small. In addition,
because the selection of the parameter $p$ depends on the same data,
the standard confidence band for $\widehat{\Psi}_n$ computed by
conditioning on $\hat{p}_n$ suffers from a multiple testing problem
and does not have a proper coverage guarantee.

To address these challenges, we consider the identification of
exceptional responders with a statistical performance guarantee. We
first develop an asymptotic {\it uniform} one-sided confidence band
$C_n(p,\alpha)$ that covers the true value $\Psi(p)$ for any value of
$p$ with the probability of at least $1-\alpha$, i.e.,
\begin{equation}
  \P\left(\forall p \in [0,1], \ \Psi(p) \ge \widehat{\Psi}_n(p) - C_n(p,
    \alpha)\right) \geq 1-\alpha.  \label{eq:max_lower_conf}
\end{equation}
We then use this uniform confidence band to obtain a statistical
guarantee when selecting those who are predicted to benefit most from
the treatment.  Specifically, instead of selecting the value of $p$
that maximizes the estimated GATES as done in
Equation~\eqref{eq:naive-opt}, we choose the value of $p$ such that
the lower confidence bound of the estimated GATES is maximized,
\begin{eqnarray}
  \underline{\hat{p}}_n \ = \ \argmax_{p \in [0,1]} \widehat{\Psi}_n(p) - C_n(p,
	\alpha).
\end{eqnarray}

This procedure gives a statistical guarantee that, with probability at
least $1-\alpha$, the true GATES among those who are identified as
benefiting most from the treatment is at least as great as the
optimized lower confidence bound,
\begin{equation*}
  \P\left(\Psi(p^\ast) \ge
    \widehat{\Psi}_n(\underline{\hat{p}}_n) - C_n(
    \underline{\hat{p}}_n,
    \alpha)\right)  \geq \P\left(\Psi(\underline{\hat{p}}_n) \ge
    \widehat{\Psi}_n(\underline{\hat{p}}_n) - C_n(
    \underline{\hat{p}}_n,
    \alpha)\right) \geq 1-\alpha.
\end{equation*}
where the first inequality follows from the fact that
$\Psi(p^\ast)\geq \Psi(p)$ for any $p$. Note that since we can never
attain $p^\ast$, $\Psi(\underline{\hat{p}}_n)$ is of more immediate
interest as a target quantity for practitioners than $\Psi(p^\ast)$.

Since this statistical guarantee does not depend on how the value of
$p$ is selected, we can accommodate a wide variety of ways in which
this threshold parameter is selected.  For example, one may wish to
identify the largest group of individuals whose GATES is at least as
great as the pre-specified threshold $c$, i.e.,
$p^\ast(c) \ = \ \sup \{p \in [0,1]: \Psi(p) \ge c\}$.  Suppose that
we choose the value of $p$ such that the lower confidence bound
matches with the threshold $c$, i.e.,
\begin{equation*}
  \underline{\hat{p}}_n(c) \ = \ \sup \{p \in [0,1]: \widehat{\Psi}_n(p)  -C_n(p, \alpha) \ge c\}.
\end{equation*}
Then, this procedure provides a statistical guarantee that, with the
probability at least $1-\alpha$, the GATES for the selected group of
individuals is at least as great as the threshold $c$. 
\begin{equation*}
\P\left( \Psi(\underline{\hat{p}}_n(c)) \ge \widehat{\Psi}_n(\underline{\hat{p}}_n(c)) - C_n(\underline{\hat{p}}_n,
    \alpha)\ge c\right) \geq 1-\alpha.
\end{equation*}
Next, we show how to construct the uniform confidence band given in
Equation~\eqref{eq:max_lower_conf}.

\subsection{Constructing Uniform Confidence Bands}
\label{subsec:inference}

To develop the uniform convergence result, we begin by showing that a
normalized version of $p\widehat{\Psi}_n(p)$ converges in distribution
to a zero-mean Gaussian process.  In our proof, we utilize a
generalized version of Donsker's invariance principle for correlated
random variables.  For completeness, we state this theorem in Appendix
Theorem~\ref{thm:donsker}, and refer readers to
\cite{borovkov1983rate} for its proof in the independent case and
\cite{ossiander1987central} for the correlated case.  We now state the
convergence result.
\begin{proposition}[Convergence of the Normalized GATES
  Estimator]\label{prop:donsker} \spacingset{1}
  Consider the random continuous polygon $s_n$ connected by the
  following basic points $b_i$ for $i=1,\cdots,n$:
	\begin{equation*}
	b_i=  \left(\frac{\V(\frac{i}{n}\widehat{\Psi}_n(\frac{i}{n}))}{\V(\widehat{\Psi}_n(1))},\frac{ \frac{i}{n}\Psi(\frac{i}{n}) - \frac{i}{n}\widehat{\Psi}_n(\frac{i}{n})}{\sqrt{\V(\widehat{\Psi}_n(1))}} \right).
\end{equation*}
Under Assumptions~\ref{asm:randomsample}--\ref{asm:moments}, $s_n$
converges in distribution to a zero-mean Gaussian process uniformly in
the Skorokhod space.
\end{proposition}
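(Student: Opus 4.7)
The plan is to express the vertical coordinate of the basic points as a centered partial sum of induced order statistics (concomitants) of the prioritization score and then invoke a functional central limit theorem for correlated arrays. First I would rewrite
\begin{equation*}
\frac{\tfrac{i}{n}\Psi(i/n) - \tfrac{i}{n}\widehat{\Psi}_n(i/n)}{\sqrt{\V(\widehat{\Psi}_n(1))}} \;=\; -\frac{1}{n\sqrt{\V(\widehat{\Psi}_n(1))}}\sum_{j=1}^{i}\bigl(\hat\psi_{[n,j]} - \E[\hat\psi_{[n,j]}]\bigr) + R_{n,i},
\end{equation*}
where $R_{n,i}$ is a negligible remainder coming from the difference between $\E[\tfrac{i}{n}\widehat{\Psi}_n(i/n)]$ and $\tfrac{i}{n}\Psi(i/n)$ (which is $O(n^{-1/2})$ under Assumption~\ref{asm:comrand}). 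Then I would decompose each summand as $\hat\psi_{[n,j]} = \mu(\hat S_{[n,j]}) + \varepsilon_{[n,j]}$, where $\mu(s)=\E[\hat\psi_i \mid \hat S_i=s]$ and $\varepsilon_i = \hat\psi_i - \mu(\hat S_i)$ has mean zero conditional on $\hat S_i$. This separates the fluctuation into a signal part driven by the ordered scores and a noise part whose summands form a conditionally mean-zero array given the scores.

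Next I would apply the generalized Donsker invariance principle stated as Appendix Theorem~\ref{thm:donsker} to each component. For the noise part, conditional independence given the vector of scores, combined with the finite second moment from Assumption~\ref{asm:moments}, yields the Lindeberg-type condition required by the FCLT, and the standard concomitant-process results (e.g., Bhattacharya, 1974) deliver a Gaussian limit. For the signal part, Assumption~\ref{asm:continuity} guarantees that $p \mapsto \mu(F^{-1}(1-p))$ is continuous on $[0,1]$, so the partial sum $\tfrac{1}{n}\sum_{j=1}^{\lfloor np\rfloor}\mu(\hat S_{[n,j]})$ concentrates around a deterministic integral limit with Gaussian fluctuations driven by the empirical process of the scores. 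Combining the two components via Slutsky, verifying convergence of finite-dimensional distributions, and establishing tightness in the Skorokhod topology through the standard modulus-of-continuity criterion would then deliver a zero-mean Gaussian limit indexed by $p$.

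Finally, I would note that the horizontal coordinate $\V(p\widehat{\Psi}_n(p))/\V(\widehat{\Psi}_n(1))$ is a deterministic, monotone increasing map of $[0,1]$ onto $[0,1]$ that uniformly converges to a continuous limit under Assumptions~\ref{asm:continuity}--\ref{asm:moments}. A continuous time change preserves convergence in the Skorokhod topology, so after reparameterizing we obtain convergence of $s_n$ to a zero-mean Gaussian process uniformly on the Skorokhod space, as claimed.

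The main obstacle will be handling the dependence induced by sorting. Even though the raw $\hat\psi_i$ are (up to the complete-randomization constraint) i.i.d., the concomitants $\hat\psi_{[n,j]}$ are coupled across $j$ through the order statistics of the score, so the classical Donsker theorem for independent increments does not apply. Controlling this coupling via the conditional-independence-given-scores representation of the noise part, and showing that the covariances of the signal part scale correctly so that the correlated-Donsker hypotheses of Appendix Theorem~\ref{thm:donsker} are met, is the technical heart of the argument.
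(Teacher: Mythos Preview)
Your overall strategy---write the vertical coordinate as a centered partial sum of concomitants plus a bias remainder, then invoke a functional CLT---is sound and is essentially the skeleton of the paper's argument. However, there are two concrete issues and one substantive difference in route.

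First, your rate for the remainder is too weak. You claim the bias $\E[\tfrac{i}{n}\widehat\Psi_n(i/n)]-\tfrac{i}{n}\Psi(i/n)$ is $O(n^{-1/2})$. Since $\sqrt{\V(\widehat\Psi_n(1))}\asymp n^{-1/2}$, an $O(n^{-1/2})$ bias would leave $R_{n,i}=O(1)$ after normalization, not negligible. The paper (Lemma~\ref{lem:mean}) shows the bias is actually $O(1/n)$ uniformly in $p$, via a first-order Taylor expansion together with the fact that $F(\hat c_n(p))$ is a Beta order statistic with mean $n(1-p)/(n+1)$; a companion lemma (Lemma~\ref{lem:residual}) handles the $\int_{\lfloor n(1-p)\rfloor/n}^{1-p}$ discretization residual. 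You need this sharper rate, and it does not follow from Assumption~\ref{asm:comrand} alone.

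Second, your combining step is misstated. Both the ``signal'' part $\sum_j\{\mu(\hat S_{[n,j]})-\E[\mu(\hat S_{[n,j]})]\}$ and the ``noise'' part $\sum_j\varepsilon_{[n,j]}$ have nondegenerate Gaussian process limits, so Slutsky does not apply; you would need joint weak convergence of the pair, which in the concomitant literature is established but requires more than you indicate.

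More importantly, the paper does not use your signal/noise decomposition at all. It simply sets $U_{n,i}=\Delta\Psi_{n,i}-\hat\psi_{[n,i]}-\epsilon_{n,i}$ with $\E U_{n,i}=0$ (the $\epsilon_{n,i}$ absorbing the $o(n^{-1/2})$ bias from the two lemmas), observes that induced order statistics with finite second moments satisfy Ossiander's $L_2$ metric-entropy condition by the results of \cite{davydov2000functional}, and applies the correlated Donsker theorem (Appendix Theorem~\ref{thm:donsker}) \emph{once} to the array $\{U_{n,i}\}$. Because that theorem already builds the variance-based time change into the basic points $b_i$, no separate reparameterization step is needed either. Your Bhattacharya-style decomposition would work if carried out carefully, but the paper's route is shorter: it outsources both the dependence from sorting and the time change to the single black-box invocation of the correlated Donsker principle.
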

Proof is given in Appendix~\ref{app:donsker}. Proposition \ref{prop:donsker} tells us that a properly normalized version of the GATES estimator does converge to an Gaussian Process - however it can have a complex covariance function that makes the calculation of uniform confidence bands difficult. Let us consider the correlation
between $\widehat{\Psi}_n(p)$ and $\widehat{\Psi}_n(q)$ for any $p>q$.
After normalizing them by their group size, we can write the
difference between these two GATES estimators as,
\begin{equation*}
	p\widehat{\Psi}_n(p)-q\widehat{\Psi}_n(q) \ = \ \frac{1}{n}\sum_{i=
		\lfloor nq\rfloor}^{\lfloor np \rfloor} \hat\psi_{[n,i]}.
\end{equation*}
where $\hat\psi_{[n,i]}$ is defined in
Equation~\eqref{eq:sortedest}. Therefore, the correlation between
$\widehat{\Psi}_n(p)$ and $\widehat{\Psi}_n(q)$ is determined by the
correlation among $\hat\psi_{[n,i]}$.  If the estimated treatment prioritization
score $\hat{S}_i = \hat{f}(\bX_i)$ exactly equals the CATE, then $\psi_{[n,i]}$
will be strongly positively correlated with one another because they
are a noisy version of the true CATE.

In particular, this dependence makes our problem distinct from the one
studied in the previous literature, which focuses on testing whether
an ordering $\hat{S}$ is correlated with the CATE
\citep{miller1982maximally,hothorn2002maximally}. In the hypothesis
testing problem, the null hypothesis assumes that the $\psi_{[n,i]}$
are independent.  This makes it possible to derive an asymptotically
valid test statistic that follows the standard Weiner process. In
contrast, since our goal is the construction of the uniform confidence
intervals of the GATES, we impose no assumption about the dependence
structure between $\hat{S}$ and CATE.

Somewhat surprisingly, we show that regardless of the accuracy of this
score, the sorted individual treatment effect estimates
$\hat\psi_{[n,i]}$ are non-negatively correlated with one another.
\begin{proposition} {\sc (Non-negative Correlation of Sorted
    Individual Treatment Effect Estimates)}\label{prop:poscor} \spacingset{1}
  Under Assumptions~\ref{asm:randomsample}--\ref{asm:moments} and for
  all $1 \leq i < j \leq n$, we have:
  \begin{equation*}
    \Cov(\hat\psi_{[n,i]}, \hat\psi_{[n,j]}) \geq 0.
  \end{equation*}
\end{proposition}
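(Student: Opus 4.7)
My plan is to apply the law of total covariance by conditioning on the sample of covariates $\bX = (\bX_1,\ldots,\bX_n)$ and then exploit the classical theory of concomitants of order statistics. Since the scoring rule $\hat{f}$ is fixed, conditioning on $\bX$ makes the sort induced by $\hat{S}(\bX)$ deterministic, so the only remaining randomness comes from complete randomization and from the conditional distribution of potential outcomes given $\bX$.

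First, I would derive the decomposition
\begin{equation*}
\Cov(\hat\psi_{[n,i]}, \hat\psi_{[n,j]}) = \E\!\left[\Cov\!\left(\hat\psi_{[n,i]}, \hat\psi_{[n,j]} \mid \bX\right)\right] + \Cov\!\left(\tau(\bX_{[n,i]}), \tau(\bX_{[n,j]})\right),
\end{equation*}
where $\tau(x) = \E[Y(1)-Y(0)\mid \bX = x]$ is the CATE and $\bX_{[n,i]}$ is the covariate of the unit at rank $i$ in the ordering by $\hat{S}(\bX)$. The identity $\E[\hat\psi_{[n,i]} \mid \bX] = \tau(\bX_{[n,i]})$ uses Assumption~\ref{asm:comrand} and the independence of $T$ from $(\bX, Y(1), Y(0))$.

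Second, for the covariance-of-means term I would invoke the positive association of order statistics from an i.i.d.\ sample: the sorted scores $\hat{S}_{[n,1]} \ge \cdots \ge \hat{S}_{[n,n]}$ form a positively associated vector, and conditional on them the concomitants $\bX_{[n,i]}$ are independent across $i$. This lets me rewrite $\Cov(\tau(\bX_{[n,i]}),\tau(\bX_{[n,j]}))$ as a covariance between two functions of positively associated order statistics of $\hat{S}$. For the conditional-covariance term I would use the Horvitz-Thompson-type identity
\begin{equation*}
\hat\psi_{[n,i]} = \psi_{[n,i]} + \frac{T_{[n,i]} - n_1/n}{(n_1/n)(n_0/n)}\!\left[\frac{n_0}{n}\, Y_{[n,i]}(1) + \frac{n_1}{n}\, Y_{[n,i]}(0)\right],
\end{equation*}
where $\psi_{[n,i]} = Y_{[n,i]}(1) - Y_{[n,i]}(0)$, together with $T \perp (\bX,Y)$ and $\Cov(T_k, T_l) = -n_1 n_0 / [n^2(n-1)]$ under complete randomization. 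These reduce the calculation to products of conditional means evaluated at the $i$-th and $j$-th sorted positions.

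The hardest step will be to combine the two pieces and establish non-negativity regardless of $\hat{f}$. The conditional-covariance contribution is inherently negative because complete randomization enforces $\sum_k T_k = n_1$, whereas the covariance-of-means contribution is non-negative only when $\tau$ aligns with the ordering induced by $\hat{S}$. The proof must therefore show that the negative randomization piece is always dominated by the positive covariance-of-means piece; I expect this to hinge on a structural identity rewriting both terms on a common scale and exploiting how complete randomization and i.i.d.\ sampling jointly constrain the covariance structure of the sorted treatment effect estimators.
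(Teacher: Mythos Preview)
Your decomposition via the law of total covariance is reasonable, but the proposal contains a genuine misconception that blocks the argument, and it diverges from the paper's route in a way that creates unnecessary difficulties.

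\textbf{The central gap.} You write that ``the covariance-of-means contribution is non-negative only when $\tau$ aligns with the ordering induced by $\hat S$.'' This is false, and it is precisely the content of the proposition: $\Cov\!\bigl(\psi(\hat S_{[n,i]}),\psi(\hat S_{[n,j]})\bigr)\ge 0$ for \emph{any} measurable $\psi$, regardless of how well the score tracks the CATE. Positive association of order statistics, which you invoke, only delivers this for monotone $\psi$; since $\psi(s)=\E[\tau(\bX)\mid \hat S=s]$ need not be monotone, positive association is insufficient. The paper instead uses the Hoeffding--Cuadras covariance identity
\[
\Cov\!\bigl(\psi(\hat S_{[n,i]}),\psi(\hat S_{[n,j]})\bigr)=\int_{\mathbb R^2}\Bigl[\P(\hat S_{[n,i]}<a,\hat S_{[n,j]}<b)-\P(\hat S_{[n,i]}<a)\P(\hat S_{[n,j]}<b)\Bigr]\,\d\psi(a)\,\d\psi(b),
\]
combined with the positive quadrant dependence of order statistics (Boland et al.\ 1996, Theorem~3.4), which makes the bracketed term non-negative pointwise. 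This is the tool you are missing.

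\textbf{The conditioning choice.} The paper conditions on the pair $(\hat S_{[n,i]},\hat S_{[n,j]})$ rather than on the full covariate sample $\bX$. Under that coarser conditioning, the concomitant structure makes the conditional-covariance term vanish, so the entire covariance equals the covariance-of-means term and the argument reduces to the single inequality above. By conditioning on all of $\bX$, you force yourself to confront the negative contribution from $\Cov(T_{[n,i]},T_{[n,j]})$ under complete randomization and then prove a delicate domination result you admit you do not yet see. The paper's conditioning sidesteps this entirely; there is no ``domination'' step in its proof.
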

Proof is given in  Appendix~\ref{app:poscor}.  The above inequality is
tight and reaches  its bound of 0 when the  score $S_i$ is independent
of  the  individual  treatment  effect $\psi_i$,  i.e.,  when  the  ML
estimate of  the CATE is  completely uninformative. We  emphasize that
this  inequality is  general  and applicable  to arbitrary  evaluation
metrics beyond GATES. This non-negative  correlation is induced by the
nature of the ordering operation.

\begin{figure}[t!]
	\centering \spacingset{1}
	\includegraphics[scale=0.5]{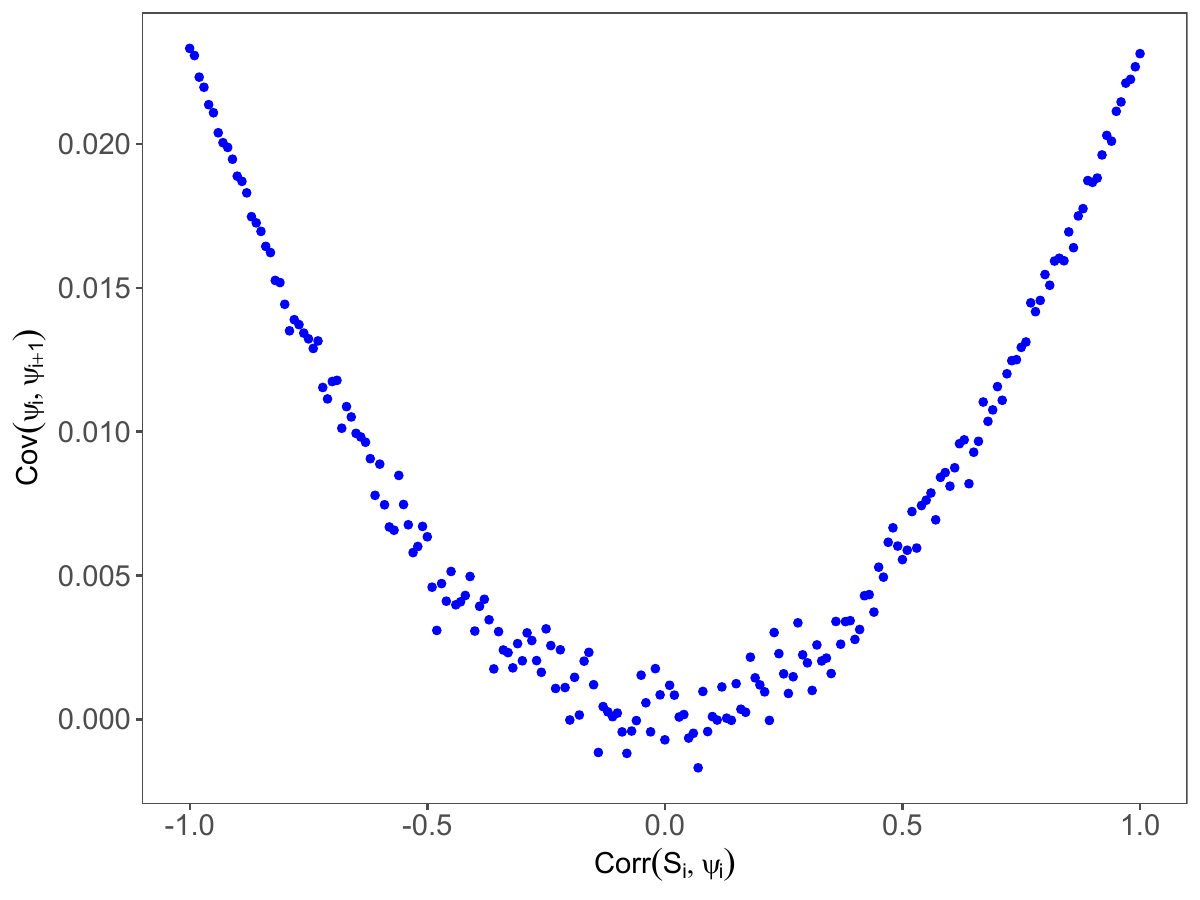}
	\caption{Simulation study illustrating a non-negative
          correlation between the estimated treatment prioritization score $\hat{S}_i$
          and the individual treatment effect $\psi_i$.  Even when the
          score is negatively correlated with the individual treatment
          effect, the individual treatment effects ordered by the
          score are positively correlated.  The magnitude of positive
          correlation increases as the relationship between the score
          and individual treatment effect becomes stronger.  The
          correlation becomes zero when the score is independent of
          the individual treatment effect. }
\label{fig:corr_plot}
\end{figure}

To illustrate this property, we simulate the data using $n=100$ and
$Y_i(t)=X_i(1+t)$ for $t=0,1$, where $X_i \sim N(0,1)$, implying
$\psi_i=X_i$. Then, we create the treatment prioritization score using
$\hat{S}_i=rX_i+\sqrt{1-r^2}Z_i$ where $Z_i \sim N(0,1)$ so that
$\Corr(\hat{S}_i, \psi_i) =r$ can be varied in $|r|\leq 1$.  We estimate
$\Cov(\hat\psi_{[n,i]}, \hat \psi_{[n,i+1]})$ based on 10,000 Monte Carlo
trials by averaging it across all $i$.  Figure~\ref{fig:corr_plot}
illustrates the non-negative correlation property shown in
Proposition~\ref{prop:poscor} by plotting the estimated covariance
between $\psi_{[n,i]}$ and $\psi_{[n,i+1]})$ against the estimated
correlation between $S_i$ and $\psi_i$ after averaging across all $i$.
As the correlation between the treatment prioritization score $S_i$
and the individual treatment effect becomes stronger, the magnitude of
positive covariance between the sorted individual treatment effects
increases.  Even when the score is negatively correlated with the
individual treatment effect, the correlation of sorted individual
treatment effects remains positive.

Proposition~\ref{prop:poscor} implies that a properly normalized
version of $p\widehat{\Psi}_n(p)$ converges to a stochastic process
with \emph{non-negatively} correlated increments. This key insight
enables us to bound the process with an appropriately rescaled Wiener
process, which has independent increments.  Specifically, we use
Slepian's lemma, which states that mean-zero Gaussian stochastic
processes are more tightly concentrated when the increments have a
greater covariance (see Lemma~\ref{lem:slepian} in
Appendix~\ref{app:sorteduniform}).  This leads to the following
construction of a uniform confidence band.
\begin{theorem}[Uniform Minimum-area Confidence Band for the GATES
  Estimator]
  \label{thm:sorteduniform} \spacingset{1} Under
  Assumptions~\ref{asm:randomsample}--\ref{asm:moments} and for a
  given level of significance $\alpha \in (0,1)$, we have,
	\begin{align*}
	&\lim_{n\to \infty} \P\left(\forall p \in [0, 1], \Psi(p) \geq
   \widehat{\Psi}_n(p) -
   \frac{\beta_0^\ast(\alpha)}{p} \sqrt{\V(\widehat{\Psi}_n(1))} - \beta_1^\ast(\alpha) \sqrt{\V(\widehat{\Psi}_n(p))}
   \right)
   \geq 1- \alpha.
\end{align*}
where
\begin{equation*}
  \V(\widehat \Psi_n(p)) \  = \
  \frac{1}{p^2}\left\{\frac{\E(S_{p1}^2)}{n_1} + \frac{\E(S_{p0}^2)}{n_0} - \frac{p(1-p)}{n-1}\E(\psi_i\mid \widehat{F}(\hat{S}_i)\geq p)^2\right\},
\end{equation*}
$S_{pt}^2 = \sum_{i=1}^n (\hat\phi_{ip}(t) -
\overline{\hat\phi_{p}(t)})^2/(n-1)$ with
$\hat\phi_{ip}(t) = \mathbf{1}\{\widehat{F}(\hat{S}_i)\geq p\}Y_i(t)$, and
$\overline{\hat\phi_{p}(t)} = \sum_{i=1}^n \hat\phi_{ip}(t)/n$. Furthermore, we have:
\begin{equation}
	\{\beta_0^\ast(\alpha), \beta_1^\ast(\alpha)\} = \argmin_{\beta_0, \beta_1 \in \mc{R}_+^2}
        \int^1_0  \beta_0 + \beta_1 \sqrt{t} \ \d t  \ \text{ subject to } \
        \P\l(W(t) \leq \beta_0 + \beta_1\sqrt{t}, \; \forall t\in [0,1]\r)
        \geq 1-\alpha. \label{eq:CIoptim}
\end{equation}
with $W(t)$ denotes the standard Wiener process.
\end{theorem}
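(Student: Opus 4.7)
The plan is to combine Proposition~\ref{prop:donsker} with Proposition~\ref{prop:poscor} through a Slepian-type stochastic dominance argument that lets us replace an intractable limit process by a rescaled standard Wiener process. The roadmap has four stages: identify the limit process and its covariance, compare it to $W$, choose a convenient parametric band shape, and translate back to the original scale.

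First I would work in the time variable $t(p) = \V(p\widehat\Psi_n(p))/\V(\widehat\Psi_n(1))$ used in Proposition~\ref{prop:donsker}, so that $t$ ranges over $[0,1]$ and the polygon $s_n$ converges in the Skorokhod space to a zero-mean Gaussian process $G(t)$ satisfying $\V(G(t))=t$ by construction. The key step is to read off the covariance structure of $G$ from Proposition~\ref{prop:poscor}. Writing
\begin{equation*}
p\widehat\Psi_n(p) - q\widehat\Psi_n(q) \ = \ \frac{1}{n}\sum_{i=\lfloor nq\rfloor}^{\lfloor np\rfloor}\hat\psi_{[n,i]}, \qquad q < p,
\end{equation*}
and expanding $\Cov(q\widehat\Psi_n(q), p\widehat\Psi_n(p) - q\widehat\Psi_n(q))$ as a sum of $\Cov(\hat\psi_{[n,i]},\hat\psi_{[n,j]})$ terms, Proposition~\ref{prop:poscor} forces this covariance to be non-negative. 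Passing to the limit (legitimate because Proposition~\ref{prop:donsker} gives joint weak convergence and the increments are uniformly square-integrable under Assumption~\ref{asm:moments}), I conclude that for $s \leq t$,
\begin{equation*}
\Cov(G(s), G(t)) \ = \ \V(G(s)) + \Cov(G(s), G(t) - G(s)) \ \geq \ s \ = \ \Cov(W(s), W(t)).
\end{equation*}

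Next I would apply Slepian's lemma (Lemma~\ref{lem:slepian}): since $G$ and $W$ are centered Gaussian with matching variances and $\Cov(G(s),G(t))\geq \Cov(W(s),W(t))$, for any deterministic function $h:[0,1]\to\mathbb{R}$,
\begin{equation*}
\P\bigl(G(t)\leq h(t) \ \forall t\in[0,1]\bigr) \ \geq \ \P\bigl(W(t)\leq h(t) \ \forall t\in[0,1]\bigr).
\end{equation*}
Taking $h(t)=\beta_0+\beta_1\sqrt{t}$, any pair $(\beta_0,\beta_1)$ feasible for the Wiener-process constraint in \eqref{eq:CIoptim} therefore induces a $(1-\alpha)$-uniform bound on $G$. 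Minimizing the area $\int_0^1(\beta_0+\beta_1\sqrt{t})\,\d t$ over this feasible set picks out the optimal $(\beta_0^\ast,\beta_1^\ast)$ (with nonnegativity needed only to ensure the inequality is informative and the integrand is a proper upper envelope).

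Finally I would translate the $t$-parameterization back to $p$. The event $G(t) \leq \beta_0^\ast + \beta_1^\ast\sqrt{t}$ for all $t$ becomes, after multiplying through by $\sqrt{\V(\widehat\Psi_n(1))}$ and using $\sqrt{\V(p\widehat\Psi_n(p))}=p\sqrt{\V(\widehat\Psi_n(p))}$,
\begin{equation*}
p\widehat\Psi_n(p) - p\Psi(p) \ \leq \ \beta_0^\ast\sqrt{\V(\widehat\Psi_n(1))} + \beta_1^\ast\, p\sqrt{\V(\widehat\Psi_n(p))} \quad \forall p\in[0,1],
\end{equation*}
and dividing by $p>0$ yields exactly the stated confidence band (the $p=0$ endpoint is vacuous since the inequality becomes $0\leq\beta_0^\ast\sqrt{\V(\widehat\Psi_n(1))}$). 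The main obstacle I anticipate is the technical passage from finite-sample non-negative correlations to a Slepian comparison of the limit process $G$ with $W$: one must check that $G$ indeed has $\V(G(t))=t$ exactly (rather than asymptotically up to a vanishing error), handle the behavior of the band near $t=0$ where $1/p$ blows up, and justify exchanging limit and supremum over the uncountable index $[0,1]$ via continuity of the Wiener sample paths together with the Skorokhod convergence in Proposition~\ref{prop:donsker}.
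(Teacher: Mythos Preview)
Your proposal is correct and follows essentially the same route as the paper: invoke Proposition~\ref{prop:donsker} to obtain the limit process $G$, use Proposition~\ref{prop:poscor} to show $\Cov(G(s),G(t))\geq \min(s,t)=\Cov(W(s),W(t))$, apply Slepian's lemma to transfer the Wiener-process band to $G$, and then unwind the normalization. The only cosmetic difference is that the paper introduces an auxiliary array of independent Gaussians $\phi_{[n,i]}$ and re-applies Donsker to identify $W$ as the limit of that array, whereas you compare $G$ directly to $W$ at the level of covariance kernels; the two presentations are equivalent, and your decomposition $\Cov(G(s),G(t))=s+\Cov(G(s),G(t)-G(s))\geq s$ is in fact a cleaner justification of the Slepian hypothesis than the paper's terse claim that ``$K_G(p,p')\geq K_W(p,p')$.''
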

Proof is given in Appendix~\ref{app:sorteduniform}.  Similar to
Proposition~\ref{prop:poscor}, the inequality is tight when the
treatment prioritization score is completely uninformative about the
CATE.

To solve Equation~\eqref{eq:CIoptim}, note that the objective function
is linear in $\beta_0$ and $\beta_1$ and the probability expression,
i.e.,
$\P\l(W(t) \leq \beta_0 + \beta_1\sqrt{t}, \; \forall t\in [0,1]\r)$,
is monotonically increasing in these coefficients.  Therefore, a
modified binary search algorithm on $(\beta_0,\beta_1)$ on a box
$(\beta_0,\beta_1) \in [0,\overline{\beta}_0] \times [0,
\frac{3}{4}\overline{\beta}_0]$ suffices, where
$\overline{\beta}_0 = \argmin_{\beta_0 \in \mc{R}} \P(W(t) \leq
\beta_0)) \geq 1-\alpha$, and $\overline{\beta}_0$ can be found
through a standard line search. An example algorithm that can be used
to solve Equation~\eqref{eq:CIoptim} is presented as
Algorithm~\ref{alg:binary_search} where we approximate the
probabilities through Monte Carlo simulation.

\begin{algorithm}[!t]
	\spacingset{1}
	\hspace*{\algorithmicindent} \textbf{Input}: Upper bound for
        \(\beta_0\), denoted by \( \overline{\beta}_0 \), step size \(
        \epsilon \), Type~I error probability \( \alpha \) \\
	\hspace*{\algorithmicindent} \textbf{Output}: Optimal values
        of $\{\beta_0(\alpha),\beta_1(\alpha)\}$ 
	\begin{algorithmic}[1]
		\State \(\{V, \beta_{1l}, \beta_{1h}\} \gets \{\infty, 0, \frac{3}{4} \overline{\beta}_0\} \)
		\For{ \( \beta_{0c} \) from 0 to \( \overline{\beta}_0 \) with step \( \epsilon \)}
		\State \( \beta_{1l} \gets 0 \)
		\While{\( \beta_{1h} - \beta_{1l} > \epsilon \)}
		\State \( \beta_{1m} \gets \frac{\beta_{1l} + \beta_{1h}}{2} \)
		\If{$\P\l(W(t) \leq \beta_{0c}+ \beta_{1m}\sqrt{t}, \; \forall t\in [0,1]\r) > 1-\alpha $}
		\State \( \beta_{1h} \gets \beta_{1m} \)
		\Else
		\State \( \beta_{1l} \gets \beta_{1m} \)
		\EndIf
		\EndWhile
		\If{$\int^1_0  \beta_{0c} + \beta_{1m} \sqrt{t}  \ \d t  < V $}
		\State \(V \gets \int^1_0  \beta_{0c} + \sqrt{t}  \beta_{1m}\ \d t \)
		\State \( \{\beta_0(\alpha), \beta_1(\alpha)\} \gets \{\beta_{0c}, \beta_{1m}\} \)
		\EndIf
		\EndFor
		\State \textbf{return} \(\{\beta_0(\alpha), \beta_1(\alpha)\} \)
	\end{algorithmic}
	\caption{Bounding the Weiner process by minimizing the area.}
	\label{alg:binary_search}
\end{algorithm}

The uniform confidence band given in Theorem~\ref{thm:sorteduniform}
minimizes the area of the resulting confidence region by approximating
it via a specific functional form $\beta_0+\sqrt{t} \beta_1$.  This
choice is inspired by \cite{kendall2007confidence} who find it to
perform well for Brownian motion.  We also consider a slight
modification of this minimum-area confidence bands.  Policymakers are
rarely interested in extremely small $p$, as the sample size is too
small to provide a conclusive evidence.  Thus, we may only wish to
minimize the area of confidence region over $p \in [p_l, 1]$ for some
$p_l \in (0, 1)$,
\begin{equation*}
  \lim_{n\to \infty} \P\left(\forall p \in [0, 1], \Psi(p) \geq
    \hat{\Psi}_n(p) -
    \frac{\beta_{0}^\ast(\alpha;
      p_l)}{p} \sqrt{\V(\hat{\Psi}_n(1))} - \beta_{1}^\ast(\alpha; p_l)\sqrt{\V(\hat{\Psi}_n(p))} \right) \geq 1-
  \alpha,
\end{equation*}
where
\begin{equation}
  \{\beta_{0}^\ast(\alpha; p_l), \beta_{1}^\ast(\alpha; p_l)\}
  = \argmin_{\beta_0, \beta_1 \in \mc{R}_+^2} \left\{\int^1_{p_l}  \beta_0 + \beta_1
    \sqrt{t}\ \d t  \;\; \text{subject to} \;\; \P(W(t) \leq \beta_0
    + \beta_1 \sqrt{t}, \; \; \forall t\in [0,1]) \geq
    1-\alpha\right\}. \label{eq:minimump}
\end{equation}

\begin{figure}[t!]
	\centering \spacingset{1}
	\includegraphics[scale=0.5]{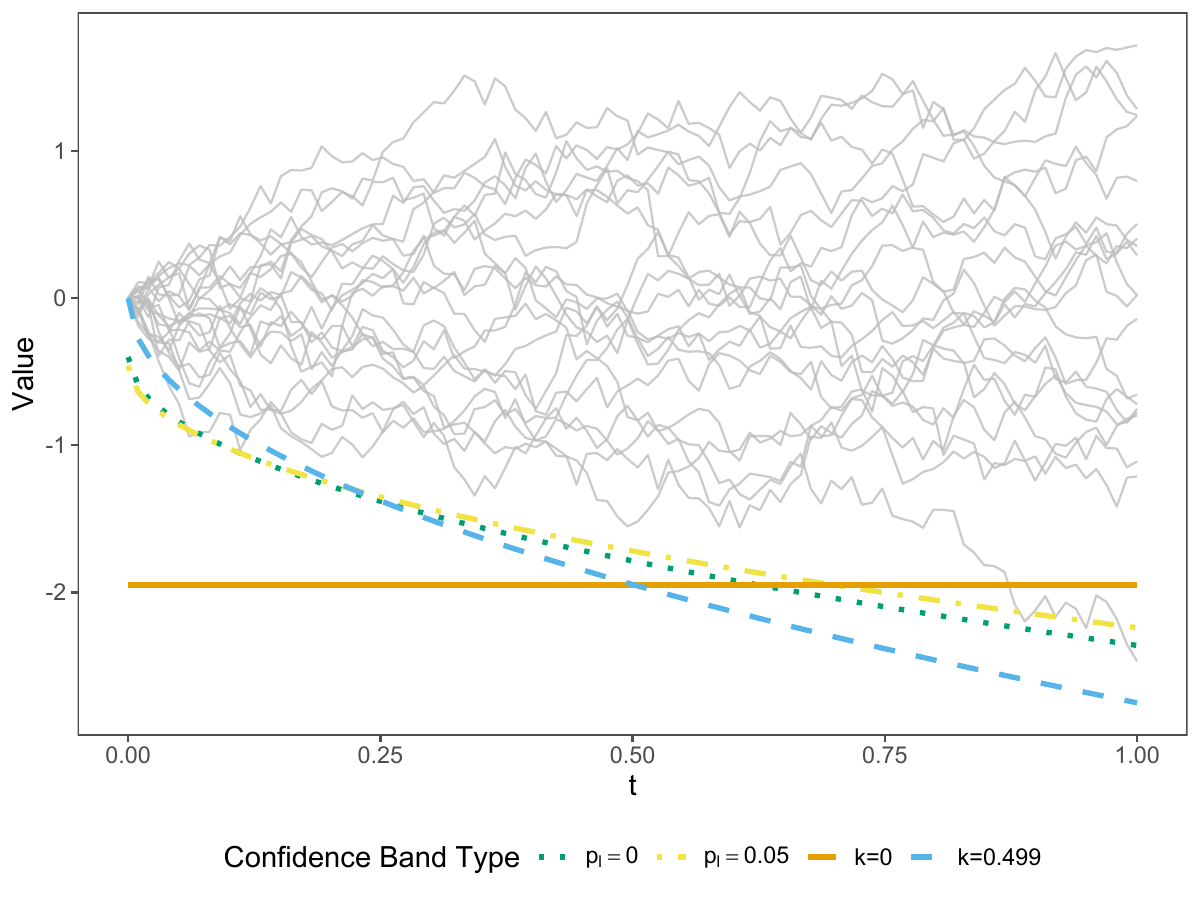}
	\caption{Two functional forms
          (Equations~\eqref{eq:minimump}~and~\eqref{eq:k_family}) used
          to create an one-sided 95\% uniform confidence band for
          Brownian motion. 20 sample paths of Brownian motion are
          plotted in gray for reference.}
	\label{fig:lower_bound}
\end{figure}

The above functional form is not unique, and many other valid choices
of confidence bands exist.  As an example, we next extend
Theorem~\ref{thm:sorteduniform} to provide confidence bands using a
function of the form $t^k$ for any $0\leq k < 1/2$.
Figure~\ref{fig:lower_bound} illustrates these two functional forms
used for the confidence bands of Brownian motion.
\begin{corollary} \label{cor:conf_band_k_family} \spacingset{1} Under
  Assumptions~\ref{asm:randomsample}--\ref{asm:moments}, we have, for
  a given level of significance $\alpha$, and any $0\leq k < 1/2$,
	\begin{align*}
          &\lim_{n\to \infty} \P\left(\forall p \in [0, 1], \Psi(p)
            \geq  \widehat{\Psi}_n(p) -
            \gamma^\ast(\alpha; k)\V(\widehat{\Psi}_n(p))^k
            \V(\widehat{\Psi}_n(1))^{1/2 -k }p^{2k-1}\right) \geq 1-
            \alpha.
	\end{align*}
	where
	\begin{equation}
          \gamma^\ast(\alpha; k) = \inf \{\gamma
          \in \mc{R} \mid \P(W(t) \leq \gamma t^k, \; \forall \; t\in [0,1]) \geq 1-\alpha\}. \label{eq:k_family}
        \end{equation}
\end{corollary}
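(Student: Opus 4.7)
The plan is to mirror the proof of Theorem~\ref{thm:sorteduniform}, substituting the two-parameter envelope $\beta_0+\beta_1\sqrt{t}$ by the one-parameter family $\gamma t^k$ and then optimizing only over $\gamma$. First, Proposition~\ref{prop:donsker} delivers weak convergence of the polygon $s_n$ with basic points $b_i$ to a zero-mean Gaussian process $G$ on $t\in[0,1]$. The normalization built into $b_i$ forces $\V(G(t))=t$, and the non-negative correlation of the $\hat\psi_{[n,i]}$ from Proposition~\ref{prop:poscor} translates into $\Cov(G(s),G(t))\ge \min(s,t)$, so the covariance function of $G$ dominates that of standard Brownian motion $W$ pointwise with matching variances. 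Slepian's lemma (Lemma~\ref{lem:slepian}) therefore yields, for any deterministic envelope $g$,
\[
\P\bigl(G(t)\le g(t),\;\forall t\in[0,1]\bigr)\;\ge\;\P\bigl(W(t)\le g(t),\;\forall t\in[0,1]\bigr),
\]
and by symmetry of $G$ the same inequality holds with $G$ replaced by $-G$.

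Second, I would set $g(t)=\gamma^*(\alpha;k)t^k$ from Equation~\eqref{eq:k_family}, which by construction makes the Brownian-motion probability at least $1-\alpha$. The restriction $k<1/2$ is exactly what makes this choice finite and positive: near $t=0$ the local law of iterated logarithm gives $\limsup_{t\downarrow 0}W(t)/\sqrt{2t\log\log(1/t)}=1$ almost surely, so $\gamma t^k$ dominates $W$ in a neighborhood of $0$ once $k<1/2$ and $\gamma$ is large enough; combined with sample-path continuity on $[\epsilon,1]$ this makes $\gamma\mapsto\P(W(t)\le \gamma t^k,\;\forall t\in[0,1])$ continuous, monotone, and onto $[0,1]$, so the infimum is attained at a finite $\gamma^*(\alpha;k)$. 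At $k=1/2$ this fails because $\P(W(t)\le\gamma\sqrt{t},\;\forall t\in[0,1])=0$ for every finite $\gamma$, explaining the strict inequality in the hypothesis.

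Third, I would invert the time change to rewrite the event $-G(t)\le\gamma^*(\alpha;k)t^k$ in terms of $p$. Substituting $t=\V(p\widehat{\Psi}_n(p))/\V(\widehat{\Psi}_n(1))=p^2\V(\widehat{\Psi}_n(p))/\V(\widehat{\Psi}_n(1))$ into $t^k$, using the $y$-coordinate encoding $\sqrt{\V(\widehat{\Psi}_n(1))}\,G(t)$ as the limiting analogue of $p(\Psi(p)-\widehat{\Psi}_n(p))$, multiplying through by $\sqrt{\V(\widehat{\Psi}_n(1))}$, and dividing by $p$, produces exactly the bound
\[
\Psi(p)\ge\widehat{\Psi}_n(p)-\gamma^*(\alpha;k)\V(\widehat{\Psi}_n(p))^k\V(\widehat{\Psi}_n(1))^{1/2-k}p^{2k-1}
\]
stated in the corollary, valid uniformly in $p\in[0,1]$ with limiting probability at least $1-\alpha$.

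The main obstacle is nothing fundamentally new relative to Theorem~\ref{thm:sorteduniform}, but rather correctly handling the boundary behavior of the envelope near $t=0$ via the local law of iterated logarithm so that $\gamma^*(\alpha;k)$ genuinely exists as a finite positive constant for every $k\in[0,1/2)$, and then checking that the same Slepian--Donsker sandwich used in Theorem~\ref{thm:sorteduniform} applies verbatim to this different envelope shape. All other machinery (Donsker-type weak convergence, Slepian's inequality against Brownian motion, symmetry to flip the bound, and the time-change inversion) carries over without modification from the minimum-area theorem.
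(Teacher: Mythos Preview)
Your proposal is correct and follows essentially the same route as the paper's proof: Donsker-type convergence from Proposition~\ref{prop:donsker}, covariance domination over Brownian motion via Proposition~\ref{prop:poscor}, Slepian's lemma to transfer the envelope inequality from $W$ to $G$, and the time-change inversion $t\leftrightarrow p^2\V(\widehat{\Psi}_n(p))/\V(\widehat{\Psi}_n(1))$ followed by division by $p$. The only minor differences are that the paper routes the Brownian-motion comparison through an auxiliary independent-increment process $\widehat{\Phi}_n$ (which is inessential, since its limit is $W$ by construction), and that you include an explicit LIL argument for the finiteness of $\gamma^\ast(\alpha;k)$, which the paper relegates to the discussion following the corollary rather than to the proof itself.
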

Proof is given in Appendix \ref{app:conf_band_k_family}.

Note that $k$ can only take values less than $1/2$.  This is because
the law of iterated logarithm states,
\[\limsup_{t \to 0} \frac{W(t)}{\sqrt{2t\log \log \frac{1}{t}}} \to 1\]
which implies that, for any $\gamma>0$,
$\P(W(t) \leq \gamma \sqrt{t}, \; \forall \; t\in [0,1]) =0$.  Thus,
$\gamma^\ast(\alpha; k)$ does not exist for any $k\geq 1/2$.  Since
heuristically we expect $\V(\hat{\Psi}_n(p))$ to scale
$O\left(p^{-1}\right)$, the width of the confidence bands for a given
$k$ should scale roughly as $O\left(p^{k-1}\right)$. Therefore, if a
policymaker is interested in selecting exceptional responders for a
relatively small value of $p$, one should choose $k$ as close to $1/2$
as possible.

\subsection{Numerical Illustration}
\label{subsec:numeric}

We present a numerical example to illustrate the uniform confidence
bands of different forms introduced above.  We utilize the data
generating process described in Section~\ref{sec:synthetic} and train
the LASSO regression model by fitting it to the population data based
on the empirical distribution of the full sample. Then, we generate a
random sample of size $n=2,500$ from this population and apply our
methodology to the CATE estimates from the LASSO
regression. Specifically, we compute the proposed minimum-area uniform
confidence bands in two ways, based on Equation~\eqref{eq:minimump}
with $p_l=0$ and $p_l=0.05$, and Equation~\eqref{eq:k_family} with
$k=0$ and $k=0.499$, using 10,000 Monte Carlo trials to approximate
the probabilities. We also provide the pointwise confidence band as a
reference.

\begin{figure}[t!]
	\centering
	\begin{subfigure}{0.31\textwidth}
		\centering
		\includegraphics[width = \textwidth]{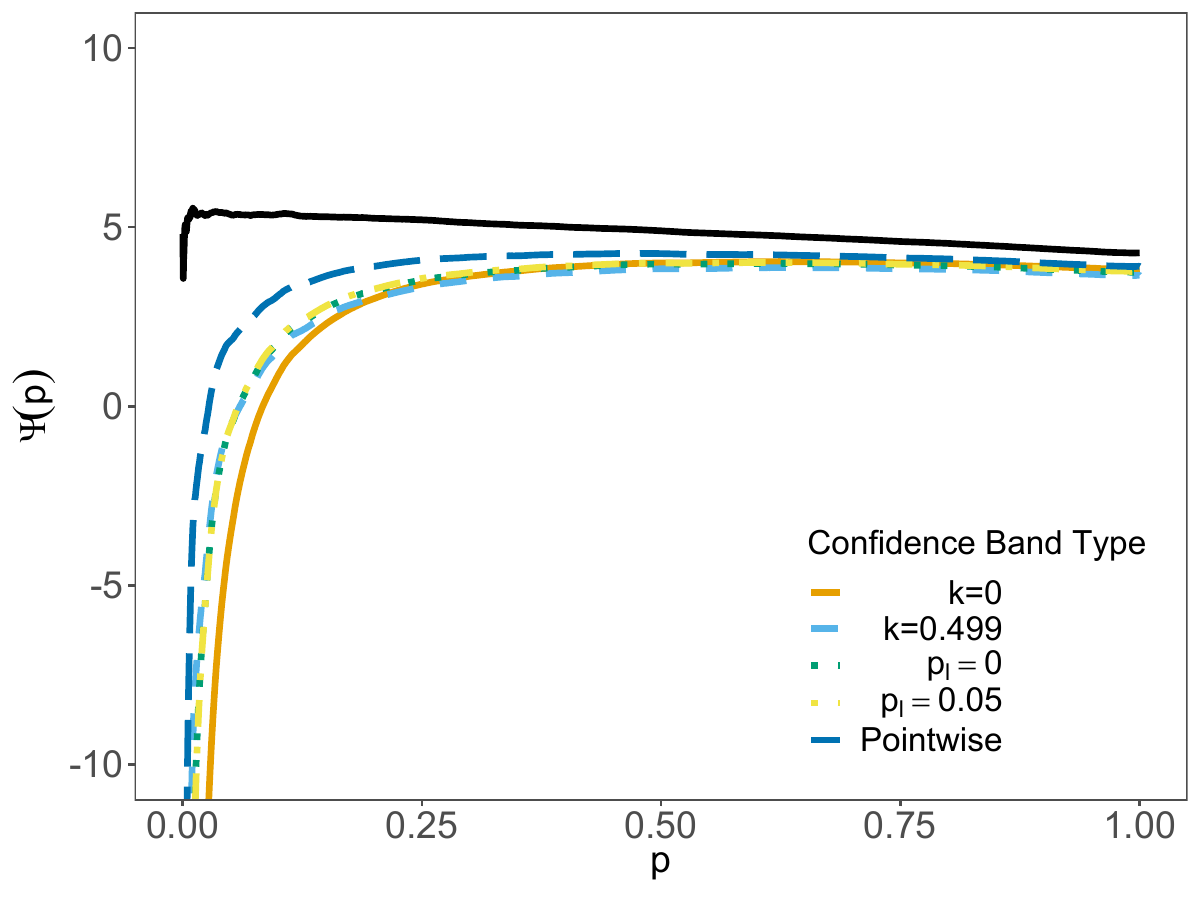}
		\caption{ \label{fig:conf_comp_full} Entire range $p
                  \in [0, 1]$}
	\end{subfigure}
	\begin{subfigure}{0.31\textwidth}
		\centering
		\includegraphics[width = \textwidth]{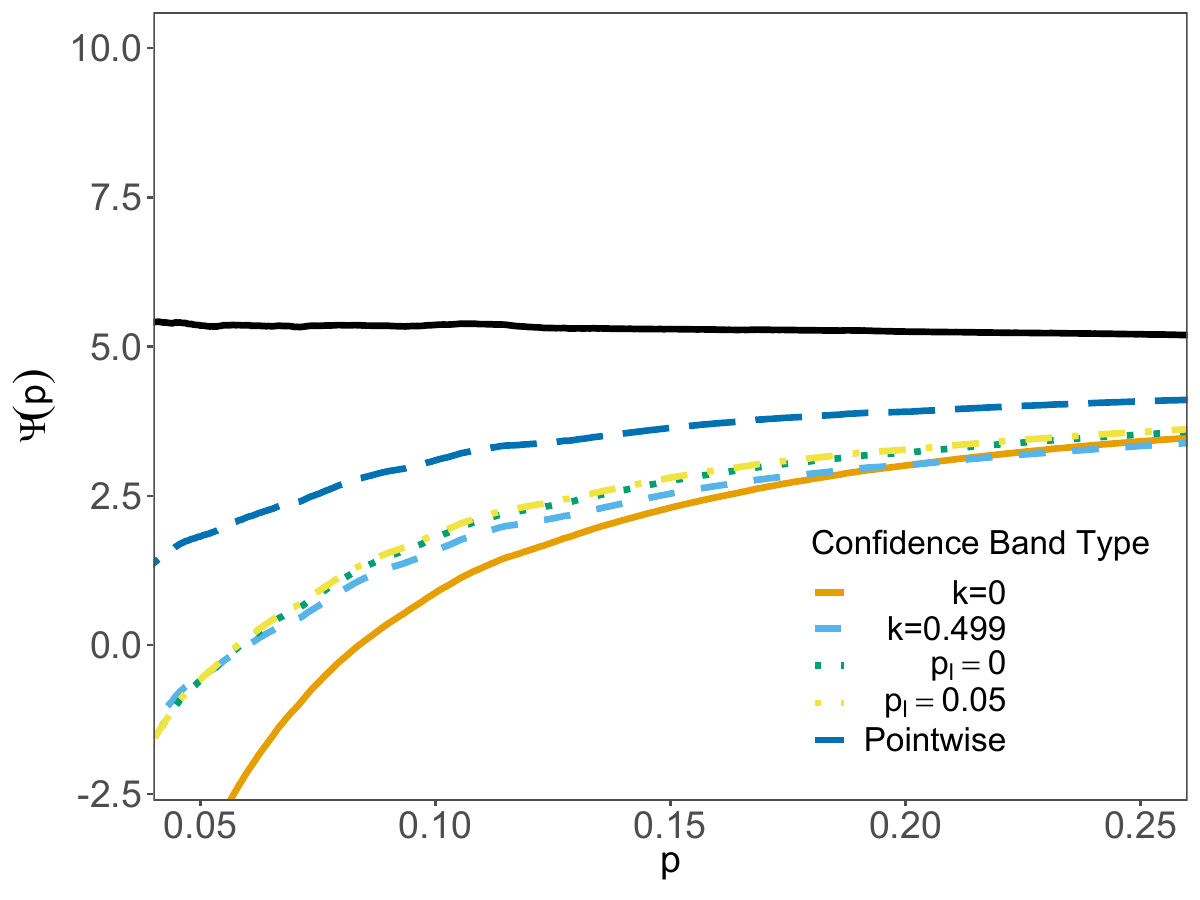}
		\caption{	\label{fig:conf_comp_0} Around $p=0.15$}
	\end{subfigure}
	\begin{subfigure}{0.31\textwidth}
		\centering
		\includegraphics[width = \textwidth]{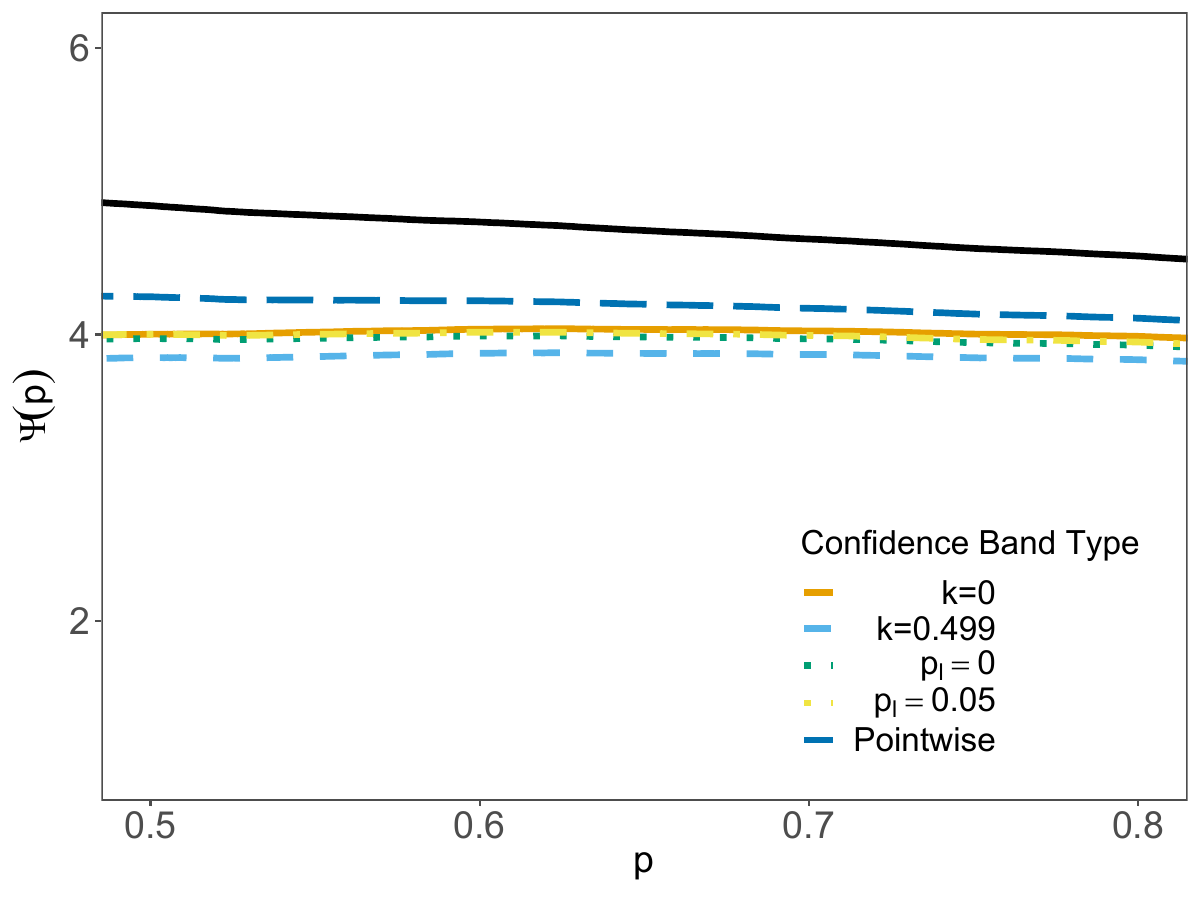}
		\caption{	\label{fig:conf_comp_1} Around $p=0.65$}
	\end{subfigure}
	\caption{Comparison of uniform minimum-area confidence bands
          using the sample of size $n=2500$. } \label{fig:conf_comp}
\end{figure}

Figure~\ref{fig:conf_comp_full} shows the resulting confidence bands
with the solid black line representing the point estimate
$\widehat{\Psi}_n$, whereas
Figures~\ref{fig:conf_comp_0}~and~\ref{fig:conf_comp_1} zoom into the
areas around $p=0.15$ and $p=0.65$, respectively.  We find that in
this numerical example, the confidence band based on
Equation~\eqref{eq:k_family} with $k=0$ (solid orange line) has the
greatest width when $p$ is small although it is slightly tighter than
the other uniform minimum-area confidence bands when $p$ is large.
The other three confidence bands are quite similar 
although the one based on Equation~\eqref{eq:k_family} with $k=0.499$
is slightly wider (approximately 5\%) than those based on
Equation~\eqref{eq:minimump}.

We also observe that in general, the different types of uniform
confidence bands are on average approximately $50\%$ wider than the
pointwise confidence bands.  This suggests that the uniform bands can
be practically useful while retaining a strong statistical
guarantee. The uniform bands, relative to the pointwise ones, are
generally wider when $p$ becomes closer zero, and tighter when $p$
gets closer to one.  This is because the variability of
$\widehat\Psi(p)$ decreases as $p$ increases.

\subsection{Extension to Other Evaluation Metrics}
\label{sec:sample_splitting_gen}

So far, we have focused on the identification of exceptional
responders that maximizes $\Psi(p)=\E[\psi_i \mid F(S_i) \ge p]$,
using the individual-level evaluation metric,
$\hat\psi_i=T_iY_i/(n_1/n) - (1-T_i)Y_i/(n_0/n)$. We emphasize,
however, that Theorem~\ref{thm:sorteduniform} is applicable to any
unit-level statistic,
\begin{equation*}
  \hat\psi_i^g \ = \ g(Y_i, T_i, \bX_i, p),
\end{equation*}
so long as the continuity and finite moment conditions (i.e.,
Assumptions~\ref{asm:continuity}~and~\ref{asm:moments}) hold.  This
generalization allows for various alternative formulations of
exceptional responder problem.  For example, we may adjust for the
cost of treatment and use the following alternative definition of
$\Psi$,
\begin{equation*}
  \Psi^C(p) \ = \ \E[Y_i(1)-Y_i(0) - c_p(\bX_i) \mid F(S_i) \ge 1-p]
\end{equation*}
where $c_p(\cdot)$ is a known cost function based on the individual
characteristics $\bX_i$ and the proportion of exceptional responders
$p$.  Since this still leads to the individual level statistic
$\hat\psi_i^C=\hat\psi_i - c_p(\bX_i)$,
Theorem~\ref{thm:sorteduniform} is directly applicable.

Next, we consider the mean adjusted individual-level statistic, which
can be written as,
\begin{equation*}
  \hat\psi_i^M(\{Y_i, T_i, \bX_i\}_{i=1}^n ,p)\ = \ h(Y_i, T_i, \bX_i, p) - \frac{p}{n}
  \sum_{i^\prime=1}^n  h(Y_{i^\prime}, T_{i^\prime}, \bX_{i^\prime}, p),
\end{equation*}
where $h(Y_i, T_i, \bX_i, p)$ is a unit-level statistic.  Examples of
such mean-adjusted statistics include the rank average treatment
effect (RATE) \citep{yadlowsky2021evaluating} and the population
average prescriptive effect (PAPE) \citep{imai2021experimental}.

Here, motivated by the idea of the PAPE, we consider the following
objective that adjusts for the overall average treatment,
\begin{equation*}
  p^\ast  \ = \ \argmax_{p \in [0,1]} \Psi^M(p) \quad \text{where } \
  \Psi^M(p) \ = \ \E[Y_i(1) - Y_i(0) \mid F(S_i) \ge 1-p] - p\E[Y_i(1) - Y_i(0)].
\end{equation*}
Then, the unit-level statistic is given by,
\begin{equation*}
  \hat\psi_i^M(\{Y_i, T_i,\bX_i\}_{i=1}^n, p) \ = \
  \frac{Y_iT_i}{n_1/n} - \frac{Y_i(1-T_i)}{n_0/n} - \left(\frac{p}{n_1}
    \sum_{i=1}^n Y_iT_i - \frac{p}{n_0} \sum_{i=1}^n
    Y_i(1-T_i)\right).
\end{equation*}
The sample estimator of $\Psi^M(p)$ can be written as follows,
\begin{equation}
  \widehat{\Psi}^M_n(p) \ = \ \frac{1}{np} \sum_{i=1}^{\lfloor np\rfloor}
  \hat\psi^M_{[n,i]} \ = \ \frac{1}{np} \sum_{i=1}^{\lfloor np\rfloor}
  \hat\psi_{[n,i]} -\frac{p\lfloor np\rfloor}{np}  \frac{1}{n} \sum_{i=1}^n \hat\psi_{[n,i]} = \widehat{\Psi}_n(p) - \frac{\lfloor np\rfloor}{n}\widehat{\Psi}_n(1). \label{eq:mean_decompose}
\end{equation}

This alternative expression shows that $\widehat{\Psi}^M_n(p)$ can be
constructed from $\widehat{\Psi}_n(p)$ in the same fashion as how a
Brownian bridge on $[0,1]$ is constructed from a Wiener
process. Therefore, Theorem~\ref{thm:sorteduniform} can be extended to
the mean-adjusted statistic by replacing the Wiener process with the
Brownian bridge.
\begin{proposition}\label{prop:sortedunif_mean} \spacingset{1}
  Under Assumptions~\ref{asm:randomsample}--\ref{asm:moments}, we
  have, for a given level of significance $\alpha \in (0,1)$,
  $$\lim_{n\to \infty} \P\left(\forall p \in [0, 1], \Psi(p) \geq
    \widehat{\Psi}^M_n(p) - \frac{\delta_0^\ast(\alpha)}{p}
    \sqrt{\V(\widehat{\Psi}^M_n(1))} -
    \delta_1^\ast(\alpha)\sqrt{\V(\widehat{\Psi}^M_n(p))\left(1-\frac{\V(p\widehat{\Psi}^M_n(p))}{\V(\widehat{\Psi}^M_n(1))}\right)}
  \right) \geq 1- \alpha,$$ where
 $$\{\delta_0^\ast(\alpha), \delta_1^\ast(\alpha)\} = \argmin_{\delta_0, \delta_1 \in \mc{R}}
 \left\{\int^1_0 \delta_0 + \delta_1 \sqrt{t(1-t)} \ \d t \ \;\; \text{s.t.}
   \;\; \ \P\l(B(t) \leq \delta_0 + \delta_1 \sqrt{t(1-t)}, \; \forall \; t\in
   [0,1]\r) \geq 1-\alpha\right\},$$ with $B(t)$ denoting a standard
 Brownian Bridge on $[0,1]$.
\end{proposition}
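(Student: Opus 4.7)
The plan is to adapt the proof of Theorem~\ref{thm:sorteduniform}, replacing the Wiener process with a standard Brownian bridge and exploiting the algebraic decomposition in Equation~\eqref{eq:mean_decompose}. That decomposition shows that the mean-adjusted estimator relates to $\widehat{\Psi}_n$ in exactly the same way that a Brownian bridge $B(t)=W(t)-tW(1)$ relates to a Wiener process $W(t)$: namely $\widehat{\Psi}^M_n(p)=\widehat{\Psi}_n(p)-(\lfloor np\rfloor/n)\widehat{\Psi}_n(1)$. So the strategy is to first port the Donsker-type convergence and non-negative correlation arguments from Theorem~\ref{thm:sorteduniform} over to the mean-adjusted estimator, then derive the optimal envelope for the bridge instead of for the Wiener process.

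First, I would invoke Proposition~\ref{prop:donsker} to obtain weak convergence of the normalized $p\widehat{\Psi}_n(p)$. Using $\lfloor np\rfloor/n \to p$ uniformly on $[0,1]$ together with the continuous mapping theorem, I would translate this into weak convergence of $p[\Psi^M(p)-\widehat{\Psi}^M_n(p)]/\sqrt{\V(\widehat{\Psi}^M_n(1))}$ under the re-parameterization $t(p)=\V(p\widehat{\Psi}^M_n(p))/\V(\widehat{\Psi}^M_n(1))$. Because the subtraction of $(\lfloor np\rfloor/n)\widehat{\Psi}_n(1)$ mirrors the $tW(1)$ term, the limit process becomes a zero-mean Gaussian process with variance function $t(1-t)$ and the covariance structure of a Brownian bridge.

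Second, I would extend Proposition~\ref{prop:poscor} to the mean-adjusted setting. The sorting-induced non-negative correlation among $\hat\psi_{[n,i]}$ carries through: the mean-adjustment only subtracts a quantity proportional to the grand mean of the $\hat\psi_{[n,i]}$, so the covariance between increments over disjoint $p$-intervals remains at least as large as that of a pure Brownian bridge with identical marginal variances. Slepian's lemma (Lemma~\ref{lem:slepian}) then implies that the supremum of the mean-adjusted process over any family of envelopes is stochastically dominated by the supremum of a standard Brownian bridge over the same envelopes, giving a conservative bound.

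Third, I would solve the envelope optimization for the bridge. Since $\mathrm{Var}(B(t))=t(1-t)$, the natural shape-matching envelopes are of the form $\delta_0+\delta_1\sqrt{t(1-t)}$, analogous to $\beta_0+\beta_1\sqrt{t}$ for the Wiener case in Equation~\eqref{eq:CIoptim}. Minimizing the area under such envelopes subject to the uniform coverage constraint defines $\{\delta_0^\ast,\delta_1^\ast\}$, and translating back through the re-parameterization (multiplying by $\sqrt{\V(\widehat{\Psi}^M_n(1))}$ and dividing by $p$) converts $t(1-t)$ into the stated factor $\V(\widehat{\Psi}^M_n(p))(1-\V(p\widehat{\Psi}^M_n(p))/\V(\widehat{\Psi}^M_n(1)))$, yielding the claimed bound. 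The main obstacle I anticipate is verifying that Slepian's comparison still produces the \emph{right} dominating process in the bridge setting: one must check that the mean-adjustment does not disrupt the sign pattern of the induced covariances, particularly near the endpoints $p\to 0$ and $p\to 1$ where the bridge's variance collapses and the re-parameterization behaves singularly. Beyond that, the technical passage from the discrete $\lfloor np\rfloor/n$ to the continuous limit $p$ requires justification of uniform tightness on the Skorokhod space, which should follow via the same arguments used to prove Proposition~\ref{prop:donsker}.
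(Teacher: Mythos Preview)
Your overall architecture---Donsker convergence via Equation~\eqref{eq:mean_decompose}, a Slepian-type comparison, then minimum-area envelope optimization---matches the paper's. The gap is in your second step. You assert that Proposition~\ref{prop:poscor} ``carries through'' after mean adjustment so that Slepian's lemma (Lemma~\ref{lem:slepian}) applies directly to the mean-adjusted process. This is precisely where the argument breaks. As the paper notes in the paragraph following the statement of this proposition, after subtracting the grand mean one no longer has $\Cov(\hat\psi^M_{[n,i]},\hat\psi^M_{[n,j]})\ge 0$. More concretely, even if the unadjusted limit $G$ satisfies $K_G(s,t)\ge K_W(s,t)$ with $K_G(t,t)=t$, the bridge transform $G(t)-tG(1)$ need not dominate $B(t)=W(t)-tW(1)$ in covariance, and its marginal variance $t-2tK_G(t,1)+t^2$ need not equal $t(1-t)$ unless $K_G(t,1)=t$, which you only know as an inequality. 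Both hypotheses of Slepian's lemma can therefore fail for the mean-adjusted process.

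The paper's fix is to avoid comparing the bridge-transformed processes directly. Instead it represents the limiting Gaussian bridge as the \emph{conditional} process $G(t)\mid G(1)=0$ (via the Gaussian bridge representation, e.g.\ Sottinen--Yazigi), and proves a \emph{conditional} Slepian inequality (Lemma~\ref{lem:bridge_correction}) using Kahane's inequality: if $0\le\Cov(X_i,X_j)\le\Cov(Y_i,Y_j)$ with matching variances, then $\P(\cap_i\{X_i\ge u_i\}\mid X_n=0)\le\P(\cap_i\{Y_i\ge u_i\}\mid Y_n=0)$ for $u_i\le 0$. This comparison is applied to the \emph{unconditioned} processes $G$ and $W$, where Proposition~\ref{prop:poscor} does give the required covariance ordering, and the conditioning on the endpoint produces the bridge. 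You correctly flagged the sign pattern near the endpoints as the main obstacle, but the resolution is not a routine check---it requires this new lemma.
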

Appendix~\ref{app:sortedunif_mean} presents a proof of this
proportion, which exploits the fact that Brownian Bridge $B(t)$ must
be 0 at both $t=0$ and $t=1$. Here, we cannot use Slepian's lemma as
in Theorem~\ref{thm:sorteduniform} to control the correlation between
$\psi^M_{[n,i]}$, since after the mean adjustment we no longer have
$\Cov(\psi^M_{[n,i]},\psi^M_{[n,j]})\geq 0$. Therefore, to control the
covariance structure of the limiting Brownian Bridge, we need to prove
a conditional version of Slepian's inequality using Kahane's
inequality \citep{kahane1985chaos}.

\subsection{Characterizing the Selected Subgroup}
\label{subsec:interpretable}
In practice, it is important to characterize the selected subgroup so
that one can understand what covariate profiles are associated with
exceptional responders.  There are two strategies that can be used to
facilitate this process while leveraging the proposed methodology.
First, when estimating a scoring rule, we can restrict it to a class
of interpretable rules, such as decision trees and linear models.
Since our methodology does not impose any restriction on how scoring
rules are estimated (they do not have to be even consistent), any
interpretable function class can be used to improve the
interpretability of scoring rule.  Once an optimal scoring rule is
estimated within this restricted class, one can directly apply the
proposed methodology to obtain a uniform confidence interval for the
resulting GATES.

Another possibility is to characterize the selected group using
covariates \citep{cher:etal:19}.  For example, we can estimate the
following average value of covariate $X_j$ among the selected group,
$$\E(X_j \mid F(S)\geq 1-p^*).$$
Thus, we can treat $X_j$ as an auxiliary outcome.  Define,
$$\Psi^{X_j}(p)=\E(X_j \mid F(S)\geq 1-p), \quad \text{and} \quad \widehat{\Psi}^{X_j}_{n}(p) 
= \frac{1}{np} \sum_{i=1}^{\lfloor np\rfloor} X_{j,[n,i]}, 
$$
where $X_{j,[n,i]}$ denotes the $i$th order statistic of $X_j$ based
on the estimated scoring rule $\hat{S}_i=\hat{f}(\bX_i)$.  We then
apply the Bonferroni correction to construct the jointly uniform
confidence lower bound as,
$$
\mathbb{P}(\forall p \in [0,1], \Psi^{X_j}(p) \ge \widehat{\Psi}^{X_j}_{n}(p)-C^{X_j}_n(p,\alpha),
\Psi(p) \ge \widehat{\Psi}_n(p)-C_n(p,\alpha)) \ge 1-2\alpha
$$
where $C^{X_j}_{n}(p,\alpha)$ is defined analogously to
$C_n(p,\alpha)$. The uniform confidence upper bound can be derived
similarly.

\subsection{Estimation Uncertainty of Scoring Rule}
\label{subsec:uncertainty}

In this paper, we assume that the scoring rule is fixed and do not
account for the estimation error.  As explained in
Section~\ref{subsec:randexp}, this is because, in practice, one would
like to evaluate the empirical performance of an actual scoring rule
that will be deployed in real-world applications.  In other words, we
are not interested in the empirical performance of an idealized
``true'' scoring rule $f(x)$, to which an estimated scoring rule
$\hat f(x)$ converges.  Indeed, such a rule is never attainable and
hence can never be deployed.

Nevertheless, if one is interested in evaluating the empirical
performance of this hypothetical scoring rule, it is possible to use
our uniform confidence bands by assuming both $f(x)$ and $\hat f(x)$
are sufficiently continuous, and the estimated prioritization score
converges to the ``true'' prioritization score at a sufficiently fast
rate.

Specifically, we replace
Assumptions~\ref{asm:cdf}~and~\ref{asm:continuity} with the following
alternative assumptions.  \renewcommand{\theassumption}{3A}
\begin{assumption}[Estimated prioritization
  score] \label{asm:cdf_uncertain} \spacingset{1} Define the
  cumulative distribution function (CDF) of the true treatment
  prioritization score as $F(s)=\Pr(S\le s)$, and let
  $c(p)=F^{-1}(1-p)$ be the $1-p$th percentile of the true treatment
  prioritization score. We assume that for all $p \in [0,1]$, there
  exist positive constants $t_0$, $c_0$, and $\kappa$ such that we
  have for all $t \in [0,t_0)$:
  \[\P(|S-c(p)|<t)\leq c_0t^\kappa\]
Furthermore, we assume that the error of the
  estimated treatment prioritization score $\hat f(x)$ is bounded as
  follows:
  \[\sup_{x \in \cX} |\hat{f}(x)-f(x)| = o(n^{-1/2\kappa}) \]
\end{assumption}
\renewcommand{\theassumption}{4A}
\begin{assumption}[Continuity of the conditional average treatment
  effect] \label{asm:continuity_uncertain} \spacingset{1} The
  conditional average treatment effect given the quantile of the true
  and estimated treatment prioritization scores,
  $\E[\psi_i \mid F(S_i) = p]$ and
  $\E[\psi_i \mid \hat{F}(\hat{S}_i) = p]$, are continuous for
  $p \in [0,1]$.
\end{assumption}
Assumption~\ref{asm:continuity_uncertain} is a straightforward
extension of the condition in Assumption~\ref{asm:continuity},
Assumption~\ref{asm:cdf_uncertain} has two components; one is a
uniform margin condition on the true treatment prioritization score,
and the other is a rate condition on the convergence of the estimated
prioritization score. The use of margin conditions is common in the
literature \citep[see e.g.][]{qian2011performance, kitagawa2018should,
  bonvini2023minimax, benmichael2024policy}.

These two conditions have a tradeoff relationship.  In general, as
long as the density of the true treatment prioritization score is
bounded, we can have $\kappa=1$. However, with $\kappa=1$, we require
the parametric rate of convergence, $o(n^{-1/2})$, that may not be
satisfied by modern machine learning algorithms with high-dimensional
covariates \citep{kennedy2023towards}. Alternatively, if we have a
stronger margin condition with $\kappa>1$, then we can utilize machine
learning algorithms that converge slower than $n^{-1/2}$. Note that in
general, it is impossible to have $\kappa >1$ for all $p\in [0,1]$, as
it would imply a vanishing density for the true treatment
prioritization score. However, if the decision-maker knows that the
final $p$ would fall into a subset $\mathcal{P} \subset [0,1]$, it is
possible that we can have the margin condition hold for $\kappa>1$ for
$p \in \mathcal{P}$.

Under these alternative assumptions, it is possible to prove the
equivalent Donsker result that is needed to prove the validity of our
uniform confidence bands with the estimated treatment prioritization
score. 
\begin{proposition} \textsc{(Convergence of the normalized GATES
    estimator with the estimated treatment prioritization
    score)} \label{prop:donsker_uncertain} \spacingset{1} Consider the
  random continuous polygon $s_n$ connected by the following basic
  points $b_i$ for $i=1,\cdots,n$:
	\begin{equation*}
		b_i=  \left(\frac{\V(\frac{i}{n}\widehat{\Psi}_n(\frac{i}{n}))}{\V(\widehat{\Psi}_n(1))},\frac{ \frac{i}{n}\Psi(\frac{i}{n}) - \frac{i}{n}\widehat{\Psi}_n(\frac{i}{n})}{\sqrt{\V(\widehat{\Psi}_n(1))}} \right).
	\end{equation*}
	Under Assumptions~\ref{asm:randomsample}, \ref{asm:comrand}, \ref{asm:cdf_uncertain}, \ref{asm:continuity_uncertain}, and \ref{asm:moments}, $s_n$
	converges in distribution to a zero-mean Gaussian process uniformly in
	the Skorokhod space.
\end{proposition}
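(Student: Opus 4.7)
The plan is to reduce Proposition~\ref{prop:donsker_uncertain} to the already-established Proposition~\ref{prop:donsker} by showing that replacing the true scoring rule $f$ with its estimator $\hat{f}$ produces an asymptotically negligible perturbation of the normalized GATES process. Let $\widetilde{\Psi}_n(p)$ denote the oracle estimator obtained by ordering units according to the true scoring rule $f$. Proposition~\ref{prop:donsker}, applied to $\widetilde{\Psi}_n$, already yields the desired Gaussian process limit on Skorokhod space. It therefore suffices to verify
\[
\sup_{p \in [0,1]} \frac{|p\widehat{\Psi}_n(p) - p\widetilde{\Psi}_n(p)|}{\sqrt{\V(\widehat{\Psi}_n(1))}} \ \xrightarrow{P} \ 0,
\]
after which Slutsky's theorem transfers the Gaussian limit to the polygon $s_n$ built from $\hat{f}$.

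The difference $p\widehat{\Psi}_n(p) - p\widetilde{\Psi}_n(p)$ collects contributions only from units whose rank among the top $\lfloor np\rfloor$ differs between the two orderings. A unit $i$ can be misranked at level $p$ only if its true score $S_i$ lies within a small window around the $(1-p)$-quantile $c(p)=F^{-1}(1-p)$. Since $\sup_{x \in \cX} |\hat{f}(x)-f(x)| = o(n^{-1/2\kappa})$ by Assumption~\ref{asm:cdf_uncertain}, a triangle-inequality argument combined with the Glivenko--Cantelli gap between the empirical and population $(1-p)$-quantiles implies $|S_i - c(p)| \le \delta_n$ for every such unit, where $\delta_n = o(n^{-1/2\kappa})$. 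The margin condition $\P(|S - c(p)| < t) \le c_0 t^\kappa$ then bounds the expected number of misranked units at a fixed $p$ by $n \cdot c_0 \delta_n^\kappa = o(n^{1/2})$. Combined with $\E(\psi_i^2) < \infty$ from Assumption~\ref{asm:moments}, the contribution of these units to the numerator is $o_P(n^{-1/2})$ pointwise, which matches the scaling $\sqrt{\V(\widehat{\Psi}_n(1))} \asymp n^{-1/2}$.

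The main obstacle is upgrading this pointwise estimate to the uniform-in-$p$ control required for Skorokhod convergence, since the set of misranked units varies with $p$ and errors are amplified as $p \to 0$. My plan is to discretize $[0,1]$ on a grid of width $\eta_n$ chosen to balance the misranking error against the tail fluctuations of the oracle process, apply a union bound at the grid points using the finite second moment, and interpolate between grid points via the continuity of $\E[\psi_i \mid F(S_i)=p]$ and $\E[\psi_i \mid \hat{F}(\hat{S}_i)=p]$ from Assumption~\ref{asm:continuity_uncertain}, together with the monotonicity of the partial sums defining $p\widehat{\Psi}_n(p)$ and $p\widetilde{\Psi}_n(p)$ in $p$. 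The delicate tradeoff between the margin exponent $\kappa$ and the rate $o(n^{-1/2\kappa})$ is precisely what guarantees that, even after the union bound and interpolation, the aggregate uniform excess remains $o_P(n^{-1/2})$, completing the reduction to Proposition~\ref{prop:donsker}.
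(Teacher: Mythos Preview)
Your approach differs substantially from the paper's, and the difference is instructive. The paper does \emph{not} compare $\widehat{\Psi}_n$ to an oracle $\widetilde{\Psi}_n$; instead it shows directly that the \emph{bias} $\sup_p\bigl|\E[p\widehat{\Psi}_n(p)]-p\Psi(p)\bigr|=o(n^{-1/2})$ (their Lemma~\ref{lem:mean_uncertain}), using the population-level symmetric-difference bound $A_p\triangle\widehat A_p\subseteq\{x:|f(x)-c(p)|\le 2r_n\}$ together with the margin condition. With the bias controlled, the paper centers the partial sums ordered by $\hat S$ and applies Donsker's invariance principle to them directly. This avoids entirely the need to control a \emph{random} uniform difference: only a deterministic quantity must be bounded uniformly in $p$, which is why the argument is short.

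Your oracle-comparison route is a legitimate alternative strategy, but the execution has gaps. First, the interpolation step invokes ``the monotonicity of the partial sums defining $p\widehat{\Psi}_n(p)$ and $p\widetilde{\Psi}_n(p)$ in $p$,'' which is false: these are partial sums of the signed quantities $\hat\psi_{[n,i]}$, not of nonnegative terms, so monotonicity cannot carry the interpolation. You would instead need a modulus-of-continuity bound for both processes between grid points, and with only a second-moment assumption the union bound over $1/\eta_n$ grid points interacts delicately with Chebyshev-type tails. Second, your misranking bound relies on the empirical quantile $S_{(\lfloor np\rfloor)}$ being within $o(n^{-1/2\kappa})$ of $c(p)$, but Glivenko--Cantelli gives only $O_P(n^{-1/2})$ control on the \emph{CDF} scale; translating that to the quantile scale at rate $o(n^{-1/2\kappa})$ requires more than the margin condition when $\kappa>1$. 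Third, even if the $y$-coordinate difference is handled, the polygon in the statement uses $\V(\widehat{\Psi}_n(\cdot))$ for its $x$-coordinates, not $\V(\widetilde{\Psi}_n(\cdot))$, so Slutsky alone does not finish the job without also establishing $\V(p\widehat{\Psi}_n(p))/\V(p\widetilde{\Psi}_n(p))\to 1$ uniformly. None of these is fatal, but together they make your route noticeably heavier than the paper's bias-only argument.
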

The proof is given in Appendix~\ref{app:donsker_uncertain}.

\section{A Simulation Study}
\label{sec:synthetic}

We conduct a simulation study to examine the finite sample performance
of the proposed methodology.  We utilize one data generating process
from the 2016 Atlantic Causal Inference Conference (ACIC) Competition.
We briefly describe its simulation setting here and refer interested
readers to \citet{dori:etal:19} for additional details.  The focus of
this competition was the inference of the average treatment effect in
observational studies. There are a total of 58 pre-treatment
covariates, including 3 categorical, 5 binary, 27 count data, and 13
continuous variables.  The data were taken from a real-world study
with the sample size $n=4802$.

In our simulation, we assume that the empirical distribution of these
covariates represent the population of interest and obtain each
simulation sample via bootstrap. One important change from the
original competition is that instead of utilizing a propensity model
to determine $T$, we assume that the treatment assignment is
completely randomized, i.e., $\Pr(T_i = 1)=1/2$, and the treatment and
control groups are of equal size, i.e., $n_1=n_0=n/2$.  

To generate the outcome variable, we use one of the data generating
processes used in the competition (the 28th), which is based on the
generalized additive model with polynomial basis functions.  The model
represents a setting, in which there exists a substantial amount of
treatment effect heterogeneity.  For the sake of completeness, the
formula for this outcome model is reproduced here:
{\small
  \begin{align*} \E(Y_i(t) \mid \bX_i)
    &= 1.60+ 0.53\times
      x_{29}-3.80\times x_{29}(x_{29}-0.98)(x_{29}+0.86) -0.32 \times
      \bone\{x_{17}>0\}\\& + 0.21 \times \bone\{x_{42}>0\}-0.63 \times
    x_{27}+4.68 \times \bone\{x_{27}<-0.61\}-0.39 \times (x_{27}+0.91)
    \bone\{x_{27}<-0.91\}\\&+ 0.75 \times \bone\{x_{30}\leq0\}-1.22
    \times \bone\{x_{54}\leq0\}+0.11 \times x_{37}
    \bone\{x_{4}\leq0\}-0.71 \times \bone\{x_{17}\leq0, t=0\}\\&-1.82
    \times \bone\{x_{42}\leq 0,t=1\}+0.28 \times \bone\{x_{30}\leq
    0,t=0\}\\&+\{0.58\times x_{29}-9.42 \times
    x_{29}(x_{29}-0.67)(x_{29}+0.34)\}\times\bone\{t=1\}\\&+(0.44
    \times x_{27}-4.87\times
    \bone\{x_{27}<-0.80\})\times\bone\{t=0\}-2.54 \times \bone\{t=0,
    x_{54}\leq 0\}.
  \end{align*}
} for $t=0,1$ where the error term is distributed according to the
standard normal distribution.

We estimate the CATE and use it as our treatment prioritization score
to identify exceptional responders.  For the estimation of the CATE,
we use Bayesian Additive Regression Trees (BART)
\citep[see][]{chipman2010bart,hill2011bayesian,hahn2020bayesian} and
Causal Forest \citep{wager2018estimation}, and LASSO
\citep{tibs:96}. The models are trained by fitting them to the
population data (i.e., the empirical distribution of the full
sample). For implementation, we use {\sf R
  3.6.3} with the following packages: {\sf bartMachine} (version
1.2.6) for BART, {\sf grf} (version 2.0.1) for Causal Forest, and {\sf
  glmnet} (version 4.1-2) for LASSO.  The number of trees was tuned
through 5-fold cross-validation for BART and Causal Forest. The
regularization parameter was tuned similarly for LASSO.

\begin{table}[t!]
  \centering \footnotesize
  \begin{tabular}{l|...|...|...}
    \hline 
    ML algorithm & \multicolumn{3}{c|}{Uniform} & \multicolumn{3}{c|}{Pointwise}& \multicolumn{3}{c}{Pointwise$\times 1.5$}\\
    & \multicolumn{1}{c}{$\bm{n=100}$} &
               \multicolumn{1}{c}{$\bm{n=500}$} & \multicolumn{1}{c|}{$\bm{n=2500}$} & \multicolumn{1}{c}{$\bm{n=100}$} &
               \multicolumn{1}{c}{$\bm{n=500}$} & \multicolumn{1}{c|}{$\bm{n=2500}$}&
               \multicolumn{1}{c}{$\bm{n=100}$} &
               \multicolumn{1}{c}{$\bm{n=500}$} & \multicolumn{1}{c}{$\bm{n=2500}$}\\
    \hline
    BART &  96.1\%  & 96.0\% &  95.2\%   &  87.2\%  & 76.5\% &  70.3\%&  96.5\%  & 96.7\% &  97.1\% \\
    Causal Forest &  96.0\%  & 95.3\% &  95.7\%  &  83.7\%  & 77.1\% &  71.9\% &  96.2\%  & 97.2\% &  98.0\% \\
    LASSO &  95.8\%  & 95.6\% &  95.6\%  &  84.1\%  & 76.0\% &  69.8\%  &  96.5\%  & 97.4\% &  97.0\% \\\hline
  \end{tabular}
      \spacingset{1}
\caption{Empirical coverage of 95\% confidence bands.  The table
  presents the empirical coverage of uniform 95\% confidence bands as
  presented in Theorem \ref{thm:sorteduniform}, along with its
  comparison to the coverage of the 95\% pointwise confidence bands
  and $50\%$ inflated pointwise confidence bands. Coverage is
  computed as the percentage of experiments in which the confidence
  bands contains the entire GATES curve.} \label{tb:simulation}
\end{table}

We investigate the finite-sample performance of the uniform confidence
bands as defined in Theorem~\ref{thm:sorteduniform}.
Table~\ref{tb:simulation} shows the proportion of simulation
experiments, in which the confidence band contains the entire curve of
$\Psi^\ast(p)$. We find that our uniform confidence bands reach the
nominal coverage for every dataset size down to $n=100$ while the
pointwise confidence bands suffer from a significant undercoverage,
especially as the sample size $n$ grows. Although our estimators are
theoretically conservative, we see little evidence of significant
over-coverage across all algorithms and data sizes. 

Following our observations in Section \ref{subsec:numeric}, we also
show the coverage of the pointwise confidence bands after inflating it
by $50\%$. We observe that this simple linear adjustment to pointwise
confidence bands appear to achieve adequate coverage, suggesting that
this may serve a heuristic but reasonable approximation to the $95\%$
uniform confidence bands.

\begin{figure}[t!]
	\centering \spacingset{1}
	\includegraphics[width=\textwidth]{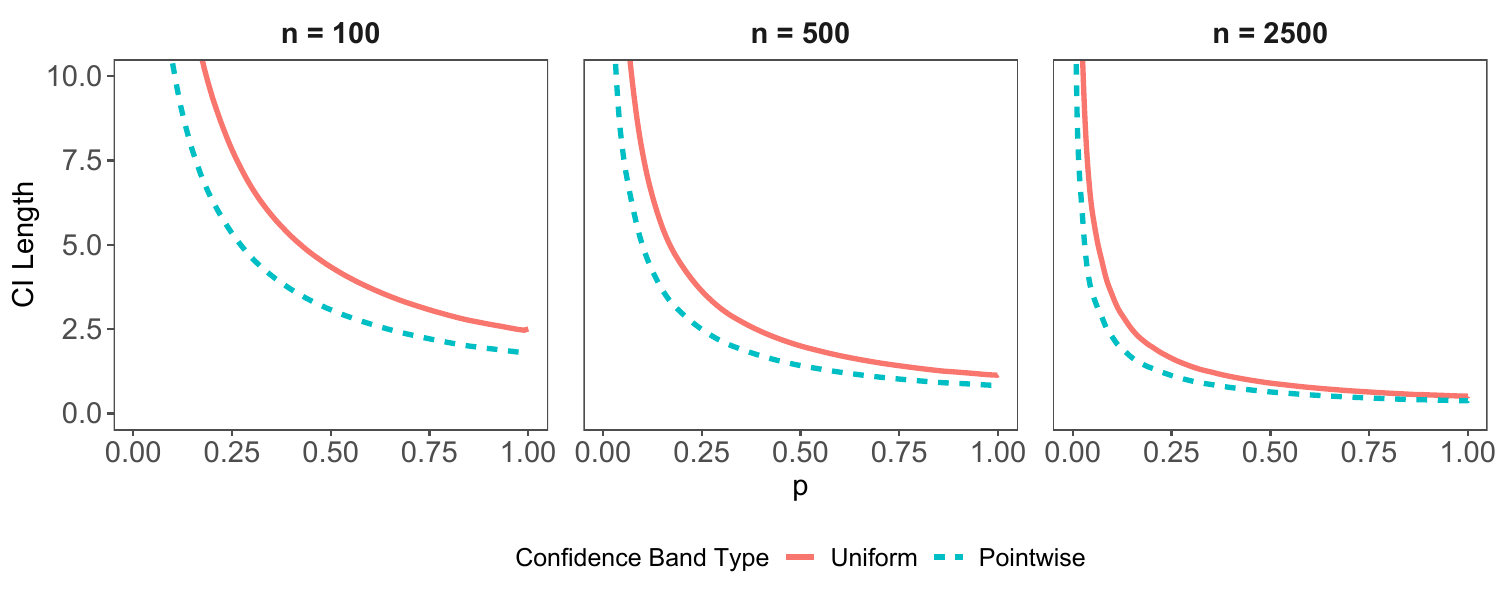}
	\caption{Comparison of the lengths of the 95\% uniform and
          pointwise confidence bands.  The treatment prioritization
          score in this simulation is constructed using LASSO. The
          horizontal axis is the quantile threshold value of the
          score.}
	\label{fig:ci_length_lasso}
\end{figure}

Finally, Figure~\ref{fig:ci_length_lasso} compares the lengths of
these uniform and pointwise confidence intervals where the treatment
prioritization score is estimated using LASSO.  In general, the
uniform confidence interval is about $25-50\%$ wider than the
pointwise confidence intervals across different sample sizes.

\section{Empirical Application}
\label{sec:realworld}

We apply the proposed methodology to the data from a clinical trial of
late-stage prostate cancer. 

\subsection{Setup}

\cite{byar1980choice} present the data from a randomized
placebo-controlled clinical trial to estimate the effect of
diethylstilbestrol on overall survival of patients. Three different
dosage levels (0.2 mg, 1.0mg, or 5.0mg) were used in the trial that
involved 502 patients with stage 3 or 4 prostate cancer. For each
patient, the trial recorded their baseline health characteristics,
common laboratory measurements, prior disease history, and detailed
information on the current prostate cancer. The primary outcome is the
number of months of total survival at the end of follow-up, which may
have occurred at either death or the program completion.

We will focus on the 5.0mg subgroup and consider the binary treatment
of giving either 5.0mg estrogen or placebo for these patients. In
total, 127 patients received the placebo while 125 patients received
the estrogen treatment. The average treatment effect was estimated to
be $-0.3$ months but this estimate was not statistically significant,
suggesting that on average the estrogen treatment do not improve total
survival.

By analyzing this data, we wish to identify a subgroup that may
significantly benefit from the treatment although the overall average
treatment effect estimate was not statistically significant. We
utilize a total of 13 pre-treatment covariates, including age, weight,
height, blood pressure, tumor size and stage, ECG diagnosis, and blood
measurements. The estrogen treatment is denoted as $T_i=1$ whereas the
placebo is represented as $T_i=0$. The outcome variable $Y_i$ is the
total survival.

We randomly selected approximately 40\% of the sample (i.e., 103
observations) as the training data and the reminder of the sample
(i.e., 149 observations) as the evaluation data. We train four machine
learning models: the same three machine learning models as those used
in Section~\ref{sec:synthetic} and the Causal Tree model
\citep{athey2016recursive}, which is included to illustrate how the
proposed methodology can be used to select interpretable subgroups of
exceptional responders.

For Causal Forest, we tuned the number of trees on a grid of
$[500,700,900,1100,1300,1500]$. For BART, tuning was done on the
number of trees on a grid of $[50,70,90,110,130,150]$. For LASSO, we
tuned the regularization parameter on a grid of
$[0.01, 0.05, 0.1, 0.5, 1, 5]$. For the Causal Tree, we tuned the
minimum node size parameter on a grid of $[5,10,20,50]$. All tuning
was done with a 5-fold cross-validation procedure on the training set
with mean-squared error as the validation metric.

\subsection{Findings}

\begin{figure}[t!]
	\centering \spacingset{1}
	\includegraphics[width = \textwidth]{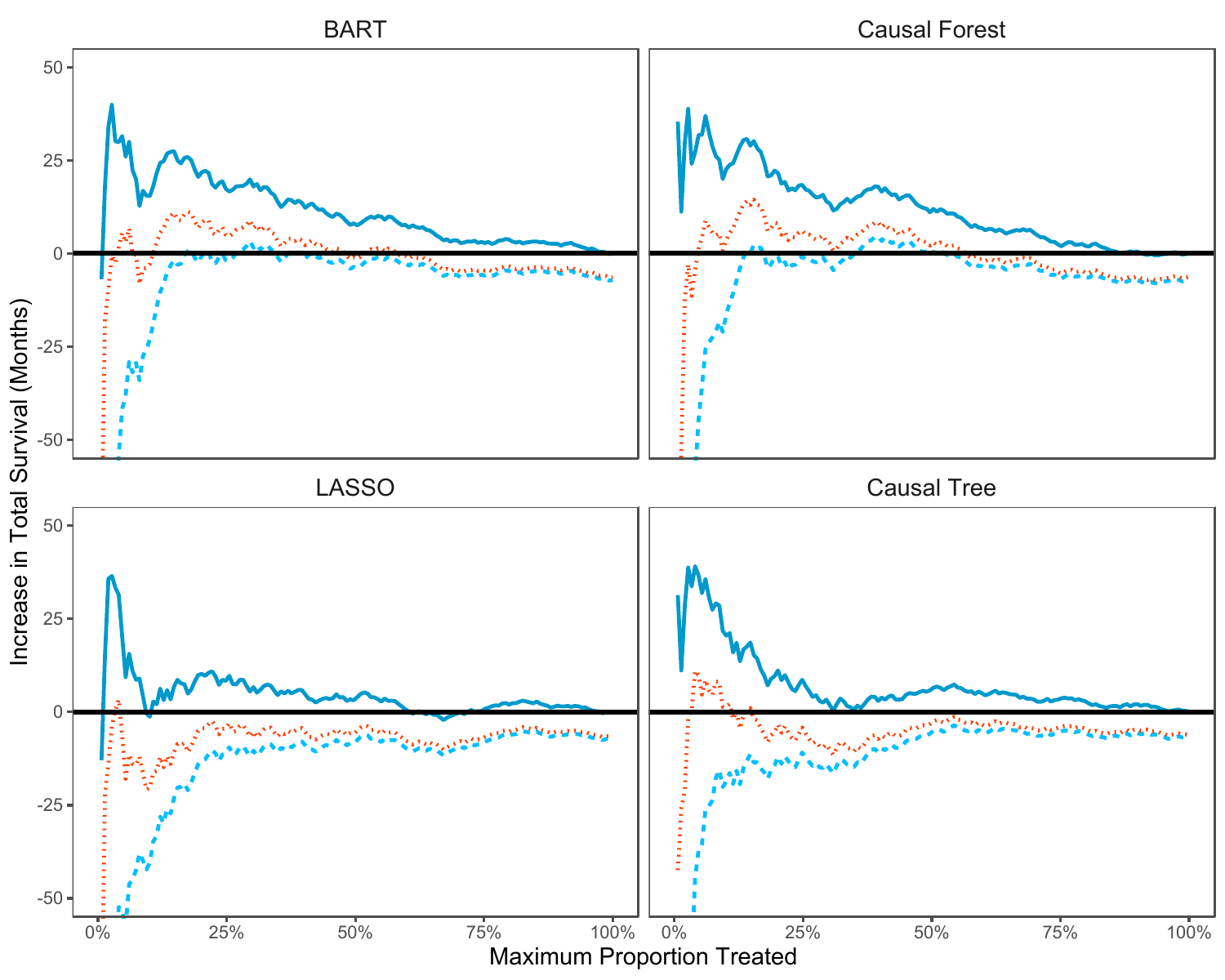}
	\caption{\label{fig:prostate} Evaluation of different machine
          learning algorithms on the randomized clinical trial for
          Diethylstilbestrol. The dark blue solid line represents the
          point estimates of the number of months of the increase in
          total survival when maximally $p\%$ of the population is
          treated with Diethylstilbestrol. The dotted red line
          represents the pointwise $90\%$ confidence interval while
          the light blue dashed line represents the uniform $90\%$
          confidence interval. The black horizontal line represents
          the average treatment effect, which is $-0.3$ months.}
\end{figure}

Figure~\ref{fig:prostate} shows the results.  For each plot, we
compute the point estimates of the GATES (solid blue lines), the 90\%
minimum-area lower confidence bands (blue dotted lines) as shown in
Equation~\eqref{eq:CIoptim}, and 90\% pointwise confidence bands (red
dotted lines).  We find that in this application, BART and especially
Causal Forest appear to be able to identify some patterns of treatment
effect heterogeneity in the population. In contrast, LASSO and Causal
Tree are unable to capture any statistically significant
heterogeneity.  We also find that the minimum-area uniform confidence
bands are close to the pointwise bands for a larger value of $p$ and
only diverge significantly for a small value of $p$.

\begin{table}[t!]
	\centering
	\spacingset{1}
        \begin{tabular}{l|..r}
          \hline
          & \multicolumn{1}{c}{Estimated proportion of} &
                                                          \multicolumn{1}{c}{Estimated}
          & \multicolumn{1}{c}{90\% uniform}\\
          Estimator
          & \multicolumn{1}{c}{exceptional responders} &
                                                         \multicolumn{1}{c}{GATES}&
                                                                                    \multicolumn{1}{c}{confidence band}	\\ \hline
          Causal Forest &  39.6\%  & 17.9 &  (4.57, $\infty$)   \\
          BART &  29.5\%  & 19.9  &  (3.31, $\infty$)  \\
          LASSO & 82.6\%  & 3.04 &  ($-$5.17, $\infty$)    \\
          Causal Tree & 54.3\% &  7.38 & ($-$3.69, $\infty$) \\\hline
        \end{tabular}
	\caption{Estimated proportion of exceptional responders,
          their GATES estimates, and 90\% uniform confidence
          bands. Each row represents the results based on different
          machine learning algorithms used to estimate the treatment
          prioritization score. The confidence intervals are corrected for multiple-testing using the Bonferroni method. The outcome is total survival in
          months.} \label{tb:prostate_experiment}
\end{table}

Table~\ref{tb:prostate_experiment} presents the estimated proportion
of exceptional responders, their GATES estimates, and the
corresponding 90\% uniform confidence bands, for each ML algorithm
used to construct the treatment prioritization score.  The results
show that both BART and Causal Forest appears to be able to identify a
subset of population that would experience a statistically significant
positive effect of the treatment on total survival.  Specifically, we
find that among these subsets of patients, 5mg of estrogen will
increase their total survival by more than 15 months on average.  BART
in particular finds that $29\%$ of the population that would exhibit
an increase of 20 months in total survival on average with the 90\%
uniform lower confidence band of 3.31 months. 

In comparison, the previous best optimization approach
\citep[e.g.][]{bertsimas2019identifying} found a similarly sized group
that could prolong total survival by 18 months on average but without
a statistical guarantee. Although the exact subgroup might vary over
multiple train-test splits, this example demonstrates that our
methodology is able to exploit the power of modern machine learning
algorithms while having a rigorous performance guarantee.

\begin{figure}[t!]
	\centering \spacingset{1}
	\includegraphics[width = 0.7\textwidth,clip, trim={2cm 2.5cm 2cm 2.5cm}]{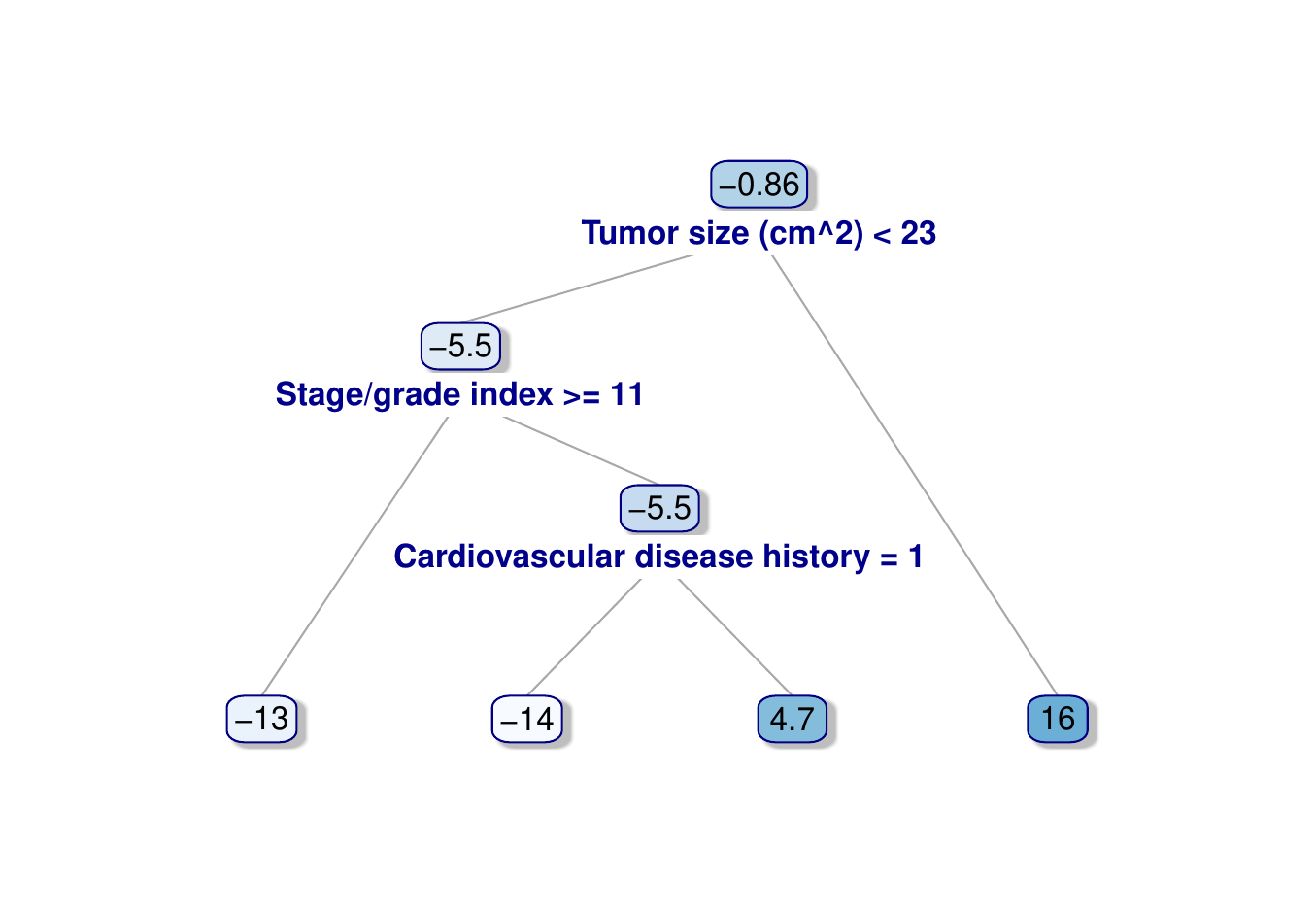}
	\caption{\label{fig:causaltree} The scoring function $\hat{S}$ according to the causal tree model. If the condition at the split is satisfied, the path continues left, and vice versa. The number represents the average treatment effect at the node. }
\end{figure}

In contrast, LASSO and Causal Tree are unable to identify any subgroup
with a statistically significant positive treatment effect. However, a
main advantage of Causal Tree is that it produces an interpretable
decision rule. As shown in Figure~\ref{fig:causaltree}, its scoring
function is a simple function of covariates. Notably, in the selection
of its exceptional responders (of size $54.3\%$ of the original
group), the procedure includes nearly all patients in the two terminal
nodes with the highest estimated average treatment effects. This
yields a transparent description of the exceptional responder
subgroup: patients with large tumors, and patients with low-grade
tumors and no history of cardiovascular disease.

\begin{table}[t!]
  \centering
  \spacingset{1}
  \begin{tabular}{l|.r}
    \hline
Covariates
      &\multicolumn{1}{c}{Average Difference}&
                                         \multicolumn{1}{c}{ One-sided 90\% CI}	\\ \hline
    Tumor Size (cm$^2$)   & 4.2 & $(0.4, \infty)$ \\
    Stage/Grade Index  & 2.1 & $(-0.1, \infty)$ \\
   Weight & -3.7 & $(-\infty, 6.5)$ \\ \hline
  \end{tabular}
  \caption{The difference in average values of covariates between the
    BART-selected subgroup and the overall population.  We report
    their (one-sided) 90\% uniform confidence bands. The confidence
    bands are corrected for multiple-testing using the Bonferroni
    method. } \label{tb:features}
\end{table}

To interpret the results of the other models, we can follow the
strategy described in Section~\ref{subsec:interpretable} and
characterize the average values of various features within the
selected subgroup.  Table~\ref{tb:features} shows the average
difference between selected subgroup and the overall population (using
the mean-adjusted statistic) for three covariates with 90\% uniform
confidence bands.  The results are corrected for multiple testing, and
the BART results are used for illustration.  We find that the average
tumor size of the selected group is 4.2 cm$^2$ larger than the
population average, and this is statistically significant at the
$\alpha=10\%$ level.  Similarly, we find that the stage/grade index, a
combination index of the tumor stage and histologic grade, is larger,
and the weight of the subgroup is lower than the population, though
both differences are not statistically significant. This procedure
allows an interpretable characterization of the subgroup that could
lead to better understanding of the mechanisms behind exceptional
responders.

\section{Concluding Remarks}
\label{sec:conclude}

As the use of machine learning tools become widespread in high-stake
decision making, it is essential to deploy them effectively and
safely.  Doing so requires the accurate quantification of statistical
uncertainty and the robust evaluation of their empirical performance.
The proposed methodology provides a statistical performance guarantee
when selecting an optimal subset of a target population who are
predicted to benefit most from a treatment of interest.  We believe
that our methodology is particularly useful when the evaluation sample
is of limited size, machine learning models provide biased or noisy
estimates of treatment efficacy, and/or the proportion of exceptional
responders is relatively small.  In future research, we plan to extend
the proposed methodology to other settings, including non-binary and
dynamic treatments.  It is also of interest to consider the question
of how to account for the statistical uncertainty that arises from the
estimation of treatment prioritization score, which is taken as given
in our proposed framework.

\newpage
\pdfbookmark[1]{References}{References}
\bibliography{sample,my,imai}

\clearpage
\appendix
\spacingset{1}

\setcounter{table}{0}
\renewcommand{\thetable}{S\arabic{table}}
\setcounter{figure}{0}
\renewcommand{\thefigure}{S\arabic{figure}}
\setcounter{equation}{0}
\renewcommand{\theequation}{S\arabic{equation}}
\setcounter{theorem}{0}
\renewcommand {\thetheorem} {S\arabic{theorem}}
\setcounter{section}{0}
\renewcommand {\thesection} {S\arabic{section}}
\setcounter{lemma}{0}
\renewcommand {\thelemma} {S\arabic{lemma}}
\setcounter{proposition}{0}
\renewcommand {\theproposition} {S\arabic{proposition}}

\begin{center}
	\LARGE {\bf Supplementary Appendix}
\end{center}

\section{Proof of Proposition~\ref{prop:donsker}}
\label{app:donsker}

We begin by proving two lemmas and stating Donsker's Invariance
Principle, which we use to prove Proposition~\ref{prop:donsker}.
\begin{lemma}\label{lem:mean} Under
	Assumptions~\ref{asm:randomsample}--\ref{asm:continuity} and for some
	$\epsilon>0$,
	$$\lim_{n \to \infty} \sup_{p \in [0,1]} n^{1/2 +
		\epsilon}\left(\E[p\widehat{\Psi}_n(p)]-p\Psi(p)\right)=0.$$
\end{lemma}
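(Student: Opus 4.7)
The plan is to derive an exact integral representation of $\E[p\widehat{\Psi}_n(p)]$ and compare it to $p\Psi(p)$, using binomial concentration together with the regularity of $g(u) := \E[\psi_i \mid F(S_i)=u]$. First, under complete randomization (Assumption~\ref{asm:comrand}), the tower property yields $\E[\hat\psi_i \mid \bX_i, Y_i(0), Y_i(1)] = \psi_i$; since the ordering $[n,i]$ is measurable with respect to $(\bX_j)_{j=1}^n$,
\[ \E[p\widehat{\Psi}_n(p)] \;=\; \E\!\left[\frac{1}{n}\sum_{i=1}^{\lfloor np\rfloor}\psi_{[n,i]}\right]. \]
By i.i.d.\ exchangeability this equals $\E[\psi_1 \bone\{R_1 \le \lfloor np\rfloor\}]$, where $R_1$ is the descending rank of $S_1$. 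Continuity of $F$ (Assumption~\ref{asm:cdf}) makes $U_1 := F(S_1)$ uniform on $[0,1]$, and conditional on $U_1 = u$ we have $R_1 - 1 \sim \mathrm{Bin}(n-1, 1-u)$, so
\[ \E[p\widehat{\Psi}_n(p)] \;=\; \int_0^1 g(u)\, B_n(u;p)\,du, \qquad B_n(u;p) := \Pr\!\bigl(\mathrm{Bin}(n-1, 1-u) \le \lfloor np\rfloor - 1\bigr). \]
Since $p\Psi(p) = \int_{1-p}^1 g(u)\,du$, the bias becomes $\Delta_n(p) = \int_0^1 g(u)\bigl[B_n(u;p) - \bone\{u \ge 1-p\}\bigr]\,du$.

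The key observation is that the kernel $B_n(u;p) - \bone\{u\ge 1-p\}$ has a near-vanishing integral. A Beta-function computation, using $\binom{n-1}{k}\int_0^1 u^{n-1-k}(1-u)^k\,du = 1/n$, yields the exact identity $\int_0^1 B_n(u;p)\,du = \lfloor np\rfloor / n$, so $\int_0^1 [B_n(u;p) - \bone\{u\ge 1-p\}]\,du = \lfloor np\rfloor/n - p$, which is $O(1/n)$ uniformly in $p$. Decomposing $g(u) = g(1-p) + [g(u) - g(1-p)]$, the constant piece contributes $g(1-p)\cdot O(1/n) = O(1/n)$ uniformly, since continuity on the compact $[0,1]$ makes $g$ bounded.

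For the remainder $\int [g(u) - g(1-p)]B_n(u;p)\,du$, Hoeffding's inequality shows that $B_n(u;p) - \bone\{u \ge 1-p\}$ is polynomially small in $n$ outside a transition window $|u - (1-p)| \gtrsim \sqrt{\log n / n}$. Inside the window, $|g(u) - g(1-p)|$ is bounded by the modulus of continuity $\omega_g(\sqrt{\log n/n})$, and the $L^1$ mass of the kernel over the window is $O(\sqrt{\log n /n})$. Combining, $|\Delta_n(p)| = O(1/n) + O\bigl(\omega_g(\sqrt{\log n/n})\cdot \sqrt{\log n / n}\bigr)$ uniformly in $p$. The endpoints $p \in \{0,1\}$ give exact zero bias, and uniformity in $p$ is preserved because $g$ is uniformly continuous on $[0,1]$ and the Hoeffding bound is uniform.

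The main obstacle is attaining the rate $o(n^{-1/2-\epsilon})$: pure uniform continuity of $g$ yields only $o(n^{-1/2})$. The argument therefore requires a H\"older-type decay $\omega_g(\delta) = O(\delta^\alpha)$ for some $\alpha > 0$, from which $\epsilon$ may be taken to be any value in $(0, \alpha/2)$. Such mild strengthening of Assumption~\ref{asm:continuity} is the one non-trivial regularity input; once granted, the rest of the proof reduces to the Beta-integral cancellation and Hoeffding bookkeeping described above.
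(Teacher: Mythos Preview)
Your approach differs from the paper's. The paper invokes a bias bound from an earlier work (Theorem~1 of Imai and Li, 2021) to write
\[
\bigl|p\,\E\widehat\Psi_n(p)-p\,\Psi(p)\bigr| \;\le\; \left|\E\!\left[\int_{F(c(p))}^{F(\hat c_n(p))} g(x)\,\d x\right]\right|,
\]
observes that $F(\hat c_n(p))$ is a Beta order statistic with mean $n(1-p)/(n+1)$, and Taylor-expands the integral around $1-p$ to claim an $O(1/n)$ bound. Your route is more self-contained: you derive the exact representation $\E[p\widehat\Psi_n(p)]=\int_0^1 g(u)\,B_n(u;p)\,\d u$ via the descending rank of $S_1$ and its binomial conditional law, then exploit the Beta-integral identity $\int_0^1 B_n(u;p)\,\d u=\lfloor np\rfloor/n$ to isolate the $O(1/n)$ centering before controlling the remainder with Hoeffding. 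Both arguments rest on the same mechanism---concentration of the empirical $(1-p)$-quantile of $S$ around $1-p$---but yours makes the cancellation explicit and does not rely on an outside result.

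On the rate: you are right that continuity of $g$ alone (Assumption~\ref{asm:continuity}) delivers only $o(n^{-1/2})$, not $o(n^{-1/2-\epsilon})$. The paper's own proof has the same issue: after the Taylor step it asserts that the remainder $\E[o(F(\hat c_n(p))-F(c(p)))]$ is $o(1/n)$, but since $|F(\hat c_n(p))-(1-p)|$ has typical size $n^{-1/2}$, mere continuity yields at best $\E\bigl[|V-(1-p)|\,\omega_g(|V-(1-p)|)\bigr]=o(n^{-1/2})$. Your proposed H\"older strengthening $\omega_g(\delta)=O(\delta^\alpha)$ is exactly what is needed to recover the stated $\epsilon>0$. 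That said, where the lemma is actually invoked (the proof of Proposition~\ref{prop:donsker}) only the weaker conclusion $\sup_p|p\,\E\widehat\Psi_n(p)-p\,\Psi(p)|=o(n^{-1/2})$ is used, so the discrepancy does not propagate to the paper's main results.
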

\begin{proof}
	We bound the bias of $\E(\widehat{\Psi}_n(p))$ by appealing to
	Theorem~1 of \cite{imai2021experimental}, which implies,
	\begin{align}
		|p\E(\widehat{\Psi}_n(p))-p\Psi(p)| \leq &
		\left|\E\left[\int_{F(c(p))}^{F(\hat{c}_n(p))}
		\E(\hat\psi_i \mid
		S_i=F^{-1}(x))\d
		x\right]\right| \label{eq:boundtauk}
	\end{align}
	where $\hat{c}_{n}(p) \ = \ \inf \{c \in \mathbb{R}: \sum_{i=1}^n
	\mathbf{1}\{S_i<c\} \geq n(1-p)\}$ denote the estimated $p$-quantile and $c(p)= F^{-1}(1-p)$ is the true $p$-quantile. By the definition of $\hat{c}_n(p)$, $F(\hat{c}_n(p))$ is the $n(1-p)$th
	order statistic of $n$ independent uniform random
	variables. Therefore, it follows the Beta distribution with the
	shape and scale parameters equal to $n(1-p)$ and $n+1-n(1-p)$, respectively.
	Now, by Assumption~\ref{asm:continuity}, we can compute the
	first-order Taylor expansion of
	$\int^x_a \E(\hat\psi_i \mid S_i=F^{-1}(x)) \d x$ and thus:
	\begin{align*}
		\sup_{p \in [0,1]}|p\E(\widehat{\Psi}_n(p))-p\Psi(p)| \leq
		& \sup_{p \in [0,1]} \left|\E\left[\sup_{p \in [0,1]}\left|\E(\hat\psi_i
		\mid S_i=F^{-1}(1-p))\right| \{F(\hat{c}_n(p))-F(c(p))\}\right.\right.\\
		& \biggl. \biggl. \hspace{1in} +o(F(\hat{c}_n(p))-F(c(p))) \biggr] \biggr| \\
		= & \sup_{p \in [0,1]} \left|\E(\hat\psi_i \mid S_i=F^{-1}(1-p))\right|\left|\frac{n(1-p)}{n+1}-(1-p)\right| +o\left(\frac{1}{n}\right)\\
		= & O\left(\frac{1}{n}\right).
	\end{align*}
	In particular, we thus have, for any $0<\epsilon<\frac{1}{2}$:
	\begin{equation*}
		\lim_{n \to \infty} n^{\frac{1}{2} + \epsilon}\sup_{p \in [0,1]} \left(\E[p\widehat{\Psi}_n(p)]-p\Psi(p)\right)=0.
	\end{equation*}
\end{proof}
\begin{lemma}\label{lem:residual}
	Under Assumptions~\ref{asm:randomsample}--\ref{asm:continuity}, we
	have for some $\epsilon>0$:
	\begin{equation*}
		\lim_{n \to \infty} n^{1/2 + \epsilon} \sup_{p \in [0,1]} \int^{1-p}_{\frac{\lfloor n(1-p) \rfloor}{n}} \E[\hat\psi_i \mid F(S_i) = x] \; \d x = 0
	\end{equation*}
\end{lemma}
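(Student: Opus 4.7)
The plan is to bound the integral pointwise by exploiting two facts: the integration interval is short (width at most $1/n$), and the integrand is uniformly bounded on $[0,1]$. Together these give $O(1/n)$ uniformly in $p$, which easily beats $n^{-1/2-\epsilon}$ for any $\epsilon \in (0, 1/2)$.

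The first step is to identify the integrand with the conditional CATE. Under Assumption~\ref{asm:comrand}, $T_i$ is marginally independent of $(\bX_i, Y_i(1), Y_i(0))$ with $\P(T_i=1) = n_1/n$, so
\begin{equation*}
    \E[\hat\psi_i \mid F(S_i) = x] \ = \ \frac{n}{n_1}\E[T_i Y_i(1) \mid F(S_i)=x] - \frac{n}{n_0}\E[(1-T_i) Y_i(0) \mid F(S_i)=x] \ = \ \E[\psi_i \mid F(S_i) = x].
\end{equation*}
Assumption~\ref{asm:continuity} then makes this a continuous function on the compact interval $[0,1]$, and hence there exists $M < \infty$ such that $|\E[\hat\psi_i \mid F(S_i) = x]| \le M$ for every $x \in [0,1]$.

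The second step is to bound the length of the integration interval. Since $\lfloor n(1-p)\rfloor/n \in ((1-p) - 1/n, 1-p]$ for every $p \in [0,1]$, we have $|1-p - \lfloor n(1-p)\rfloor/n| \le 1/n$. Combining the two bounds,
\begin{equation*}
    \sup_{p \in [0,1]} \left| \int^{1-p}_{\lfloor n(1-p)\rfloor/n} \E[\hat\psi_i \mid F(S_i) = x] \, \d x \right| \ \le \ \frac{M}{n}.
\end{equation*}
Multiplying by $n^{1/2+\epsilon}$ yields $M n^{-1/2+\epsilon}$, which tends to $0$ for any $\epsilon \in (0, 1/2)$.

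The only conceptual obstacle is verifying the identity $\E[\hat\psi_i \mid F(S_i)=x] = \E[\psi_i \mid F(S_i)=x]$ in the first step; once that is in place, Assumption~\ref{asm:continuity} and compactness of $[0,1]$ make the rest of the argument routine. No delicate concentration or rate calculation is needed, because the integrand is absolutely bounded rather than merely bounded in moments.
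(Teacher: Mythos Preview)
Your proof is correct and follows essentially the same route as the paper: bound the integration interval by $1/n$ via the floor function, bound the integrand by its supremum (finite by Assumption~\ref{asm:continuity} and compactness of $[0,1]$), and conclude the product is $O(1/n)$. Your explicit verification that $\E[\hat\psi_i \mid F(S_i)=x]=\E[\psi_i \mid F(S_i)=x]$ via Assumption~\ref{asm:comrand} is a useful clarification that the paper leaves implicit when invoking Assumption~\ref{asm:continuity} on the integrand.
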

\begin{proof}
	By definition of the floor function, we have that $(1-p) - \frac{\lfloor n(1-p) \rfloor}{n} \leq \frac{1}{n}$. Therefore for some $\epsilon <\frac{1}{2}$:
	\begin{equation*}
		\lim_{n \to \infty} n^{1/2 + \epsilon} \sup_{p \in [0,1]}
		\int^{1-p}_{\frac{\lfloor n(1-p) \rfloor}{n}} \E[\hat\psi_i \mid F(S_i) = x] dx
		\ \leq \ \lim_{n \to \infty} n^{-1/2 + \epsilon} \sup_{p \in [0,1]}
		\E[\hat\psi_i \mid F(S_i) =1- p]
	\end{equation*}
	By Assumption \ref{asm:continuity}, we know that
	$\sup_{p \in [0,1]} \E[\hat\psi_i \mid F(S_i)=p]$ is
	finite. Therefore, the right handside of the above equation is zero,
	implying the desired result.
\end{proof}

\begin{theorem}[Donsker's Invariance Principle]\label{thm:donsker}
	Consider a double sequence of random variables $U_{i,j}$ with
	$\E[U_{i,j}] =0$ for all $i,j \in \{1,2,\ldots,n\}$. Assume that
	$U_{n,i}$ satisfies the metric entropy integrability condition in
	$L_2$ as defined in Equation~(3.1) of
	\cite{ossiander1987central}. Then, define the quantity:
	\begin{equation*}
		S_n(t) = \frac{1}{n} \sum_{i=1}^{\lfloor nt \rfloor} U_{n, i}
	\end{equation*}
	where $t \in [0,1]$ and $S_n(0)=0$. Construct the random continuous
	polygon $s_n$ with basic points:
	\begin{equation*}
		b_i \ = \ \left(\frac{\V(S_n(i/n))}{\V(S_n(1))},\frac{S_n(i/n)}{\sqrt{\V(S_n(1))}} \right),
	\end{equation*}
	where $i \in \{1,\cdots, n\}$. Then, as the sample size $n$
	goes to infinity, $s_n$ converges in distribution to $G$
	uniformly in the Skorokhod space, where $G$ is a zero-mean
	Gaussian process with covariance function
	$$ K(p,p^\prime)=\lim_{n\to \infty}
	\Cov\left(\frac{S_n(\lfloor pn \rfloor
		/n)}{\sqrt{\V(S_n(\lfloor pn \rfloor /n))}},
	\frac{S_n(\lfloor p^\prime n \rfloor
		/n)}{\sqrt{\V(S_n(\lfloor p^\prime n \rfloor
			/n))}}\right),$$ in particular $K(p,p)= p$. Additionally, when
	$\Cov(U_{n,i}, U_{n,j})=0$ for all $n$ and $1\leq i,j\leq n$,
	then $G$ is the standard Wiener process.
\end{theorem}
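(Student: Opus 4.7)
The plan is to follow the standard two-step strategy for functional central limit theorems: establish convergence of finite-dimensional distributions and then tightness of the sequence $\{s_n\}$ in the Skorokhod space $D[0,1]$, then combine them via Prokhorov's theorem.

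For finite-dimensional convergence, I would fix any grid $0 \leq t_1 < \cdots < t_k \leq 1$ and show that the random vector $(s_n(t_1), \ldots, s_n(t_k))$ converges in distribution to a Gaussian vector with covariance matrix $(K(t_i, t_j))_{i,j}$. Because $s_n$ is the piecewise-linear interpolation of normalized partial sums between the prescribed basic points $b_i$, it suffices to prove joint convergence along the lattice of basic points themselves. Writing each $S_n(i/n)$ as a linear functional of the mean-zero triangular array $\{U_{n,i}\}$, I would invoke a multivariate Lindeberg-Feller CLT in the independent regime (Borovkov), or, in Ossiander's correlated setting, a blocking / martingale-difference version whose limiting covariance is exactly $K$ by construction.

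The main obstacle is tightness in $D[0,1]$, which is what the metric entropy integrability hypothesis is designed to handle. By Billingsley's modulus-of-continuity criterion, tightness reduces to showing that for every $\eta, \epsilon > 0$ there exists $\delta > 0$ with $\P\bigl(\sup_{|s-t| \leq \delta} |s_n(t) - s_n(s)| > \eta\bigr) < \epsilon$ uniformly in $n$. To control this modulus I would equip the index set with the pseudo-metric $\rho_n(s,t) = \sqrt{\V(s_n(t) - s_n(s))}$ and apply a Dudley-type chaining argument: the maximal fluctuation is bounded by an integral of the form $\int_0^{\mathrm{diam}} \sqrt{\log N(\epsilon, \rho_n)}\, d\epsilon$, and the $L_2$ metric entropy integrability condition of Ossiander is precisely what makes this chaining integral finite uniformly in $n$. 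Markov's inequality then delivers the required tail bound. This chaining step is the technically heaviest ingredient; in the i.i.d.\ case it can instead be replaced by truncation plus a direct Rosenthal-type fourth-moment estimate on the increments, as in Borovkov.

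With fidi convergence and tightness in hand, Prokhorov's theorem yields $s_n \Rightarrow G$ in the Skorokhod topology for the centered Gaussian process $G$ with covariance $K$. The variance-rescaled abscissae $\V(S_n(i/n))/\V(S_n(1))$ built into the basic points reparametrize time so that on the limit scale $K(p,p) = p$. For the final claim, when $\Cov(U_{n,i}, U_{n,j}) = 0$ for $i \neq j$, the variance $\V(S_n(\cdot))$ becomes additive in the summands, so $\Cov(S_n(s), S_n(t)) = \V(S_n(s \wedge t))$, and after the abscissa rescaling the covariance kernel collapses to $K(s,t) = s \wedge t$, which together with sample-path continuity and Gaussianity characterizes the standard Wiener process.
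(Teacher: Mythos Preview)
The paper does not supply its own proof of this theorem: it merely states the result ``for completeness'' and refers the reader to Borovkov (1983) for the independent case and Ossiander (1987) for the correlated case. Your outline --- finite-dimensional convergence via a Lindeberg-type CLT plus tightness via Dudley chaining under the $L_2$ bracketing entropy condition, combined through Prokhorov --- is precisely the standard route those references take, so there is nothing to compare against in the paper itself. Your sketch is correct at the level of strategy; the only caveat is that Ossiander's condition is a \emph{bracketing} entropy condition rather than a bare covering-number condition, so the chaining you describe should be the bracketing version (controlling fluctuations via bracketing pairs rather than $\epsilon$-nets), but this is a minor refinement of wording rather than a gap in the argument.
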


Finally, using the above results, we prove
Proposition~\ref{prop:donsker}.  Define for all
$i \in \{1,\cdots, n\}$:
\begin{equation*}
	\Delta\Psi_{n,i} = \frac{i}{n} \E\l[\psi_i \ \biggl | \ F(S_i) \geq
	\frac{i}{n}\r] - \frac{i-1}{n} \E\l[\psi_i \ \biggl | \ F(S_i) \geq
	\frac{i-1}{n}\r].
\end{equation*}
Note that:
\begin{equation*}
	p\widehat{\Psi}_n(p) = \frac{1}{n} \sum_{i=1}^{\lfloor np \rfloor}
	\hat\psi_{[n,i]}, \qquad p\Psi(p) = \frac{1}{n} \sum_{i=1}^{\lfloor np
		\rfloor} \Delta\Psi_{n,i} + \int^{1-p}_{\frac{\lfloor n(1-p)
			\rfloor}{n}} \E[\hat\psi_i \mid F(S_i) = x] \; \d x
\end{equation*}
By Lemmas~\ref{lem:mean}~and~\ref{lem:residual}, we can construct a
new random variable $U_{n,i}$ with $\E(U_{n,i})=0$ for some
fixed quantity $\epsilon_{n,i}>0$ in the following manner,
\begin{align}
	U_{n,i} &= \Delta\Psi_{n,i} - \hat\psi_{[n,i]} - \epsilon_{n,i},
	\quad \text{where} \
	\sup_{t \in [0,1]} \frac{1}{n} \sum_{i=1}^{\lfloor nt\rfloor} \epsilon_{n,i} = o(n^{-1/2}). \label{eq:residual}
\end{align}
By Assumption \ref{asm:moments} $\psi_{n,i}$ have finite second moments, so the induced order statistics $\psi_{[n,i]}$ satisfy the metric entropy
integrability condition on $L_2$ \citep{davydov2000functional}. Therefore we can apply
Theorem~\ref{thm:donsker} to $U_{n,i}$ by considering a random
continuous polygon $\tilde{s}_n$ connected by the basic points:
\begin{equation*}
	\tilde{b}_i \ = \left(\frac{\V(\frac{i}{n}\widehat{\Psi}_n(\frac{i}{n}))}{\V(\hat{\Psi}_n(1))},\frac{\frac{i}{n}\Psi(\frac{i}{n}) - \frac{i}{n}\widehat{\Psi}_n(\frac{i}{n}) - \frac{1}{n} \sum_{i^\prime=1}^i \epsilon_{n,i^\prime}}{\sqrt{\V(\widehat{\Psi}_n(1))}} \right),
\end{equation*}
implying that $\tilde{s}_n$ converges to a zero-mean Gaussian
Process. Equation~\eqref{eq:residual} implies
$\frac{\frac{1}{n} \sum_{i=1}^{\lfloor nt\rfloor}
	\epsilon_{n,i}}{\sqrt{\V(\Psi(1))}} \to 0$ uniformly in $t$.  Thus,
the random continuous polygon $s_n$ connected by the basic points:
\begin{equation*}
	\left(\frac{\V(\frac{i}{n}\widehat{\Psi}_n(\frac{i}{n}))}{\V(\widehat{\Psi}_n(1))},\frac{ \frac{i}{n}\Psi(\frac{i}{n}) - \frac{i}{n}\widehat{\Psi}_n(\frac{i}{n})}{\sqrt{\V(\widehat{\Psi}_n(1))}} \right),
\end{equation*}
converges to a zero-mean Gaussian process $G$. \qed

\section{Proof of Proposition~\ref{prop:poscor}}
\label{app:poscor}

\begin{proof}
  Define $\psi(s) = \E(\psi_i \mid \hat{S}_i = s)$. Then, we have,
  \begin{align*}
    &\Cov(\hat\psi_{[n,i]}, \hat\psi_{[n,j]})
    \\ = \ &\Cov(\E[\hat\psi_{[n,i]} \mid \hat S_{[n,i]}], \E[\hat\psi_{[n,j]} \mid
         \hat S_{[n,j]}]) + \E[\Cov(\hat\psi_{[n,i]}, \hat\psi_{[n,j]} \mid
         \hat S_{[n,i]}, \hat S_{[n,j]})]\\ = \  &\Cov(\psi(\hat S_{[n,i]}),
                                    \psi(\hat S_{[n,j]}))
  \end{align*}
  where $\hat S_{[n,i]}$ is the $i$th order statistic of $\hat S_i$ given the
  sample of size $n$.  We follow the proof strategy of
  \cite[][Theorem~1]{cuadras2002covariance} by defining
  $(\hat S_{[n,i]}', \hat S_{[n,j]}')$ as random variables that are
  independently and identically distributed as
  $(\hat S_{[n,i]}, \hat S_{[n,j]})$. Then we have:
  \begin{align*}
    \Cov(\psi(\hat S_{[n,i]}),
    \psi(\hat S_{[n,j]}))            &= \E[\psi(\hat S_{[n,i]})\psi(\hat S_{[n,j]})]
                                  -
                                  \E[\psi(\hat S_{[n,i]})]\E[\psi(\hat S_{[n,j]})]\\&
    = \frac{1}{2} \E[(\psi(\hat S_{[n,i]}) -
    \psi(\hat S_{[n,i]}'))(\psi(\hat S_{[n,j]}) - \psi(\hat S_{[n,j]}'))]
\end{align*}
Since
$\int_{-\infty}^{\infty} \mathbbm{1}(a \le \hat S_{[n,i]}) - \mathbbm{1}(a
\le \hat S_{[n,i]}') \d \psi(a) = \psi(\hat S_{[n,i]}) - \psi(\hat S_{[n,i]}')$, we
have:
\begin{align*}
 & \Cov(\psi(\hat S_{[n,i]}),
  \psi(\hat S_{[n,j]}))  \\ = &   \frac{1}{2} \E\left[\int_{R^2} \l\{\mathbbm{1}(a
    \le \hat S_{[n,i]}) - \mathbbm{1}(a \le
    \hat S_{[n,i]}')\r\}\l\{\mathbbm{1}(b \le \hat S_{[n,j]}) -
    \mathbbm{1}(b \le \hat S_{[n,j]}')\r\} \d \psi(a) \d
    \psi(b) \right].
\end{align*}
Under the assumptions, $\psi(\cdot)$ is a continuous function and
$\psi(\hat S_{[n,i]})$ has a finite second moment.  Therefore, we can apply
Fubini's theorem to exchange integration signs:
\begin{align*}
  \Cov(\psi(\hat S_{[n,i]}),
  \psi(\hat S_{[n,j]}))
 = & \int_{R^2} 1-\P(\hat S_{[n,i]}<a)\P(\hat S_{[n,j]}<b) +
                      \P(\hat S_{[n,i]}<a,\hat S_{[n,j]}<b)  \\
   & \quad - (1-\P(\hat S_{[n,i]}<a))(1-\P(\hat S_{[n,j]}<b)) \d \psi(a) \d \psi(b)\\
   = & \int_{R^2}
  \P(\hat S_{[n,i]}<a,\hat S_{[n,j]}<b) - \P(\hat S_{[n,i]}<a)\P(\hat S_{[n,j]}<b) \d
  \psi(a) \d \psi(b).
  \end{align*}
  Since $\hat S_{[n,i]}$ and $\hat S_{[n,j]}$ are order statistics, by
  Theorem~3.4 of \cite{boland1996bivariate}, we know that
  $\P(\hat S_{[n,i]}<a,\hat S_{[n,j]}<b) \geq \P(\hat S_{[n,i]}<a)\P(\hat S_{[n,j]}<b)$
  for all $a,b$.  Thus, the desired conclusion holds.
\end{proof}

\section{Proof of Theorem \ref{thm:sorteduniform}}
\label{app:sorteduniform}

To prove Theorem~\ref{thm:sorteduniform}, we first state Slepian's
lemma.
\begin{lemma}[\cite{slepian1962one}] \label{lem:slepian}
  For Gaussian random variables $X=(X_1,\cdots, X_n)$ and $Y=(Y_1,\cdots, Y_n)$ in $\mc{R}^n$ satisfying $\E[X_i]=\E[Y_i]=0$, $\E[X_i^2]=\E[Y_i^2]$ and $\E[X_iX_j]\leq \E[Y_iY_j]$ for all $i,j = 1,\cdots,n$, we have for any real numbers $u_1,\cdots u_n$:
  \begin{equation*}
    \P\left[\bigcap_{i=1}^n \{X_i \geq u_i\}\right]\leq \P\left[\bigcap_{i=1}^n \{Y_i \geq u_i\}\right].
  \end{equation*}
\end{lemma}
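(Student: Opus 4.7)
The plan is to derive the uniform confidence band by passing through the limiting Gaussian process from Proposition~\ref{prop:donsker}, dominating it by a standard Wiener process via Slepian's lemma, and then translating the Wiener-band constraint back into the original parameterization. Concretely, I would first rewrite the event whose probability we want to control. Multiplying both sides of $\Psi(p) \ge \widehat\Psi_n(p) - \beta_0^\ast p^{-1}\sqrt{\V(\widehat\Psi_n(1))} - \beta_1^\ast\sqrt{\V(\widehat\Psi_n(p))}$ by $p$ and dividing by $\sqrt{\V(\widehat\Psi_n(1))}$, the centered quantity on the left becomes exactly the second coordinate of the basic point $b_i$ in Proposition~\ref{prop:donsker}, and the right side becomes $\beta_0^\ast + \beta_1^\ast \sqrt{t(p)}$, where $t(p) := \V(p\widehat\Psi_n(p))/\V(\widehat\Psi_n(1))$ is exactly the first coordinate of $b_i$. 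So the event of interest is, asymptotically, $\{G(t(p)) \le \beta_0^\ast + \beta_1^\ast\sqrt{t(p)} \text{ for all } p \in [0,1]\}$, using symmetry of the zero-mean Gaussian process $G$ about the origin.

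Next I would argue that as $p$ sweeps $[0,1]$ the image $t(p)$ sweeps $[0,1]$: clearly $t(1)=1$, $t(p)\to 0$ as $p\to 0$ (the variance of a partial sum of order $\lfloor np\rfloor$ terms divided by $n$ vanishes), and $t(\cdot)$ is continuous under Assumption~\ref{asm:continuity}. Therefore uniform control over $p$ is equivalent to uniform control over $t \in [0,1]$ for the limit process $G$. The covariance of $G$ satisfies $K(t,t)=t$ by Proposition~\ref{prop:donsker}, and for $s<t$ one has $\E[G(s)G(t)] = s + \mathrm{Cov}(G(s),G(t)-G(s))$; Proposition~\ref{prop:poscor} implies the increment covariance is nonnegative (a short calculation converting sums of sorted estimators back into their defining $\hat\psi_{[n,i]}$ gives this at the pre-limit level and it passes to the limit), so $\E[G(s)G(t)] \ge \min(s,t) = \E[W(s)W(t)]$.

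With matching variances and dominated covariances in hand, the key step is Slepian's lemma. I would apply Lemma~\ref{lem:slepian} to $(-G(t_1),\dots,-G(t_m))$ versus $(-W(t_1),\dots,-W(t_m))$ on any finite grid $\{t_1,\dots,t_m\}\subset[0,1]$ with thresholds $u_j=-(\beta_0^\ast+\beta_1^\ast\sqrt{t_j})$ to obtain
\begin{equation*}
\P\bigl(\,\forall j:\, G(t_j)\le\beta_0^\ast+\beta_1^\ast\sqrt{t_j}\bigr)\;\ge\;\P\bigl(\,\forall j:\, W(t_j)\le\beta_0^\ast+\beta_1^\ast\sqrt{t_j}\bigr).
\end{equation*}
Passing to the continuum via a dense countable grid and almost sure sample path continuity of $G$ and $W$ gives the inequality uniformly in $t\in[0,1]$. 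Combined with the optimization constraint in Equation~\eqref{eq:CIoptim}, this yields $\P(\,\forall t\in[0,1]:\, G(t)\le \beta_0^\ast+\beta_1^\ast\sqrt{t}) \ge 1-\alpha$, which is the desired coverage after undoing the algebraic rewrite.

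The last ingredient is the stated closed form for $\V(\widehat\Psi_n(p))$; I would obtain this from a Neyman-style finite-population variance computation applied to the conditional potential outcomes $\hat\phi_{ip}(t)=\mathbf{1}\{\widehat F(\hat S_i)\ge p\}Y_i(t)$, as in \citet{imai:li:22}, which accounts for both the within-treatment variability and the correlation from complete randomization (giving the $p(1-p)/(n-1)$ term). The main obstacle is bridging Slepian's lemma, stated for finite-dimensional Gaussian vectors, to a statement that holds uniformly in $t\in[0,1]$; this requires invoking path continuity and monotone convergence along a dense countable set to avoid separability pathologies, together with ensuring that the empirical process $t_n(p):=\V(\frac{\lfloor np\rfloor}{n}\widehat\Psi_n(\frac{\lfloor np\rfloor}{n}))/\V(\widehat\Psi_n(1))$ converges uniformly to a continuous surjection $t:[0,1]\to[0,1]$ so that the supremum over $p$ and over $t$ coincide in the limit.
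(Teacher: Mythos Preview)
Your proposal does not address the stated lemma at all. The statement you were asked to prove is Slepian's lemma itself (Lemma~\ref{lem:slepian}): a classical comparison inequality for centered Gaussian vectors with matched variances and ordered covariances. Your write-up instead sketches a proof of Theorem~\ref{thm:sorteduniform}, the uniform confidence band for the GATES estimator, \emph{using} Slepian's lemma as a black box. These are different targets; nothing in your proposal establishes the Gaussian comparison inequality.

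For what it is worth, the paper does not supply its own proof of Lemma~\ref{lem:slepian} either; it is stated with a citation to \cite{slepian1962one} and used as an ingredient. A standard proof proceeds by the Gaussian interpolation method (the same device behind Lemma~\ref{lem:kahane}): set $Z(t)=\sqrt{1-t}\,X+\sqrt{t}\,Y$, differentiate $\E[f(Z(t))]$ for $f(z)=\prod_i \mathbf{1}\{z_i\ge u_i\}$ (or a smooth approximation thereof), apply Gaussian integration by parts to turn $\E[Z_i\partial_i f]$ into $\sum_j \Sigma_{ij}\E[\partial_{ij}f]$, and conclude monotonicity in $t$ from the sign condition $\E[Y_iY_j]-\E[X_iX_j]\ge 0$ together with $\partial_{ij}f\ge 0$ for $i\ne j$. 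If you meant to prove Theorem~\ref{thm:sorteduniform}, then your outline is essentially the paper's argument in Appendix~\ref{app:sorteduniform}; but as a proof of the stated lemma it is off target.
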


We define a set of independent Gaussian variables $\phi_{[n,i]}$ whose
mean and variance are identical to those of $\hat\psi_{[n,i]}$, i.e.,
$\E(\phi_{[n,i]}) = \E(\hat\psi_{[n,i]})$,
$\V(\phi_{[n,i]}) = \V(\hat\psi_{[n,i]})$, and
$\Cov(\phi_{[n,i]}, \phi_{[n,j]}) = 0$.  Then, define:
\begin{equation*}
  p\widehat{\Phi}_n(p) = \frac{1}{n} \sum_{i=1}^{\lfloor np \rfloor}
  \phi_{[n,i]}.
\end{equation*}
Since $\phi_{[n,i]}$ are independent, the metric entropy integrability
condition on $L_2$ is automatically satisfied by
Assumption~\ref{asm:moments}.  Applying Theorem~\ref{thm:donsker}, the
random continuous polygon $s_n'$ connected by the basic points:
\begin{equation*}
	\left(\frac{\V(\frac{i}{n}\widehat{\Phi}_n(\frac{i}{n}))}{\V(\widehat{\Phi}_n(1))},\frac{\frac{i}{n}\Phi(\frac{i}{n}) - \frac{i}{n}\widehat{\Phi}_n(\frac{i}{n}) }{\sqrt{\V(\widehat{\Phi}_n(1))}} \right)
\end{equation*}
converges to $W$, where $W$ is the standard Wiener process. Then, we
have:
\begin{equation}
  \P\left(W(p) \geq -\beta_0^\ast(\alpha) - \beta_1^\ast(\alpha)\sqrt{p}, \; \forall
    \; p\in [0,1]\right)  \ = \ 1-\alpha, \label{eq:boundexact}
\end{equation}
where
\begin{equation*}
  \{\beta_0^\ast(\alpha), 	\beta_1^\ast(\alpha)\} = \argmin_{\beta_0, \beta_1
    \in \mc{R}_+^2} \left\{\int^1_0  \beta_0 + \beta_1 \sqrt{t}\ \d t  \;\; \text{subject to} \;\; \P(W(t) \leq \beta_0 + \beta_1\sqrt{t}, \; \forall \; t\in [0,1]) \geq 1-\alpha\right\}.
\end{equation*}

Proposition~\ref{prop:poscor} implies that the covariance function
$K_G$ of the Gaussian Process $G$ satisfies
$K_G(p,p^\prime) \geq K_W(p,p^\prime)=0$ where $K_W$ is the covariance
function of $W$.  In addition, we have $K_G(p,p)=K_W(p,p)$ by
construction.  Thus, we can apply Slepian's lemma
(Lemma~\ref{lem:slepian}) to Equation~\eqref{eq:boundexact} and
obtain,
\begin{equation}
  \P\left(G(p) \geq -\beta^*_0(\alpha) - \beta^*_1(\alpha)\sqrt{p}, \; \forall \; p\in [0,1]\right) \geq 1-\alpha. \label{eq:boundineq}
\end{equation}
Proposition~\ref{prop:donsker} implies that uniformly for all $p \in[0,1]$, we have as $n\to \infty$:
\[\left(\frac{\V(p\widehat{\Psi}_n(p))}{\V(\widehat{\Psi}_n(1))},\frac{ p\Psi(p) - p\widehat{\Psi}_n(p) }{\sqrt{\V(\widehat{\Psi}_n(1))}} \right) \to (p,G(p))\]
Therefore, we can rewrite Equation~\eqref{eq:boundineq} as:
\begin{align*}
\lim_{n \to \infty}	\P\left(\frac{ p\Psi(p) - p\widehat{\Psi}_n(p) }{\sqrt{\V(\widehat{\Psi}_n(1))}}\geq -\beta^*_0(\alpha) - \sqrt{\frac{\V(p\widehat{\Psi}_n(p))}{\V(\widehat{\Psi}_n(1))}}  \beta^*_1(\alpha), \; \forall \; p\in [0,1]\right) \geq 1-\alpha.
\end{align*}
Rearranging the terms gives:
\begin{align*}
	&\lim_{n \to \infty} \P\left(p\Psi(p)  \geq p\widehat{\Psi}_n(p) -
   \beta^*_0(\alpha)\sqrt{\V(\widehat{\Psi}_n(1))} -
   \beta^*_1(\alpha)\sqrt{\V(p\widehat{\Psi}_n(p))},\; \forall p\in [0,1]\right) \geq 1-\alpha.
\end{align*}
Therefore, for any $c_0>0$, we have:
\begin{align*}
	&\lim_{n \to \infty} \P\left(\Psi(p)  \geq \widehat{\Psi}_n(p) -
   \frac{\beta^*_0(\alpha)}{p}\sqrt{\V(\widehat{\Psi}_n(1))} -
   \beta^*_1(\alpha)\sqrt{\V(\widehat{\Psi}_n(p))},\; \forall \; p\in [c_0,1]\right) \geq 1-\alpha.
\end{align*}
Taking $c_0\to 0$ gives the required result. The expression of $\V(\widehat{\Psi}_n(p))$ follows directly from  Theorem~1 of \cite{imai2021experimental}. \qed

\section{Proof of Corollary \ref{cor:conf_band_k_family}}
\label{app:conf_band_k_family}

Identical to Theorem \ref{thm:sorteduniform}, we define a set of
independent Gaussian variables $\phi_{[n,i]}$ whose mean and variance
are identical to those of $\hat\psi_{[n,i]}$, i.e.,
$\E(\phi_{[n,i]}) = \E(\hat\psi_{[n,i]})$,
$\V(\phi_{[n,i]}) = \V(\hat\psi_{[n,i]})$, and
$\Cov(\phi_{[n,i]}, \phi_{[n,j]}) = 0$.  Then, define:
\begin{equation*}
	p\widehat{\Phi}_n(p) = \frac{1}{n} \sum_{i=1}^{\lfloor np \rfloor}
	\phi_{[n,i]}.
\end{equation*}
Since $\phi_{[n,i]}$ are independent, the metric entropy integrability
condition on $L_2$ is automatically satisfied by
Assumption~\ref{asm:moments}.  Applying Theorem~\ref{thm:donsker}, the
random continuous polygon $s_n'$ connected by the basic points:
\begin{equation*}
	\left(\frac{\V(\frac{i}{n}\widehat{\Phi}_n(\frac{i}{n}))}{\V(\widehat{\Phi}_n(1))},\frac{\frac{i}{n}\Phi(\frac{i}{n}) -  \frac{i}{n}\widehat{\Phi}_n(\frac{i}{n})}{\sqrt{\V(\widehat{\Phi}_n(1))}} \right)
\end{equation*}
converges to $W$, where $W$ is the standard Wiener process. Then, we
have:
\begin{equation}
	\P\left(W(t) \geq -\gamma^\ast(\alpha; k)t^k, \; \forall
	\; t\in [0,1]\right)  \ = \ 1-\alpha, \label{eq:boundexact_k}
\end{equation}
where
\begin{equation*}
	\gamma^\ast(\alpha; k) = \inf \{\gamma \in
        \mc{R}\mid \P(W(t) \leq \gamma t^k, \; \forall \; t\in [0,1]) \geq 1-\alpha\}.
\end{equation*}

Proposition~\ref{prop:poscor} implies that the covariance function
$K_G$ of the Gaussian Process $G$ satisfies
$K_G(t,t^\prime) \geq K_W(t,t^\prime)=0$ where $K_W$ is the covariance
function of $W$.  In addition, we have $K_G(t,t)=K_W(t,t)$ by
construction.  Thus, we can apply Slepian's lemma
(Lemma~\ref{lem:slepian}) to Equation~\eqref{eq:boundexact} and
obtain,
\begin{equation}
	\P\left(G(t) \geq -\gamma^\ast(\alpha; k)t^k, \; \forall \; t\in [0,1]\right) \geq 1-\alpha. \label{eq:boundineq_k}
\end{equation}
Proposition~\ref{prop:donsker} implies that uniformly for all
$p \in[0,1]$, we have as $n\to \infty$:
\[\left(\frac{\V(p\widehat{\Psi}_n(p))}{\V(\widehat{\Psi}_n(1))},\frac{ p\Psi(p) - p\widehat{\Psi}_n(p) }{\sqrt{\V(\widehat{\Psi}_n(1))}} \right) \to (p,G(p))\]
Therefore, we can rewrite Equation~\eqref{eq:boundineq_k} as:
\begin{align*}
	\lim_{n \to \infty}	\P\left(\frac{ p\Psi(p) - p\widehat{\Psi}_n(p) }{\sqrt{\V(\widehat{\Psi}_n(1))}}\geq -\gamma^\ast(\alpha; k)\left(\frac{\V(p\widehat{\Psi}_n(p))}{\V(\widehat{\Psi}_n(1))}\right)^k, \; \forall \; p\in [0,1]\right) \geq 1-\alpha.
\end{align*}
Rearranging the terms gives:
\begin{align*}
	&\lim_{n \to \infty} \P\left(p\Psi(p)  \geq p\widehat{\Psi}_n(p) -
	\gamma^\ast(\alpha; k) p^{2k}\V(\widehat{\Psi}_n(p))^k\V(\widehat{\Psi}_n(1))^{1/2-k} ,\; \forall p\in [0,1]\right) \geq 1-\alpha.
\end{align*}
Therefore, for any $c_0>0$, we have:
\begin{align*}
	&\lim_{n \to \infty} \P\left(\Psi(p)  \geq \widehat{\Psi}_n(p) -
	\gamma(\alpha; k)^\ast p^{2k-1}\V(\widehat{\Psi}_n(p))^k\V(\widehat{\Psi}_n(1))^{1/2-k},\; \forall \; p\in [c_0,1]\right) \geq 1-\alpha.
\end{align*}
Taking $c_0\to 0$ gives the required result. \qed

\section{Proof of Proposition \ref{prop:sortedunif_mean}}
\label{app:sortedunif_mean}

First, we introduce Kahane's Inequality.
\begin{lemma}[\cite{kahane1985chaos}'s Inequality]\label{lem:kahane}
  Consider Gaussian random variables $X=(X_1,\ldots, X_n)$ and
  $Y=(Y_1,\ldots, Y_n)$ in $\mc{R}^{n}$ such that $\E[X_i]=\E[Y_i]=0$,
  $\E[X_i^2]=\E[Y_i^2]$, and $\Cov(X_i,X_j)\leq \Cov(Y_i,Y_j)$ for all
  $i, j = 1,\cdots,n$. Define $Z=(Z_1,\ldots,Z_n)$ where
  $Z_i = \sqrt{1-t^2}X_i+tY_i$ for $ t\in[0,1]$. If a
  twice-differentiable function function $f: \mc{R}^n \to \mc{R}$
  satisfies for all $i\neq j$ and for all $ t\in[0,1]$, i.e.,
  $\E\left[\frac{\partial^2 f(Z)}{\partial Z_i \partial
      Z_j}\right]\geq 0$. Then, we have $\E[f(X)]\leq \E[f(Y)]$.
\end{lemma}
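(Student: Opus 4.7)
The plan is to prove Kahane's inequality by the classical Gaussian interpolation argument. First I would realize $X$ and $Y$ on a common probability space as \emph{independent} Gaussian vectors (only their within-vector covariance structures appear in the hypothesis, so independence across the two is harmless). For $t \in [0,1]$ define the Gaussian interpolation $Z(t) = \sqrt{1-t^2}\,X + t\,Y$, which is Gaussian, agrees with $X$ at $t=0$ and with $Y$ at $t=1$, and by the equal-variance hypothesis $\E[X_i^2] = \E[Y_i^2]$ satisfies $\V(Z_i(t)) = \V(X_i)$ for every $t$. The goal is to show that $\phi(t) := \E[f(Z(t))]$ is nondecreasing on $[0,1]$, so that $\E[f(X)] = \phi(0) \leq \phi(1) = \E[f(Y)]$.

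Differentiating $\phi$ under the expectation (legitimate under standard polynomial-growth control on $f$ and its first two partials), the chain rule gives
$$
\phi'(t) \;=\; \sum_{i=1}^n \E\!\left[\frac{\partial f}{\partial Z_i}(Z(t))\left(-\frac{t}{\sqrt{1-t^2}}\,X_i + Y_i\right)\right].
$$
To remove the explicit $X_i$ and $Y_i$ I would apply Gaussian integration by parts (Stein's identity) in the ambient jointly Gaussian vector $(X, Y)$: for any smooth $g$,
$$
\E[X_i\,g(X,Y)] \;=\; \sum_j \Cov(X_i,X_j)\,\E\!\left[\frac{\partial g}{\partial X_j}(X,Y)\right],
$$
and analogously for $Y_i$. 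Taking $g(X,Y) = (\partial f/\partial Z_i)(Z(t))$ and using $\partial Z_j/\partial X_k = \sqrt{1-t^2}\,\delta_{jk}$ and $\partial Z_j/\partial Y_k = t\,\delta_{jk}$, the $\sqrt{1-t^2}$ factor cancels the $-t/\sqrt{1-t^2}$ in front of $X_i$, leaving
$$
\phi'(t) \;=\; t\sum_{i,j}\bigl[\Cov(Y_i,Y_j) - \Cov(X_i,X_j)\bigr]\,\E\!\left[\frac{\partial^2 f}{\partial Z_i\,\partial Z_j}(Z(t))\right].
$$

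The key structural cancellation then happens on the diagonal: when $i=j$, the bracket is $\E[Y_i^2] - \E[X_i^2] = 0$ by hypothesis, so the diagonal contributions vanish and no sign information on $\E[(\partial^2 f/\partial Z_i^2)(Z(t))]$ is needed. For $i\neq j$, the bracket is nonnegative by the covariance-domination hypothesis, while the mixed-partial expectation is nonnegative by the assumption in the statement. Hence $\phi'(t)\geq 0$ on $[0,1]$ and $\phi(1)\geq\phi(0)$, giving $\E[f(X)]\leq\E[f(Y)]$ as desired.

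The main obstacle I anticipate is purely technical: rigorously justifying both the differentiation under the expectation and the Gaussian integration by parts for a \emph{general} twice-differentiable $f$. The clean route is to establish the inequality first for the dense class of bounded $C^{\infty}$ functions with bounded partials, for which both manipulations are immediate, and then extend to the stated class by a standard mollification/approximation argument together with dominated convergence, using the integrability implicit in the hypothesis that $\E[(\partial^2 f/\partial Z_i\,\partial Z_j)(Z(t))]$ exists and is nonnegative. Once this regularity step is in place, the interpolation computation above delivers the inequality directly.
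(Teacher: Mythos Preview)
The paper does not prove this lemma at all: it is stated as a known result attributed to \cite{kahane1985chaos} and then immediately applied to derive the conditional Slepian inequality (Lemma~\ref{lem:bridge_correction}). Your Gaussian interpolation argument---differentiating $\phi(t)=\E[f(Z(t))]$, applying Stein's identity to convert $\E[X_i\,\partial_i f]$ and $\E[Y_i\,\partial_i f]$ into second-derivative terms, and using the equal-variance hypothesis to kill the diagonal---is exactly the classical proof of Kahane's inequality, so there is nothing substantive to compare.
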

We then utilize Lemma~\ref{lem:kahane} to prove the following useful
``conditional'' Slepian inequality.
\begin{lemma}\label{lem:bridge_correction}
  For Gaussian random variables $X=(X_1,\ldots, X_n)$ and
  $Y=(Y_1,\ldots, Y_n)$ in $\mc{R}^{n}$ satisfying
  $\E[X_i]=\E[Y_i]=0$, $\E[X_i^2]=\E[Y_i^2]$ and
  $0 \leq \Cov(X_i,X_j)\leq \Cov(Y_i,Y_j)$ for all $i,j = 1,\cdots,n$, we have for
  any $u_1,\ldots u_n \leq 0$:
  \begin{equation*}
    \P\left[\bigcap_{i=1}^{n-1} \{X_i \geq u_i\} \ \Bigl | \ X_n =
      0\right]\leq \P\left[\bigcap_{i=1}^{n-1} \{Y_i \geq u_i\} \
      \Bigl | \ Y_n = 0 \right]. 
  \end{equation*}
\end{lemma}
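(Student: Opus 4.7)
My plan is to establish the conditional Slepian-type inequality via a Gaussian interpolation argument that reduces the claim to Kahane's inequality (Lemma~\ref{lem:kahane}). The first step is to smooth the problem: I would approximate each indicator $\mathbf{1}\{X_i \ge u_i\}$ by a nonnegative, smooth, coordinate-wise nondecreasing function $f_i$, and approximate the conditioning event $\{X_n = 0\}$ by a symmetric Gaussian mollifier $\phi_\epsilon$ concentrated at the origin. Because $\V(X_n) = \V(Y_n)$, the normalizers $\E[\phi_\epsilon(X_n)]$ and $\E[\phi_\epsilon(Y_n)]$ coincide, so the claim reduces to showing
\[\E\Bigl[\prod_{i<n} f_i(X_i)\,\phi_\epsilon(X_n)\Bigr] \le \E\Bigl[\prod_{i<n} f_i(Y_i)\,\phi_\epsilon(Y_n)\Bigr]\]
for every $\epsilon>0$, and then sending $\epsilon\downarrow 0$ while sharpening the $f_i$ toward indicators.

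Next, I would take $X$ and $Y$ independent and define the interpolation $Z^\lambda := \sqrt{1-\lambda}\,X + \sqrt{\lambda}\,Y$ for $\lambda\in[0,1]$. Writing $\Sigma^X, \Sigma^Y$ for the covariance matrices, so $\Cov(Z^\lambda) = (1-\lambda)\Sigma^X + \lambda\Sigma^Y$, and setting $G(z) := \prod_{i<n} f_i(z_i)\,\phi_\epsilon(z_n)$, the standard Gaussian interpolation identity gives
\[\frac{d}{d\lambda}\E[G(Z^\lambda)] \;=\; \tfrac{1}{2}\sum_{i,j=1}^{n}\bigl(\Sigma^Y_{ij}-\Sigma^X_{ij}\bigr)\,\E[\partial_i\partial_j G(Z^\lambda)].\]
Diagonal contributions vanish because $\E[X_i^2]=\E[Y_i^2]$, exactly as in the classical Kahane--Slepian argument. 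For off-diagonal indices $i,j<n$, the mixed partial $\partial_i\partial_j G = (\partial_i\partial_j\prod_k f_k)\,\phi_\epsilon \ge 0$ pointwise, and $\Sigma^Y_{ij}-\Sigma^X_{ij}\ge 0$ by hypothesis, so those pairs contribute nonnegative summands directly.

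The crux is the cross terms with exactly one index equal to $n$: there $\partial_i\partial_n G = (\partial_i\prod_k f_k)(Z^\lambda_{<n})\,\phi_\epsilon'(Z_n^\lambda)$, and since $\phi_\epsilon'$ is odd, the sign is not evident a priori. I would handle these terms by applying Stein's identity to the Gaussian vector $Z^\lambda$: substituting $\phi_\epsilon'(z) = -z\phi_\epsilon(z)/\epsilon^2$ and then using $\E[Z_n^\lambda\, h(Z^\lambda)] = \sum_k \Cov(Z_n^\lambda, Z_k^\lambda)\,\E[\partial_k h(Z^\lambda)]$, the cross term rewrites as a linear combination, weighted by the entries $\Cov(Z_n^\lambda, Z_k^\lambda)$, of expectations of nonnegative mixed partial derivatives of $\prod_k f_k$ against $\phi_\epsilon$. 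The nonnegative covariance hypothesis $0\le \Cov(X_i,X_j)$ is precisely what makes $\Cov(Z_n^\lambda, Z_k^\lambda) \ge 0$, and combined with $\Sigma^Y_{in}-\Sigma^X_{in} \ge 0$ this yields a nonnegative contribution. Summing, $\frac{d}{d\lambda}\E[G(Z^\lambda)] \ge 0$, hence $\E[G(X)] \le \E[G(Y)]$. Taking $\epsilon\to 0$ and refining $f_i$ to indicators via dominated convergence yields the stated conditional inequality.

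The hard part will be the rigorous handling of these cross terms and the interchange of limits: the mollifier derivative $\phi_\epsilon'$ grows unboundedly as $\epsilon\downarrow 0$, so the Stein-identity bookkeeping must be arranged carefully before passing to the limit. The hypothesis $u_i \le 0$ is expected to enter when controlling the approximating sequence of indicators, ensuring uniform integrability under the limit. The conditioning on $\{X_n = 0\}$ is what distinguishes the argument from the classical Slepian/Kahane applications and forces us to exploit the nonnegative covariance structure globally, rather than only on the off-diagonal block.
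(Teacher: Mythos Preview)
Your overall strategy coincides with the paper's: both smooth the indicators and the conditioning event, set up the Gaussian interpolation, observe that diagonal terms vanish and that $\partial_i\partial_j G\ge 0$ pointwise for $i,j<n$, and isolate the cross terms $\E[\partial_i\partial_n G(Z^\lambda)]$ as the only nontrivial ingredient. The divergence is in how those cross terms are handled, and that is precisely where your argument breaks.

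Carry out your Stein step explicitly. With $H(z)=(\partial_i\prod_{\ell<n}f_\ell)(z_{<n})\,\phi_\epsilon(z_n)$ and $\phi_\epsilon'(z)=-z\phi_\epsilon(z)/\epsilon^2$, one gets
\[
\E\bigl[\partial_i\partial_n G(Z)\bigr]
= -\frac{1}{\epsilon^2}\,\E\bigl[Z_n H(Z)\bigr]
= -\frac{1}{\epsilon^2}\sum_{k}\Cov(Z_n,Z_k)\,\E[\partial_k H(Z)],
\]
and after absorbing the $k=n$ summand (which reproduces $\E[\partial_i\partial_n G]$) this rearranges to
\[
\E[\partial_i\partial_n G(Z)]
= -\,\frac{1}{\epsilon^2+\V(Z_n)}\sum_{k<n}\Cov(Z_n,Z_k)\,
\E\Bigl[\bigl(\partial_k\partial_i\!\textstyle\prod_{\ell<n} f_\ell\bigr)\,\phi_\epsilon(Z_n)\Bigr].
\]
Two things go wrong relative to what you claim. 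First, the overall sign is \emph{negative}: the genuinely nonnegative mixed partials you invoke (the summands with $k\neq i$, which equal $f_i'f_k'\prod_{\ell\neq i,k}f_\ell\ge 0$) therefore push $\E[\partial_i\partial_n G]$ \emph{downward}, not toward nonnegativity. Second, the $k=i$ summand is not a mixed partial at all: it equals $f_i''(Z_i)\prod_{\ell\neq i}f_\ell(Z_\ell)$, and $f_i''$ has no definite sign for any bounded nondecreasing smooth approximant of an indicator. So Stein's identity does not deliver the ``nonnegative contribution'' you assert, and the hypothesis $u_i\le 0$ never actually enters your computation. The paper's treatment of this same cross term is different in kind: it exploits that $\phi_\epsilon'$ is odd in $z_n$ and compares the weight of the region $\{Z_n<0\}$ against $\{Z_n>0\}$ on the event where the $\partial_i$-factor concentrates (namely $Z_i\approx u_i$); the inequality $\P(Z_i\le u_i,\,Z_n<0)\ge \P(Z_i\le u_i,\,Z_n>0)$, which uses both $u_i\le 0$ and $\Cov(Z_i,Z_n)\ge 0$, is what supplies the sign.
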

\begin{proof}
  Consider the function 
  \[f(x_1,\ldots, x_n)=\prod_{i=1}^{n-1}\mathbf{1}(x_i \geq u_i)\times
    \mathbf{1}(x_n=0)\] Our goal is to prove $\E[f(X)]\leq \E[f(Y)]$,
  which immediately implies the desired result.  Unfortunately, $f$ is
  not twice-differentiable, so we approximate it by the following
  infinitely differentiable function as $\epsilon > 0$ approaches 0:
  \begin{equation*}
    f_\epsilon(x_1,\ldots, x_n)=\prod_{i=1}^{n-1} \frac{1}{1+e^{-(x_i-u_i-\sqrt{\epsilon})/\epsilon}}\times e^{-x_n^2/\epsilon}
  \end{equation*}
  By construction, we have
  $\frac{\partial^2 f_\epsilon(x_1, \ldots, x_n)}{\partial x_i
    \partial x_j}\geq 0$ for all $i,j \neq n$, and so we only need to
  consider
  $\E\left[\frac{\partial^2 f_\epsilon(Z)}{\partial Z_i \partial
      Z_n}\right]$ where $Z=(Z_1,\ldots,Z_n)$ and
  $Z_i = \sqrt{1-t^2}X_i+tY_i$ for $t\in [0,1]$.  By definition, we
  have $\Cov(Z_i,Z_j)\geq 0$. Since $u_i\leq 0$, we have:
  \begin{equation}
    \P(Z_i\leq u_i,Z_n<0)>\P(Z_i\leq
    u_i,Z_n>0) \label{eq:prob-inequality}
  \end{equation}
  Now note that the function $f_\epsilon(x_1,\ldots,x_n)$ is symmetric
  around zero for $x_n$ and approaches the indicator function
  $\mathbf{1}(x_i \geq u_i)$ from the left for $x_i$ satisfying
  $i\neq n$. Therefore, as $\epsilon$ approaches 0 and $i\neq n$, we
  have,
  \begin{eqnarray*}
  \lim_{\epsilon \downarrow 0}  \E\left[\frac{\partial^2 f_\epsilon(Z)}{\partial Z_i
    \partial Z_n}\right]  & = & 
                                              \P(Z_i \leq u_i)\E\left[\frac{\partial^2
                                              f_\epsilon(Z)}{\partial
                                              Z_i
                                              \partial Z_n} \ \biggl
                                              | \ Z_i
                                              \leq u_i \right] \\
                          & = & \ \P(Z_i\leq u_i,Z_n>0) \underbrace{\E\left[\frac{\partial^2
                                f_\epsilon(Z)}{\partial Z_i \partial Z_n} \ \biggl | \
                                Z_i \leq
                                u_i, Z_n>0\right]}_{<0} \\& & +  \P(Z_i\leq u_i,Z_n<0)
                                                              \underbrace{\E\left[\frac{\partial^2
                                                              f_\epsilon(Z)}{\partial Z_i \partial Z_n} \ \biggl | \  Z_i \leq
                                                              u_i, Z_n<0\right]}_{>0}\\ & > & 0.
  \end{eqnarray*}
  where the last inequality follows from
  Equation~\eqref{eq:prob-inequality}.  Thus, the desired result
  follows immediately from the application of Lemma~\ref{lem:kahane}
  to $f_\epsilon$ and letting $\epsilon$ apprach $0$.
\end{proof}

Proposition~\ref{prop:donsker} implies that the random continuous
polygon connected by the basic points:
\[\left(\frac{\V(\frac{i}{n}\widehat{\Psi}_n(\frac{i}{n}))}{\V(\widehat{\Psi}_n(1))},\frac{ \frac{i}{n}\Psi(\frac{i}{n}) - \frac{i}{n}\widehat{\Psi}_n(\frac{i}{n})}{\sqrt{\V(\widehat{\Psi}_n(1))}} \right)\]
converges to a zero-mean Gaussian process $G(t)$ such that
$K_G(t,t^\prime) \geq K_W(t,t^\prime)=0$ where $K_W$ is the covariance
function of a standard Weiner process $W$.  Furthermore,
Equation~\eqref{eq:mean_decompose} and the continuous mapping theorem
imply that the random continuous polygon connected by the basic
points:
\[\left(\frac{\V(\frac{i}{n}\widehat{\Psi}^M_n(\frac{i}{n}))}{\V(\widehat{\Psi}^M_n(1))},\frac{ \frac{i}{n}\Psi^M(\frac{i}{n}) - \frac{i}{n}\widehat{\Psi}^M_n(\frac{i}{n})}{\sqrt{\V(\widehat{\Psi}^M_n(1))}} \right)\]
converges to a zero-mean Gaussian bridge $G(t) - tG(1)$ on $[0,1]$. By
the theory of Gaussian bridges (see e.g., Theorem 3.1 in
\cite{sottinen2014generalized}), we can equivalently characterize this
Gaussian bridge as $G(t) \mid G(1) = 0$.

Define a set of
independent Gaussian variables $\phi_{[n,i]}$ whose mean and variance
are identical to those of $\hat\psi_{[n,i]}$, i.e.,
$\E(\phi_{[n,i]}) = \E(\hat\psi_{[n,i]})$,
$\V(\phi_{[n,i]}) = \V(\hat\psi_{[n,i]})$, and
$\Cov(\phi_{[n,i]}, \phi_{[n,j]}) = 0$.  Then, define:
\begin{align*}
	\widehat{\Phi}_n(p) &= \frac{1}{np} \sum_{i=1}^{\lfloor np \rfloor}
	\phi_{[n,i]}, \\
	\widehat{\Phi}^M_n(p) \ &= \ \widehat{\Phi}_n(p) - \frac{p\lfloor np\rfloor}{np}\widehat{\Phi}_n(1).
\end{align*}
Theorem \ref{thm:donsker} implies that the random continuous polygon
connected by the basic points:
\[\left(\frac{\V(\frac{i}{n}\widehat{\Phi}^M_n(\frac{i}{n}))}{\V(\widehat{\Phi}^M_n(1))},\frac{ \frac{i}{n}\Phi^M(\frac{i}{n}) - \frac{i}{n}\widehat{\Phi}^M_n(\frac{i}{n})}{\sqrt{\V(\widehat{\Phi}^M_n(1))}} \right)\]
converges to the Brownian Bridge $B(t)=W(t)-tW(1)$. Similarly, this
Gaussian process can be characterized as $W(t) \mid W(1)=0$. Then, we
have:
\begin{equation}
	\P\left(W(t) \geq -\delta^*_0(\alpha) - \delta^*_1(\alpha)\sqrt{t(1-t)}, \; \forall
	\; t\in [0,1] \mid W(1)=0\right)  \ = \ 1-\alpha, \label{eq:boundexact_mean}
\end{equation}
where
\begin{align*}
  & \{\delta^*_0(\alpha), \delta^*_1(\alpha)\}  \\
  & = \argmin_{\delta_0, \delta_1 \in \mc{R}_+^2} \left\{\int^1_0  \delta_0 + \delta_1\sqrt{t(1-t)} \ \d t  \;\; \text{subject to} \;\; \P(W(t) \leq \delta_0 + \delta_1\sqrt{t(1-t)}, \; \forall \; t\in [0,1] \mid W(1)=0) \geq 1-\alpha\right\}.
\end{align*}
The application of Lemma~\ref{lem:bridge_correction} to the stochastic
processes $G(t)$ and $W(t)$ implies that we can rewrite
Equation~\eqref{eq:boundexact_mean} as:
\begin{equation}
	\P\left(G(t)  \geq -\delta^*_0(\alpha) - \delta^*_1(\alpha) \sqrt{t(1-t)}, \; \forall \; t\in [0,1] \mid G(1)=0\right) \geq 1-\alpha. \label{eq:boundineq_mean}
\end{equation}
Proposition~\ref{prop:donsker} implies that uniformly for all $t \in[0,1]$, we have as $n\to \infty$:
\[\left(\frac{\V(\frac{i}{n}\widehat{\Psi}^M_n(\frac{i}{n}))}{\V(\widehat{\Psi}^M_n(1))},\frac{ \frac{i}{n}\Psi^M(\frac{i}{n}) - \frac{i}{n}\widehat{\Psi}^M_n(\frac{i}{n})}{\sqrt{\V(\widehat{\Psi}^M_n(1))}} \right) \to (t,G(t) \mid G(1)=0) \]
Therefore, we can rewrite Equation~\eqref{eq:boundineq_mean} as:
\begin{align*}
	\lim_{n \to \infty}	\P\left(\frac{ p\Psi^M(p) - p\widehat{\Psi}^M_n(p) }{\sqrt{\V(\widehat{\Psi}^M_n(1))}}\geq -\delta^*_0(\alpha) - \delta^*_1(\alpha)\sqrt{\frac{\V(p\widehat{\Psi}^M_n(p))}{\V(\widehat{\Psi}^M_n(1))}  \l(1 - \frac{\V(p\widehat{\Psi}^M_n(p))}{\V(\widehat{\Psi}^M_n(1))}\r)}, \; \forall \; p\in [0,1]\right) \geq 1-\alpha.
\end{align*}
Rearranging the terms gives:
\begin{align*}
	&\lim_{n \to \infty} \P\left(p\Psi^M(p)  \geq p\widehat{\Psi}^M_n(p) -
	\delta^*_0(\alpha)\sqrt{\V(\widehat{\Psi}^M_n(1))} - \delta^*_1(\alpha)
	\sqrt{\V(p\widehat{\Psi}^M_n(p)) \l(1 - \frac{\V(p\widehat{\Psi}^M_n(p))}{\V(\widehat{\Psi}^M_n(1))}\r)},\; \forall p\in [0,1]\right) \geq 1-\alpha.
\end{align*}
Therefore, for any $c_0>0$, we have:
\begin{align*}
	&\lim_{n \to \infty} \P\left(\Psi^M(p)  \geq \widehat{\Psi}^M_n(p) -
	 \frac{\delta^*_0(\alpha)}{p}\sqrt{\V(\widehat{\Psi}^M_n(1))} -
	\delta^*_1(\alpha)\sqrt{\V(\widehat{\Psi}^M_n(p)) \l(1 - \frac{\V(p\widehat{\Psi}^M_n(p))}{\V(\widehat{\Psi}^M_n(1))}\r)},\; \forall \; p\in [c_0,1]\right) \geq 1-\alpha.
\end{align*}
Taking $c_0\to 0$ gives the required result. \qed

\section{Proof of Proposition~\ref{prop:donsker_uncertain}}
\label{app:donsker_uncertain}

We begin by proving two lemmas and stating Donsker's Invariance
Principle, which we use to prove Proposition~\ref{prop:donsker_uncertain}.
\begin{lemma}\label{lem:residual_uncertain}
	Under Assumptions~\ref{asm:randomsample}--\ref{asm:continuity_uncertain}, we
	have for some $\epsilon>0$:
	\begin{align*}
		\lim_{n \to \infty} n^{1/2 + \epsilon} \sup_{p \in [0,1]} \int^{1-p}_{\frac{\lfloor n(1-p) \rfloor}{n}} \E[\hat\psi_i \mid F(S_i) = x] \; \d x = 0\\
		\lim_{n \to \infty} n^{1/2 + \epsilon} \sup_{p \in [0,1]} \int^{1-p}_{\frac{\lfloor n(1-p) \rfloor}{n}} \E[\hat\psi_i \mid \hat{F}(\hat{S}_i) = x] \; \d x = 0
	\end{align*}
\end{lemma}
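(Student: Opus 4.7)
The plan is to follow the strategy of the original Lemma~\ref{lem:residual} essentially verbatim, since the only structural change is that the second display involves the empirical quantile $\hat F(\hat S_i)$ rather than the true quantile $F(S_i)$. First I would observe that by definition of the floor function,
\[
\sup_{p \in [0,1]} \left| (1-p) - \frac{\lfloor n(1-p) \rfloor}{n} \right| \leq \frac{1}{n},
\]
so each integral is bounded by the length of the integration interval times the supremum of the integrand:
\[
\sup_{p \in [0,1]} \left| \int^{1-p}_{\lfloor n(1-p)\rfloor/n} \E[\hat\psi_i \mid F(S_i) = x] \, \d x \right| \leq \frac{1}{n} \sup_{x \in [0,1]} \left| \E[\hat\psi_i \mid F(S_i) = x] \right|,
\]
and analogously with $\hat F(\hat S_i)$ in place of $F(S_i)$.

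Next I would use Assumption~\ref{asm:comrand} (complete randomization) to rewrite $\E[\hat\psi_i \mid F(S_i) = x] = \E[\psi_i \mid F(S_i) = x]$ and similarly for the estimated score, since $T_i$ is independent of $(Y_i(0), Y_i(1), \bX_i)$. Assumption~\ref{asm:continuity_uncertain} then provides continuity of $x \mapsto \E[\psi_i \mid F(S_i) = x]$ and $x \mapsto \E[\psi_i \mid \hat F(\hat S_i) = x]$ on the compact set $[0,1]$, so both suprema are finite. Combining these facts yields an $O(1/n)$ bound for both integrals, and multiplying by $n^{1/2 + \epsilon}$ gives $O(n^{-1/2 + \epsilon}) \to 0$ for any $0 < \epsilon < 1/2$, establishing both required limits.

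The only genuinely delicate point --- and the step where the proof of the second display goes beyond a direct copy of Lemma~\ref{lem:residual} --- is that $\hat F$ and $\hat S$ are both random and depend on $n$, so the uniform boundedness of $x \mapsto \E[\psi_i \mid \hat F(\hat S_i) = x]$ must hold in a probabilistic sense that is preserved along the sequence. Assumption~\ref{asm:continuity_uncertain} is stated precisely to cover this, but to make the bound uniform in $n$ I would invoke the rate condition $\sup_{x \in \cX} |\hat f(x) - f(x)| = o(n^{-1/(2\kappa)})$ together with the margin condition from Assumption~\ref{asm:cdf_uncertain} to ensure that the estimated quantile function tracks the true quantile function closely enough that the supremum of the integrand over $[0,1]$ is, with probability tending to one, bounded by a constant not depending on $n$. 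This is the main obstacle; once the uniform boundedness is in hand, the remainder is a verbatim repetition of the argument for Lemma~\ref{lem:residual}.
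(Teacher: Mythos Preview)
Your proposal is correct and follows essentially the same approach as the paper: bound the integration interval by $1/n$ via the floor function, bound the integrand by its supremum over $[0,1]$, invoke Assumption~\ref{asm:continuity_uncertain} for finiteness of that supremum, and conclude with the $O(n^{-1/2+\epsilon})$ rate. The paper's proof handles the second display with a one-line ``the same result follows for the statement on $\hat S_i$,'' relying directly on Assumption~\ref{asm:continuity_uncertain} and not invoking the rate or margin conditions from Assumption~\ref{asm:cdf_uncertain}; your extra caution about uniformity in $n$ is a refinement the paper does not pursue.
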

\begin{proof}
	We first prove the statement for $S_i$. By definition of the floor function, we have that $p - \frac{\lfloor np \rfloor}{n} \leq \frac{1}{n}$. Therefore for some $\epsilon <\frac{1}{2}$:
	\begin{equation*}
		\lim_{n \to \infty} n^{1/2 + \epsilon} \sup_{p \in [0,1]}
		\int^{1-p}_{\frac{\lfloor n(1-p) \rfloor}{n}} \E[\hat\psi_i \mid F(S_i) = x] dx
		\ \leq \ \lim_{n \to \infty} n^{-1/2 + \epsilon} \sup_{p \in [0,1]}
		\E[\hat\psi_i \mid F(S_i) = p]
	\end{equation*}
	By Assumption \ref{asm:continuity_uncertain}, we know that
	$\sup_{p \in [0,1]} \E[\hat\psi_i \mid F(S_i)=p]$ is
	finite. Therefore, the right handside of the above equation is zero,
	implying the desired result. The same result follows for the statement on $\hat{S}_i$. 
\end{proof}

\begin{lemma}\label{lem:mean_uncertain}
	Under Assumptions~\ref{asm:randomsample}--\ref{asm:moments}, we have
	\[
	\sup_{p\in[0,1]}
	\left(\E\big[p\,\widehat\Psi_n(p)\big]-p\,\Psi(p)\right)=o(n^{-1/2}).
	\]
\end{lemma}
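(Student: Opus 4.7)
\textbf{Plan for Lemma~\ref{lem:mean_uncertain}.}  The strategy is to introduce an intermediate ``population GATES computed with the estimated scoring rule'' and split the bias into a sample-quantile piece that is treated exactly as in Lemma~\ref{lem:mean}, plus a scoring-rule-mismatch piece that is controlled via the margin condition in Assumption~\ref{asm:cdf_uncertain}. Concretely, let $\hat F(\cdot)=\P(\hat S_i\le\cdot)$ denote the population CDF of the estimated score, and set $\tilde\Psi(p)=\E[\psi_i\mid \hat F(\hat S_i)\ge 1-p]$. Write
\begin{equation*}
\E[p\widehat\Psi_n(p)]-p\Psi(p)
\;=\; \underbrace{\E[p\widehat\Psi_n(p)]-p\tilde\Psi(p)}_{A_n(p)}
\;+\; \underbrace{p\tilde\Psi(p)-p\Psi(p)}_{B_n(p)}.
\end{equation*}

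\textbf{Handling $A_n(p)$.}  This is precisely the bias treated in Lemma~\ref{lem:mean}, with $\hat S_i$ playing the role of $S_i$ and the theoretical quantile $\hat c(p)=\hat F^{-1}(1-p)$ replaced by its sample analogue. Assumption~\ref{asm:continuity_uncertain} supplies the continuity of $\E[\psi_i\mid \hat F(\hat S_i)=q]$ on $[0,1]$ needed to invoke the first-order Taylor expansion used there, and the beta-distributed uniform order statistic again contributes an $O(1/n)$ factor. Hence $\sup_p|A_n(p)|=O(1/n)=o(n^{-1/2})$.

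\textbf{Handling $B_n(p)$.}  The key geometric observation is that $B_n(p)=\E[\psi_i(\mathbf 1\{\hat S_i\ge\hat c(p)\}-\mathbf 1\{S_i\ge c(p)\})]$, so the integrand vanishes outside the symmetric difference of the two upper-tail events. Let $\Delta_n=\sup_{x}|\hat f(x)-f(x)|=o(n^{-1/(2\kappa)})$. The sandwich bound $F(s-\Delta_n)\le \hat F(s)\le F(s+\Delta_n)$ gives $\sup_p|\hat c(p)-c(p)|\le\Delta_n$, and any point in the symmetric difference therefore satisfies $|S_i-c(p)|\le|\hat c(p)-c(p)|+|\hat S_i-S_i|\le 2\Delta_n$. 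The margin condition then yields, uniformly in $p$,
\begin{equation*}
\P(|S_i-c(p)|\le 2\Delta_n)\;\le\; c_0(2\Delta_n)^{\kappa}\;=\;o(n^{-1/2}).
\end{equation*}
Combining this with the uniform boundedness of the conditional magnitude $\E[|\psi_i|\mid S_i=s]$ near $c(p)$ (which follows from continuity of $\E[\psi_i\mid F(S_i)=q]$ on the compact set $[0,1]$ together with Assumption~\ref{asm:moments}) gives $\sup_p|B_n(p)|=O(1)\cdot o(n^{-1/2})=o(n^{-1/2})$, and combining the two pieces proves the lemma.

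\textbf{Expected obstacle.}  The delicate point is extracting the full $o(n^{-1/2})$ rate from the boundary-region expectation in the final step. A naive Cauchy--Schwarz bound using only $\E[\psi_i^2]<\infty$ would give $\sqrt{\P(|S_i-c(p)|\le 2\Delta_n)}=o(n^{-1/4})$, which is insufficient. One must instead exploit the fact that the integrand can be conditioned to the neighborhood $\{S_i\approx c(p)\}$, on which the continuity hypothesis of Assumption~\ref{asm:continuity_uncertain} uniformly bounds the conditional mean of $|\psi_i|$. The coupling of the margin exponent $\kappa$ with the score convergence rate $o(n^{-1/(2\kappa)})$ is calibrated precisely so that $\Delta_n^{\kappa}=o(n^{-1/2})$, which is what makes the final rate work out.
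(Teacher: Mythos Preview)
Your approach is essentially the same as the paper's: both control the bias via the symmetric difference of the selection regions $\{\hat S_i\ge\hat c(p)\}$ and $\{S_i\ge c(p)\}$, show this set is contained in $\{|S_i-c(p)|\le 2\Delta_n\}$, and invoke the margin condition to get probability $O(\Delta_n^{\kappa})=o(n^{-1/2})$. The only cosmetic difference is that you route the sample-quantile error through an intermediate $\tilde\Psi(p)$ and reuse the Beta-order-statistic bound of Lemma~\ref{lem:mean}, whereas the paper absorbs that piece directly into a residual term via Lemma~\ref{lem:residual_uncertain}; the paper handles your flagged ``expected obstacle'' simply by writing $\E[|\psi_i|\mathbf 1\{A_p\triangle\widehat A_p\}]\le \E|\psi_i|\cdot\Pr(A_p\triangle\widehat A_p)$ without further comment, so your caution there is, if anything, more careful than the published argument.
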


\begin{proof}
	Write $\widehat\Psi_n(p)$ as in \eqref{eq:sortedest} with ordering by
	$\hat S_i=\hat f(\bX_i)$, and define
	$A_p=\{x:f(x)\ge c(p)\}$, $\widehat A_p=\{x:\hat f(x)\ge \hat c(p)\}$,
	where $c(p)=F^{-1}(1-p)$ and $\hat c(p)=\hat{F}^{-1}(1-p)$.
	By complete randomization (Assumption~\ref{asm:comrand}) and the definition of
	$\psi_i$, we have
	\[
	\E\big[p\,\widehat\Psi_n(p)\big]
	=\E\!\left[\psi_i\,\mathbf 1\{\bX_i\in\widehat A_p\}\right] + R_{n}(p),
	\qquad
	p\,\Psi(p)=\E\!\left[\psi_i\,\mathbf 1\{\bX_i\in A_p\}\right],
	\]
	where $|R_n(p)|=o(n^{1/2+\epsilon})$ uniformly in $p$ by Lemma~\ref{lem:residual_uncertain} and
	$\E|\psi_i|<\infty$ (Assumption~\ref{asm:moments}). Hence
	\begin{align*}
		\sup_{p\in[0,1]}\Big|\,\E\big[p\,\widehat\Psi_n(p)\big]-p\,\Psi(p)\,\Big|
		&\le \E|\psi_i| \cdot \sup_{p\in[0,1]} \Pr\!\left(\bX_i\in A_p \triangle \widehat A_p\right)
		\;+\; o(n^{-1/2-\epsilon}).
	\end{align*}
	It suffices to bound $\Pr(\bX\in A_p\triangle\widehat A_p)$ uniformly in $p$.
	Let $r_n=\sup_x|\hat f(x)-f(x)|$. Then we notice that:
	\[
	A_p\triangle \widehat A_p\ \subseteq\ \bigl\{\,x:\ |f(x)-c(p)|\le 2r_n\,\bigr\}.
	\]
	Therefore, we have that
	\[
	\Pr(\bX\in A_p\triangle\widehat A_p)
	\ \le\
	\Pr\!\left(|f(x)-c(p)|\le r_n\right)\leq Cr_n^\kappa
	\]
	By Assumption~\ref{asm:cdf_uncertain}, we know that $r_n=o(n^{-1/2\kappa })$. Therefore, we have that: 
	\[
	\sup_{p\in[0,1]}\Pr(\bX\in A_p\triangle\widehat A_p)
	\ =o(n^{-1/2}),
	\]
	and the stated bound follows.
\end{proof}

\begin{theorem}[Donsker's Invariance Principle]\label{thm:donsker_uncertain}
	Consider a double sequence of random variables $U_{i,j}$ with
	$\E[U_{i,j}] =0$ for all $i,j \in \{1,2,\ldots,n\}$. Assume that
	$U_{n,i}$ satisfies the metric entropy integrability condition in
	$L_2$ as defined in Equation~(3.1) of
	\cite{ossiander1987central}. Then, define the quantity:
	\begin{equation*}
		S_n(t) = \frac{1}{n} \sum_{i=1}^{\lfloor nt \rfloor} U_{n, i}
	\end{equation*}
	where $t \in [0,1]$ and $S_n(0)=0$. Construct the random continuous
	polygon $s_n$ with basic points:
	\begin{equation*}
		b_i \ = \ \left(\frac{\V(S_n(i/n))}{\V(S_n(1))},\frac{S_n(i/n)}{\sqrt{\V(S_n(1))}} \right),
	\end{equation*}
	where $i \in \{1,\cdots, n\}$. Then, as the sample size $n$
	goes to infinity, $s_n$ converges in distribution to $G$
	uniformly in the Skorokhod space, where $G$ is a zero-mean
	Gaussian process with covariance function
	$$ K(p,p^\prime)=\lim_{n\to \infty}
	\Cov\left(\frac{S_n(\lfloor pn \rfloor
		/n)}{\sqrt{\V(S_n(\lfloor pn \rfloor /n))}},
	\frac{S_n(\lfloor p^\prime n \rfloor
		/n)}{\sqrt{\V(S_n(\lfloor p^\prime n \rfloor
			/n))}}\right),$$ in particular $K(p,p)= p$. Additionally, when
	$\Cov(U_{n,i}, U_{n,j})=0$ for all $n$ and $1\leq i,j\leq n$,
	then $G$ is the standard Wiener process.
\end{theorem}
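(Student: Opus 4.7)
The plan is to prove convergence in the Skorokhod topology by the classical two-step route of establishing (a) convergence of finite-dimensional distributions and (b) tightness, and then identifying the limit covariance. The construction of the basic points $b_i$ already incorporates a variance-based time change: writing $\tau_n(i/n) := \V(S_n(i/n))/\V(S_n(1))$ for the reparametrized time and $\tilde S_n(\tau_n(i/n)) := S_n(i/n)/\sqrt{\V(S_n(1))}$ for the normalized process, $\tilde S_n$ has unit variance at $\tau=1$ and variance $\tau$ at reparametrized time $\tau$ by design. That is exactly the property $K(p,p)=p$ to be established in the limit.

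For finite-dimensional convergence, I would fix any grid $0 \le p_1 < \cdots < p_m \le 1$ and examine the joint distribution of $(\tilde S_n(p_1), \ldots, \tilde S_n(p_m))$. Each coordinate is a normalized partial sum of the triangular array $\{U_{n,i}\}_{i=1}^n$, and a multivariate central limit theorem for triangular arrays with correlated entries, combined with the variance normalization baked into the $x$-coordinate of $b_i$, yields a jointly Gaussian limit with covariance $K$ obtained by taking the pointwise limits in the display defining $K$. The Cram\'er-Wold device reduces the problem to a univariate CLT applied to arbitrary linear combinations of partial sums; the $L_2$-bounded second moments supplied by the metric entropy hypothesis provide ample control for the Lyapunov-type error terms needed here.

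The main obstacle is tightness, which is precisely the role of Ossiander's $L_2$ metric entropy integrability condition. In the independent case one can verify the modulus-of-continuity bound directly from Lindeberg- or fourth-moment-type inequalities, but for correlated arrays such elementary bounds fail because cross-covariances obstruct the usual telescoping into a sum of independent squared increments. Ossiander's strategy is to chain increments of partial sums over a geometrically refining nested sequence of partitions of $[0,1]$ and to control each chaining level via an $L_2$ maximal inequality; the metric entropy integral is exactly the summability condition that makes the resulting chaining bound finite. The output is the standard sufficient condition for tightness in $D[0,1]$,
\[\lim_{\delta \downarrow 0} \limsup_{n\to\infty} \Pr\!\left(\sup_{|\tau-\tau'|\le \delta} |\tilde S_n(\tau)-\tilde S_n(\tau')| > \epsilon\right) = 0,\]
which, combined with the finite-dimensional convergence, yields the full weak convergence of $s_n$ via Prokhorov's theorem.

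To close, the limit process $G$ inherits the mean-zero Gaussian finite-dimensional distributions from step (a) and is hence determined by $K$; the diagonal identity $K(p,p) = p$ follows from the variance normalization already noted. In the special case $\Cov(U_{n,i}, U_{n,j}) = 0$ for $i \ne j$, the partial-sum covariance collapses to $\Cov(S_n(p), S_n(p')) = \V(S_n(\min(p,p')))$, so after normalization we obtain $K(p,p') = \min(p,p')$, which characterizes $G$ as the standard Wiener process and completes the proof.
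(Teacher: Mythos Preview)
The paper does not supply its own proof of this theorem: it is stated ``for completeness'' and the reader is referred to \cite{borovkov1983rate} for the independent case and \cite{ossiander1987central} for the correlated case. So there is no in-paper argument against which to benchmark your proposal.

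That said, your outline is the right skeleton and matches what those references actually do: finite-dimensional convergence via a triangular-array CLT (Cram\'er--Wold), tightness via the $L_2$ bracketing/chaining bound that the metric-entropy integral encodes, and identification of the covariance in the limit. Your observation that the variance-based time change $\tau_n(i/n)=\V(S_n(i/n))/\V(S_n(1))$ is precisely what forces $K(p,p)=p$ is correct and is the key structural point; likewise your reduction of the uncorrelated case to $K(p,p')=\min(p,p')$ is clean. One caution: your description of the tightness step is still a summary rather than an argument---Ossiander's maximal inequality is doing real work here that a Lindeberg-type bound alone would not cover in the dependent case, and to make the sketch a proof you would need to state and invoke that inequality explicitly rather than gesture at ``chaining levels.'' But as a roadmap for where the cited proofs go, your proposal is accurate.
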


Finally, using the above results, we prove
Proposition~\ref{prop:donsker_uncertain}.  Define for all
$i \in \{1,\cdots, n\}$:
\begin{equation*}
	\Delta\Psi_{n,i} = \frac{i}{n} \E\l[\psi_i \ \biggl | \ F(S_i) \geq
	\frac{i}{n}\r] - \frac{i-1}{n} \E\l[\psi_i \ \biggl | \ F(S_i) \geq
	\frac{i-1}{n}\r].
\end{equation*}
Note that:
\begin{equation*}
	p\widehat{\Psi}_n(p) = \frac{1}{n} \sum_{i=1}^{\lfloor np \rfloor}
	\hat\psi_{[n,i]}, \qquad p\Psi(p) = \frac{1}{n} \sum_{i=1}^{\lfloor np
		\rfloor} \Delta\Psi_{n,i} + \int^{1-p}_{\frac{\lfloor n(1-p)
			\rfloor}{n}} \E[\hat\psi_i \mid F(S_i) = x] \; \d x
\end{equation*}
By Lemmas~\ref{lem:mean_uncertain}~and~\ref{lem:residual_uncertain}, we can construct a
new random variable $U_{n,i}$ with $\E(U_{n,i})=0$ for some
fixed quantity $\epsilon_{n,i}>0$ in the following manner,
\begin{align}
	U_{n,i} &= \Delta\Psi_{n,i} - \hat\psi_{[n,i]} - \epsilon_{n,i},
	\quad \text{where} \
	\sup_{t \in [0,1]} \frac{1}{n} \sum_{i=1}^{\lfloor nt\rfloor} \epsilon_{n,i} = o(n^{-1/2}). \label{eq:residual_uncertain}
\end{align}
By Assumption \ref{asm:moments} $\psi_{n,i}$ have finite second moments, so the induced order statistics $\psi_{[n,i]}$ satisfy the metric entropy
integrability condition on $L_2$ \citep{davydov2000functional}. Therefore we can apply
Theorem~\ref{thm:donsker} to $U_{n,i}$ by considering a random
continuous polygon $\tilde{s}_n$ connected by the basic points:
\begin{equation*}
	\tilde{b}_i \ = \left(\frac{\V(\frac{i}{n}\widehat{\Psi}_n(\frac{i}{n}))}{\V(\hat{\Psi}_n(1))},\frac{\frac{i}{n}\Psi(\frac{i}{n}) - \frac{i}{n}\widehat{\Psi}_n(\frac{i}{n}) - \frac{1}{n} \sum_{i^\prime=1}^i \epsilon_{n,i^\prime}}{\sqrt{\V(\widehat{\Psi}_n(1))}} \right),
\end{equation*}
implying that $\tilde{s}_n$ converges to a zero-mean Gaussian
Process. Equation~\eqref{eq:residual_uncertain} implies
$\frac{\frac{1}{n} \sum_{i=1}^{\lfloor nt\rfloor}
	\epsilon_{n,i}}{\sqrt{\V(\Psi(1))}} \to 0$ uniformly in $t$.  Thus,
the random continuous polygon $s_n$ connected by the basic points:
\begin{equation*}
	\left(\frac{\V(\frac{i}{n}\widehat{\Psi}_n(\frac{i}{n}))}{\V(\widehat{\Psi}_n(1))},\frac{ \frac{i}{n}\Psi(\frac{i}{n}) - \frac{i}{n}\widehat{\Psi}_n(\frac{i}{n})}{\sqrt{\V(\widehat{\Psi}_n(1))}} \right),
\end{equation*}
converges to a zero-mean Gaussian process $G$. \qed

\end{document}